%

\documentclass[ 10pt,journal, twocolumn, final]{IEEEtran}


\usepackage{refcheck}
\usepackage{graphicx}
\usepackage{amsmath}
\usepackage{amssymb}
\usepackage{graphics}
\usepackage{latexsym}
\usepackage[dvips]{epsfig}
\usepackage[nospace]{cite}
\usepackage{tabularx}
\usepackage{subfigure}
\usepackage{epstopdf}
\usepackage{mathdots}

\newtheorem{Lemma}{\bf Lemma}

\newtheorem{Proposition}[Lemma]{\bf Proposition}
\newtheorem{Theorem}{\bf Theorem}

\newtheorem{Remark}{Remark}

\def\Pr{{\rm \mathbf {Pr}}}
\def\E{{\rm \mathbf  E}}

\newcommand{\argmax}{\operatornamewithlimits{argmax}}
\newcommand{\argmin}{\operatornamewithlimits{argmin}}


\allowdisplaybreaks

\begin{document}

\title{Measuring Secrecy by \\
the Probability of a Successful Guess }

\author{
   \IEEEauthorblockN{Ibrahim Issa and Aaron B. Wagner}
    \thanks{The authors are with the School of Electrical and Computer Engineering, Cornell University, Ithaca, NY (email: ii47@cornell.edu, wagner@cornell.edu). This paper was presented in part at the 53rd Annual Allerton Conference on Communication, Control, and Computing (2015).   Copyright (c) 2014 IEEE. Personal use of this material is permitted.  However, permission to use this material for any other purposes must be obtained from the IEEE by sending a request to pubs-permissions@ieee.org.}
   }

\maketitle

\begin{abstract}
The secrecy of a communication system in which both the legitimate receiver and an eavesdropper are allowed some distortion is investigated. The secrecy metric considered is the exponent of the probability that the eavesdropper estimates the source sequence successfully within an acceptable distortion level. 
The problem is first studied when the transmitter and the legitimate receiver do not share any key and the transmitter is not subject to a rate constraint, which corresponds to a stylized model of a side channel and reveals connections to source coding with side information.  The setting is then generalized to include a shared secret key between the transmitter and the legitimate receiver and a rate constraint on the transmitter, which corresponds to the Shannon cipher system. A single-letter characterization of the highest achievable exponent is provided, and asymptotically-optimal strategies for both the primary user and the eavesdropper are demonstrated. 
%
\end{abstract} 

\section{Introduction}

To compromise the security of a communication network, an eavesdropper need not have direct access to the decrypted content of the transmitted packets. In fact, simply monitoring and analyzing the network flow may help an eavesdropper deduce sensitive information. For example, Song \emph{et al.}~\cite{SSHTiming} show that the Secure Shell (SSH) is vulnerable to what is called \emph{timing} attacks. In SSH, each keystroke is immediately sent to the remote machine, and an eavesdropper can thus observe the timing of the keystrokes. It is shown that this information can be used to significantly speed up the search for passwords, and it is estimated that each consecutive pair of keystrokes leaks around 1 bit of information. Zhang and Wang~\cite{PeepingTom} enhance the attack proposed in~\cite{SSHTiming}, and apply it in the setting of multi-user operating systems, in which a malicious user eavesdrops on other users' keystrokes. Timing-based attacks appear also in various other settings, including: compromising the anonymity of users in networks~\cite{ParvAnonymous,ParvAnonymityChaum}, information leakage in the context of shared schedulers~\cite{SharedSchedulers} and in the context of on-chip networks~\cite{TimingSuh}. 

In this paper, we consider a stylized model of such information leakage problems, and call it the \emph{information blurring system}. The setup, shown in Figure~\ref{figurenokey}, consists of the following. 
\begin{figure}[h]
\centering
\includegraphics[scale=0.8]{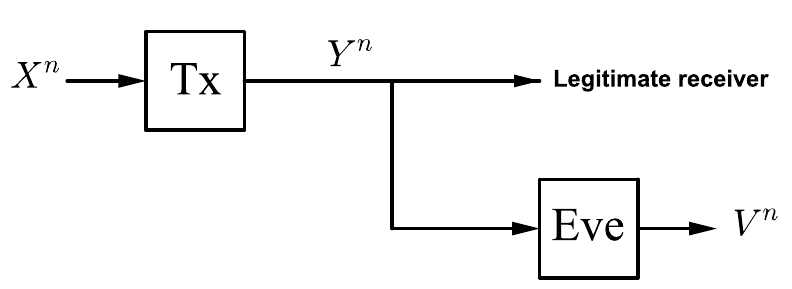}
\caption{Information blurring system: both the legitimate receiver and the eavesdropper are allowed a certain distortion level.} \label{figurenokey}
\end{figure}
A transmitter observes a sequence $X^n$, which corresponds roughly to the original timing vector, and maps it to a sequence $Y^n$ that is observed by both the legitimate receiver and an eavesdropper. The mapping must almost surely satisfy a distortion constraint, which corresponds to some quality constraints imposed by the network (e.g., delay constraints). We do not require the mapping to be causal as the intent of this work is to provide fundamental limits for a simplified version of the information leakage problem. 
In broad terms, the transmitter wants to \emph{blur} the information in $X^n$ (hence the name), so that it is no longer useful for the eavesdropper. For example, one approach is to artificially add noise to the input sequence. In that sense, the problem is related to methods for ensuring \emph{differential privacy}, in which a curator wants to publicly release statistical information about a given population without compromising the privacy of its individuals~\cite{DiffPrivacyMech,DiffPrivacySurvey}. 

Upon observing the output $Y^n$, the eavesdropper, who knows the source statistics and the transmitter's encoding function, tries to estimate $X^n$. We introduce a distortion function and consider the eavesdropper's estimate to be successful if the distortion it incurs is below a given level. Hence, we measure the \emph{secrecy} guaranteed by a given scheme via the probability that the eavesdropper makes a successful guess. The primary user (i.e., the transmitter legitimate-receiver pair) aims then to minimize that probability. Since computing the exact probability is quite difficult, this paper will be mainly concerned with asymptotic analysis: we will derive the rate of decay (i.e., the exponent) of the probability of a successful guess.  Other metrics for quantifying secrecy exist in the literature; we discuss the motivations and the shortcomings of the commonly used ones in Section~\ref{SecrecyMetric}.  
 
For a discrete memoryless source (DMS), we provide a single-letter characterization of the optimal exponent (cf. Theorem~\ref{MainThm}). We show that the problem is related to source coding with side information. Essentially, the eavesdropper first attempts to guess the joint type of $X^n$ and $Y^n$. S/he, then, ``pretends'' that $Y^n$ is received through a memoryless channel the probability law of which is the conditional probability $P(Y|X)$ induced by the joint type. The problem can be viewed at this point as compression with side information, so the eavesdropper picks a codeword from an optimal rate-distortion code. The primary user's objective, therefore, is to supply the ``worst'' side information. Moreover, we demonstrate asymptotically-optimal universal schemes for both the primary user and the eavesdropper. The schemes are universal in the sense that they do not depend on the source statistics. In particular, the transmitter operates on a type-by-type basis, and associates with each type a rate-distortion code, the construction of which is based on the conditional probability law that provides the ``worst'' side information given that type.

Next, we extend the study to the setup in which the transmitter is subject to a rate constraint, and the transmitter and the legitimate receiver have access to a common source of randomness, called the \emph{key}. The eavesdropper has full knowledge of the encryption system, except for the realization of the key and the realization of $X^n$. The setup is shown in Figure~\ref{figureShannonCipher}, and the special case in which the legitimate receiver and the eavesdropper must reconstruct the source exactly is known as the Shannon cipher system~\cite{ShannonSecrecy}.
\begin{figure}[htp]
\centering
\includegraphics[scale=0.8]{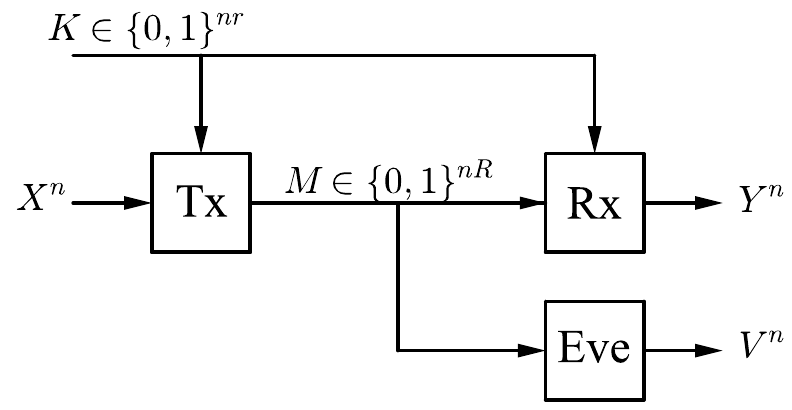}
\caption{The Shannon cipher system with lossy communication: the transmitter and the legitimate receiver have access to a common key $K$, which consists of $nr$ purely random bits, where $r$ is called the key rate. The transmitter encodes $X^n$ using $K$, and sends a message $M$ through a  noiseless public channel of rate $R$. Both the legitimate receiver and the eavesdropper are allowed a certain level of distortion. The legitimate receiver generates the reconstruction $Y^n$ based on $M$ and $K$, whereas the eavesdropper has access to $M$ only to produce an estimate $V^n$.}\label{figureShannonCipher}
\end{figure}
Since the transmitter is subject to a rate constraint, we allow the primary user to violate the distortion constraint, but restrict the probability of such event to be exponentially decaying. We again derive a single-letter characterization of the optimal exponent (cf. Theorem~\ref{Thmwithkey}), and demonstrate asymptotically-optimal strategies for both the primary user and the eavesdropper.
In particular, similarly to the previous setting, the transmitter operates on a type-by-type basis and associates with each type a rate-distortion code, the construction of which is based on the conditional probability law that provides the worst side information and satisfies the rate constraint (however, types with low enough probability are discarded, by associating a dummy messsage to all the source sequences belonging to such types). To make use of the shared key, we (randomly) generate many instances of such codes, and use the secret key to randomize the choice of the code selected for encoding $X^n$.  We also investigate conditions under which the resulting codes are  optimal rate-distortion codes.
As for the eavesdropper, we show that one of the following two schemes is optimal. The first consists of generating a blind guess, i.e., completely ignoring the public message. The second consists of guessing the value of the key to reproduce the reconstruction at the legitimate receiver, and then applying the strategy developed in the first part of the paper.

We note that Theorem~\ref{Thmwithkey} subsumes Theorem~\ref{MainThm} by setting the key rate to be zero, and the channel rate to be high enough. We nevertheless present them separately for two reasons. We believe the information blurring system to be of independent interest, as it corresponds to problems different from the Shannon cipher system (e.g., the SSH timing attack). As such, Theorem~\ref{MainThm} can serve as a baseline for future refinements of this model (say, by requiring the encoding to be causal). Moreover, it significantly simplifies the exposition of the results, by first revealing the connection to source coding with side information and then introducing the key and the rate constraint.

Finally, it should be noted that Weinberger and Merhav studied the Shannon cipher system with lossy communication~\cite{merhav2015probmetric,merhavsecurelossyTrans}  (i.e, the setup of the second part of this paper), and independently suggested the same secrecy metric  we proposed. Furthermore, they allowed a variable key rate.
In their initial work~\cite{merhav2015probmetric}, they derived the optimal exponent under the assumption that the distortion constraint of the eavesdropper is more \emph{lenient} than that of the legitimate receiver (which makes the \emph{no-key} problem degenerate). Our initial work (an earlier submission of the current paper) characterized the optimal exponent only under certain conditions (including the no-key case), which are not satisfied in the setting of~\cite{merhav2015probmetric}, and provided general upper and lower bounds. As such, those results were not comparable with that of~\cite{merhav2015probmetric}. Weinberger and Merhav later~\cite{merhavsecurelossyTrans} generalized their result to characterize the exponent in general, as is done here. However, the suggested scheme herein and its subsequent analysis are significantly simpler. In particular, our scheme uses a traditional random coding construction followed by a separate key-based randomization.

\section{Secrecy Metric} \label{SecrecyMetric}

The information-theoretic study of secrecy systems was initiated by Shannon in~\cite{ShannonSecrecy}. Shannon derived the following negative result: ensuring perfect secrecy, i.e., making the source sequence $X^n$ and the public message $M$ (cf. Figure~\ref{figureShannonCipher}) statistically independent, requires that the key rate be at least as large as the message rate.

As opposed to perfect secrecy, the notion of ``partial'' secrecy is more difficult to quantify. However, the impracticality of ensuring perfect secrecy, as implied by Shannon's result, means that developing such a notion is important from a practical point of view as well as a theoretical one. Shannon used \emph{equivocation} --- the conditional entropy of the source sequence given the public message $H(X^n|M)$ ---  as a ``theoretical secrecy index''. A main motivation for equivocation was the similarity between the deciphering problem for the eavesdropper in the secrecy setting and the decoding problem for the receiver in the standard noisy communication setting~\cite{ShannonSecrecy}. Equivocation has subsequently been used as a secrecy metric in several works~\cite{WireTapChannel,SecrecyFading,relayeavesdropper,relayhelpereve,csiszar1978broadcast,merhav2008shannon,erkippoor2008lossless}. However, its use is not well motivated operationally. It only provides a lower bound on the exponent of the list size that the eavesdropper must generate to reliably include the source sequence. Moreover, Massey showed in~\cite{GuessandEntropy} that the expected number of guesses that need to be made to correctly guess a discrete random variable $X$ may be arbitrarily large for arbitrarily small $H(X)$.   

Merhav and Arikan~\cite{MerhavArikanShannonCipher} proposed a more direct approach: they consider an i.i.d. source and they measure secrecy by the expected number of guesses that the eavesdropper needs to make before finding the correct source sequence, which they denote by $\E[G(X^n|M)]$, where $G(.|m)$ is a ``guessing'' function defined for each possible public message $m$. This is intended to capture the scenario in which the eavesdropper has a testing mechanism to check whether or not his/her guess is correct. Such mechanism exists, for example, if the source message is a password to a computer account. When the source is discrete and memoryless, and the transmitter and the legitimate receiver have access to $nr$ purely random common bits (where $r$ is called the key rate), the optimal exponent of $\E[G(X^n|M)]$ is found to be~\cite[Theorem 1]{MerhavArikanShannonCipher}:
\begin{align} 
E(P,r) & \triangleq \lim_{n \rightarrow \infty} \frac{1}{n} \log \E[G(X^n|M)] \notag \\
& = \max_Q \left\lbrace \min\{H(Q),r\} -D(Q||P) \right\rbrace, \label{MerhavArikanExponent}
\end{align}
where $P$ is the source distribution and $D(\cdot||\cdot)$ is the Kullback-Leibler (KL) divergence. Two issues arise with this metric. First, even if a testing mechanism exists, any practical system would only allow a small number of incorrect inputs. Thus, it is not clear how to interpret an exponentially large number of guesses. Second, and more importantly, it turns out that even highly-insecure systems can appear to be secure under this metric.
Indeed, by modifying the asymptotically-optimal scheme proposed in~\cite{MerhavArikanShannonCipher}, we can construct a scheme for the primary user that allows the eavesdropper to find the source sequence correctly with high probability by the first guess, and yet achieves the optimal exponent in~\eqref{MerhavArikanExponent}. The scheme proposed in~\cite{MerhavArikanShannonCipher} operates on the source sequences on a type-by-type basis, and it yields:
\begin{equation}
\E \left[G(X^n|M) \big | X^n \in T_Q \right] \geq 2^{n \min\{r,H(Q)\}-o(n)},
\end{equation}
where $o(n)/n \rightarrow 0$ as $n \rightarrow \infty$, and $T_Q$ is the type class of a given type $Q$, i.e., the set of sequences with empirical distribution $Q$. Averaging over the probabilities of $\{T_Q\}$ yields the exponent in~\eqref{MerhavArikanExponent} (as a lower bound). However, this means that it is enough to apply the proposed scheme to the type class $T_Q$ that achieves the maximum of $ \left[ \min\{H(Q),r\} -D(Q||P) \right]$, whereas sequences belonging to other type classes can be sent with no encoding whatsoever with no effect on the exponent. Therefore, only a set with vanishing probability is encoded, whereas sequences outside that set are immediately known by the eavesdropper\footnote{Merhav and Arikan actually characterize, for any $\rho > 0$, the exponent of $\E[G^\rho (X^n|M)]$. This more general result can still yield large exponents for systems that are highly insecure, although one could potentially address this issue by requiring schemes that yield large exponents simultaneously over a range of $\rho$ values.}. 

A different approach, based on rate-distortion theory, was adopted by Yamamoto in~\cite{yamamoto1997shannoncipher}. A distortion function is introduced and the secrecy of a given scheme is measured by the minimum attainable expected distortion at the eavesdropper. Also, a certain level of distortion, possibly corresponding to a different distortion function, is allowed at the legitimate receiver. An earlier work by Yamamoto~\cite{yamamoto1988secdec} considered the special case where no key is available, under the same secrecy metric. 
A standard example, discussed and generalized in~\cite{cuff2013secrecycausal}, shows why expected distortion is  inadequate: Suppose $X^n$ is a sequence of independent and identically distributed bits with $X_i \sim$ Ber($1/2$), the transmitter and the legitimate receiver have access to one common bit $K \sim$ Ber($1/2$), and the distortion function is the Hamming distance. The transmitter then sends the sequence $X^n$ as is if $K=0$, and flips all its bits if $K=1$. The induced expected distortion at the eavesdropper is then equal to $1/2$, which is also the maximum expected distortion that the eavesdropper can possibly incur, since it is achievable even if the public message is not observed. However, this ``optimal'' scheme in fact reveals a lot about the true source sequence; namely, it is one of only two possible candidates. 

To overcome this limitation of expected distortion, Schieler and Cuff~\cite{cuffisit2014henchman,cuff2014henchman} allow the eavesdropper to generate an exponentially-sized list of estimates and propose the expected minimum distortion over the list as a secrecy metric. It is not clear, however, how to operationally interpret a list of \emph{exponential} size. It is shown that this setting is equivalent to the following: there exists a ``henchman'' that has access to the source sequence $X^n$ and public message $M$, and can transmit $nR_L$ bits to the eavesdropper who measures secrecy by the minimum expected distortion.
However, this metric leads to a degenerate trade-off between the key rate $r$, the allowed list exponent (henchman rate) $R_L$, and the expected minimum distortion in the list $D_e$. For example, if the legitimate receiver must reconstruct $X^n$ losslessly, one of two cases occurs (see \cite[Theorem 1]{cuff2014henchman}): If $r > R_L$, $D_e$ is given by the rate-distortion function at $R_L$. If $r \leq R_L$, $D_e = 0$ since the eavesdropper can trivially find the exact sequence by listing all the possible keys. This fails to capture that, even when $R_L < r$, the eavesdropper can still list $2^{nR_L}$ possible keys and thus recover exactly the correct sequence with probability at least $2^{n(R_L-r)}$. As $R_L$ approaches $r$, this probability can be made to decay arbitrarily slowly. It is worth noting that Schieler and Cuff also consider a \emph{causal disclosure} setting~\cite{cuff2013secrecycausal}, in which the eavesdropper observes, at time $i$, the public message $M$ and $X^{i-1}$. Although this is more robust than expected distortion, it captures only a limited range of practical scenarios.

In this paper, we take a different approach. In many applications, the eavesdropper has no way to verify if his/her estimate is correct. This is particularly true in our main case of interest, i.e., timing of events. Moreover, as mentioned before, most practical systems allow a small number of incorrect guesses even if a testing mechanism exists. Therefore, we allow the eavesdropper to make one guess only. Secrecy is measured then by the probability that the guess is successful, i.e., the distortion incurred is below a given level. For the purposes of the asymptotic analysis in this paper, we will study only the exponent of the probability of a successful guess. A special case of such analysis was considered by Merhav in~\cite{merhav2003perfectsecrecy}. In particular,~\cite{merhav2003perfectsecrecy} is concerned with necessary and sufficient conditions for achieving the perfect secrecy exponent, which is the exponent attained by the eavesdropper in the absence of any observation. It is also restricted to the case in which both the legitimate receiver and the eavesdropper must reconstruct the source sequence exactly. Finally, a relevant earlier work by Arikan and Merhav~\cite{arikanguessingdistorion} considers the problem of blindly guessing a random variable up to a distortion level and characterizes the least achievable exponential growth of the expected number of guesses.

\section{Information Blurring System} \label{nokey}

We consider the following secrecy system. Let $\mathcal{X}$, $\mathcal{Y}$, and $\mathcal{V}$ be the alphabets associated with the transmitter, the legitimate receiver, and the eavesdropper, respectively. The transmitter wants to provide the legitimate receiver with a quantized version of an $n$-length message $X^n=(X_1, X_2, \cdots, X_n) $. It thus generates a vector $Y^n$ through a (possibly randomized) function $f:~ Y^n=f(X^n)$. For a given distortion function $d: \mathcal{X} \times \mathcal{Y} \rightarrow \mathbb{R}_+$, the quantization is required to satisfy a constraint of the form $d(X^n,Y^n)= \frac{1}{n} \sum_{i=1}^n d(X_i,Y_i) \leq D$ for a given distortion level $D$. The restriction is imposed on each realization of $(X^n,Y^n)$. An eavesdropper, with an associated distortion function $d_e: \mathcal{X} \times \mathcal{V} \rightarrow \mathbb{R}_+$, also observes $Y^n$ and generates a guess $V^n=g(Y^n)$, aiming to have $d_e(X^n,V^n) = \frac{1}{n} \sum_{i=1}^n d_e(X_i,V_i) \leq D_e$ for a given distortion level $D_e$.

It is assumed that the eavesdropper knows the source statistics and the primary user's encoding function $f$. The secrecy metric we adopt is the probability that the eavesdropper makes a successful guess, i.e., $\Pr( d_e(X^n,V^n) \leq  D_e)$. 
The primary user's objective is to minimize this probability. So, the problem can be written as:
\begin{equation*}
\min_{f_n}  \max_{g_n} \Pr \left( d_e\Big(X^n,g_n(f_n(X^n))\Big) \leq D_e \right).
\end{equation*}

We characterize the highest achievable exponent of the probability of a successful guess under the following assumptions:
\begin{enumerate}
\item[(A1)] The alphabets $\mathcal{X}$, $\mathcal{Y}$ and $\mathcal{V}$ are finite.
\item[(A2)] The source is memoryless, and without loss of generality, its distribution has full support.
\item[(A3)] The distortion functions $d$ and $d_e$ are bounded, i.e., there exists $D_{\max}$ and $D_{e,\max}$ such that, for all $x \in \mathcal{X}$, $y \in \mathcal{Y}$, and $v \in \mathcal{V}$, $d(x,y) \leq D_{\max}$ and $d_e(x,v) \leq D_{e,\max}$. Moreover, $ D \geq D_{\min}$, where $D_{\min} = \max_{x \in \mathcal{X}} \min_{y \in \mathcal{Y}} d(x,y)$. Similarly,  $ D_e \geq D_{e,\min}$, where $D_{e,\min} = \max_{x \in \mathcal{X}} \min_{v \in \mathcal{V}} d_e(x,v)$.
\end{enumerate}

We denote the optimal exponent by $E(P,D,D_e)$, where $P$ is the source distribution, i.e.,
\begin{align} 
& E(P,D,D_e) = \notag \\
&  \lim_{n \rightarrow \infty} \max_{\{f_n\}} \min_{\{g_n\}} -\frac{1}{n}\log \Pr \left( d_e \Big(X^n,g_n(f_n(X^n)) \Big) \leq D_e \right),\label{defE}
\end{align} where $\{f_n\}$ is restricted to the class of functions ensuring the feasibility of the primary user's problem.
The existence of the limit will be seen later.

 We will show that the problem is related to source coding with side information, where $Y^n$ acts as side information for the eavesdropper. Therefore, the primary user's job is to provide the ``worst'' side information subject to a distortion constraint of his/her own. 
To this end, we denote the \emph{conditional} rate-distortion function  as:
\begin{equation} \label{Rsideinfo}
R(P_{XY},D_e) = \min_{ \substack{P_{V|X,Y}:\\ ~~\E[d_e(X,V)] \leq D_e}} I(X;V |Y),
\end{equation}
and define the quantity $R(P_X,D,D_e)$ as:
\begin{equation} \label{Rmaxmin}
R(P_X,D,D_e) =  \max_{ \substack{ P_{Y|X}: \\ \E[d(X,Y)] \leq D}} R(P_{XY},D_e).
\end{equation}
Roughly speaking, when the joint type of $X^n$ and $Y^n$ is $P_{XY}$, the eavesdropper can restrict the guessing space to $2^{nR(P_{XY},D_e)}$ reconstruction sequences, knowing that at least one of them must satisfy the distortion constraint. The maximization in~\eqref{Rmaxmin} corresponds to the primary user's goal of maximizing that quantity.   \\

We prove the following properties of $R(P_{XY},D_e)$ and $R(P_X,D,D_e)$ in Appendix~\ref{continuityproof}.
\begin{Proposition} \label{continuity}
In the following statements, the domains of $D$ and $D_e$ are $[D_{\min},+\infty)$ and $[D_{e,\min},+\infty)$, respectively.
\begin{enumerate}
\item[(P1)] For fixed $P_{XY}$, $R(P_{XY},D_e)$ is a finite-valued, non-increasing convex function of $D_e$. Furthermore, $R(P_{XY},D_e)$ is a uniformly continuous function of the pair $(P_{XY},D_e)$.
\item[(P2)] For fixed $P_X$, $R(P_X,D,D_e)$ is a finite-valued function of $(D,D_e)$. Moreover, for fixed $D_e$, $R(P_X,D,D_e)$ is a uniformly continuous function of the pair $(P_X,D)$.
\item[(P3)] $R_e(P_X,D_e) - R(P_X,D) \leq R(P_X,D,D_e) \leq R_e(P_X,D_e)$, where $R(P_X,D)$ and $R_e(P_X,D_e)$ are the rate-distortion functions corresponding to the distortion constraints $d$ and $d_e$, respectively.
\end{enumerate}
\end{Proposition}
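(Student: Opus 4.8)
The plan is to treat (P1), (P2), (P3) in turn; (P3) is a short chain of standard information inequalities, and the analytic weight is carried by the continuity claims in (P1) and (P2). \emph{For (P1):} finiteness and monotonicity are immediate — since $D_e \ge D_{e,\min} = \max_x \min_v d_e(x,v)$, the deterministic channel $v=\phi(x):=\argmin_v d_e(x,v)$ is feasible in \eqref{Rsideinfo}, so $0 \le R(P_{XY},D_e) \le H(\phi(X)\,|\,Y) \le \log|\mathcal X|$, and enlarging $D_e$ enlarges the affinely-constrained feasible set. Convexity in $D_e$ (with $P_{XY}$ fixed) follows because $P_{V|X,Y}\mapsto I(X;V|Y)=\sum_y P_Y(y)I(X;V\,|\,Y=y)$ is convex (mutual information is convex in the channel for a fixed input, applied for each $y$) while the distortion constraint is affine, so mixing optimizers for $D_e^{(0)}$ and $D_e^{(1)}$ gives a feasible channel at the corresponding convex combination of levels with rate at most the convex combination of the rates.

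The substantive part of (P1) is joint uniform continuity. I would first prove joint continuity on the compact slab $\{(P_{XY},D_e):D_e\in[D_{e,\min},D_{e,\max}]\}$. For lower semicontinuity: given $(P_{XY}^{(k)},D_e^{(k)})\to(P_{XY},D_e)$, take optimizers $P_{V|X,Y}^{(k)}$, pass to a convergent subsequence $P_{V|X,Y}^{(k)}\to P^{*}$; the limit is feasible for $(P_{XY},D_e)$ since $\E[d_e(X,V)]\le D_e$ survives the limit, and $I(X;V|Y)$ is continuous in the joint law on finite alphabets, whence $\liminf_k R(P_{XY}^{(k)},D_e^{(k)})\ge R(P_{XY},D_e)$. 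For upper semicontinuity: take an optimizer $P^{\star}$ for the limit point; when its distortion constraint is active, perturb toward $\Phi:=P_{\phi(X)\,|\,X}$, so that for the mixture $P^{\epsilon}=(1-\epsilon)P^{\star}+\epsilon\Phi$ the expected distortion under $P_{XY}$ drops by an amount proportional to $\epsilon$, which makes $P^{\epsilon}$ feasible for all large $k$ at level $D_e^{(k)}$ once $\epsilon>0$ is fixed, while convexity of $I$ in the channel bounds the rate penalty by $\epsilon\log|\mathcal X|$; sending $k\to\infty$ and then $\epsilon\to0$ yields $\limsup_k R(P_{XY}^{(k)},D_e^{(k)})\le R(P_{XY},D_e)$. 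Continuity on a compact set is uniform continuity; and since $R(P_{XY},D_e)=0$ once $D_e\ge\min_v\sum_x P_X(x)d_e(x,v)$ (take $V$ constant), a threshold bounded by $D_{e,\max}$, the function is eventually flat in $D_e$ uniformly over $P_{XY}$, which promotes uniform continuity from the slab to the full domain.

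\emph{For (P2):} for fixed $P_X$, the set $\{P_{Y|X}:\E[d(X,Y)]\le D\}$ is non-empty (because $D\ge D_{\min}$, and $D_{\min}$ depends only on $d$, not on $P_X$) and compact, and $P_{Y|X}\mapsto R(P_XP_{Y|X},D_e)$ is continuous by (P1), so the maximum in \eqref{Rmaxmin} is attained and bounded by $\log|\mathcal X|$; this gives finiteness. For uniform continuity in $(P_X,D)$ at fixed $D_e$: the objective is uniformly continuous in $(P_X,P_{Y|X})$ by (P1), and the constraint correspondence $(P_X,D)\mapsto\{P_{Y|X}:\E[d(X,Y)]\le D\}$ is continuous — closedness gives upper hemicontinuity, and for $D>D_{\min}$ the deterministic channel $x\mapsto\argmin_{y}d(x,y)$ is a Slater point (its expected distortion is at most $D_{\min}<D$), giving lower hemicontinuity — so Berge's maximum theorem yields continuity of the max for $D>D_{\min}$, while the endpoint $D=D_{\min}$ is reached using monotonicity of $R(P_X,D,D_e)$ in $D$. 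Restricting to the compact set $D\in[D_{\min},D_{\max}]$ (for $D\ge D_{\max}$ the constraint is vacuous and $R(P_X,D,D_e)$ is constant in $D$) then upgrades continuity to uniform continuity.

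\emph{For (P3):} both bounds are information inequalities on the inner problem \eqref{Rsideinfo} combined with the outer maximization \eqref{Rmaxmin}. For the upper bound, fix any feasible $P_{Y|X}$ and let $P_{V|X}$ attain $R_e(P_X,D_e)$; using it as $P_{V|X,Y}:=P_{V|X}$ (so $V-X-Y$) preserves $\E[d_e(X,V)]\le D_e$ and gives $I(X;V|Y)=I(X;V)-I(V;Y)\le I(X;V)=R_e(P_X,D_e)$, hence $R(P_{XY},D_e)\le R_e(P_X,D_e)$ for every feasible $P_{Y|X}$ and thus $R(P_X,D,D_e)\le R_e(P_X,D_e)$. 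For the lower bound, let $P_{Y|X}$ attain $R(P_X,D)$ (feasible for the max, with $I(X;Y)=R(P_X,D)$); for any $P_{V|X,Y}$ with $\E[d_e(X,V)]\le D_e$, the chain rule gives $I(X;V|Y)=I(X;V,Y)-I(X;Y)\ge I(X;V)-I(X;Y)\ge R_e(P_X,D_e)-R(P_X,D)$, since the induced $P_{V|X}$ meets the eavesdropper's constraint; minimizing over $P_{V|X,Y}$ and taking the max over $P_{Y|X}$ gives $R(P_X,D,D_e)\ge R_e(P_X,D_e)-R(P_X,D)$. The main obstacle is the upper-semicontinuity step in (P1) — equivalently, lower hemicontinuity of the feasible set when the eavesdropper's constraint is active, and especially when $D_e=D_{e,\min}$, where no Slater point exists; the perturbation toward $\phi$ together with convexity in $D_e$ is tailored to this, and analogous care is needed at $D=D_{\min}$ in (P2).
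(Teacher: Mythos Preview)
Your treatment of (P3) is essentially the paper's: both plug in the rate-distortion-optimal $P_{Y|X}^\star$ and use $I(X;V|Y)=I(X;V,Y)-I(X;Y)\ge I(X;V)-R(P_X,D)$, which is just the entropy chain in the paper's \eqref{eqp3proof1}--\eqref{eqp3proof2} rewritten with mutual informations.

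For (P1)--(P2) you take a genuinely different route. The paper does not invoke Berge; it proves two tailored lemmas (Propositions~\ref{propcontinuity} and~\ref{propcontinuityD}) for parametrized optima, the first resting on a structural property (PA): whenever the constraint is at its minimum for some $(s,u_1)$, one can replace $u_1$ by $u_2$ with the same objective value that minimizes the constraint \emph{uniformly in $s$}. (PA) is verified by redefining $P_{V|XY}$ on the null set of $P_{XY}$ to put mass only on distortion-minimizing $v$'s. The payoff is that the same abstract lemmas are reused for $R(P_X,R,D,D_e)$ in Proposition~\ref{propcontinuityR}. Your Berge/perturbation route is more standard and arguably more transparent, but the endpoint analysis has to be redone each time.

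There is, however, a real gap precisely where you flag the ``main obstacle'' but do not close it. Your perturbation $P^\epsilon=(1-\epsilon)P^\star+\epsilon\Phi$ yields strict slack only when $\E_{P_{XY}}[d_e(X,\phi(X))]<D_e$. At $D_e=D_{e,\min}$ with $P_X$ supported on $\argmax_x\min_v d_e(x,v)$, one has $\E_{P_{XY}}[d_e(X,\phi(X))]=D_{e,\min}=D_e$, so $P^\epsilon$ gains no slack under $P_{XY}$ and need not be feasible for $(P_{XY}^{(k)},D_e^{(k)})$ for any $\epsilon<1$. Appealing to ``convexity in $D_e$'' does not repair this: convexity gives one-variable right-continuity at the endpoint, but joint continuity at $(P_{XY},D_{e,\min})$ still requires upper semicontinuity in $P_{XY}$ along the slice $D_e=D_{e,\min}$, which is the very step in question. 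The fix is exactly the paper's (PA) construction: the active constraint forces $P^\star(v|x,y)=0$ whenever $d_e(x,v)>\min_{v'}d_e(x,v')$ on the support of $P_{XY}$; redefine $P^\star$ to satisfy this off the support as well. The modified channel then has $\E_{P'_{XY}}[d_e(X,V)]=\sum_x P'_X(x)\min_v d_e(x,v)\le D_{e,\min}$ for \emph{every} $P'_{XY}$, hence is feasible for all $(P_{XY}^{(k)},D_e^{(k)})$, with no change to $I_{P_{XY}}(X;V|Y)$. The identical issue and identical patch arise at $D=D_{\min}$ in (P2); ``monotonicity in $D$'' alone handles only the one-variable limit, not the joint one.
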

\bigskip
Our main result is the characterization of the optimal exponent as follows:
\begin{Theorem} \label{MainThm}
Under assumptions (A1)-(A3), for any DMS $P$, and distortion functions $d$ and $d_e$ with associated distortion levels $D \geq D_{\min}$ and $D_e \geq D_{e,\min}$, corresponding respectively to the primary user and the eavesdropper:
\begin{equation} \label{maineq}
E(P,D,D_e) = \min_{Q} D(Q||P) + R(Q,D,D_e), 
\end{equation}
where $Q$ ranges over all probability distributions on the source alphabet, and $R(Q,D,D_e)$ is as defined in \eqref{Rmaxmin}.
\end{Theorem}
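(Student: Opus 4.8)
The plan is to prove \eqref{maineq} by establishing matching bounds on the quantity defined in \eqref{defE}. The \emph{converse} (upper bound on the exponent) will show that, against any feasible encoder $\{f_n\}$, the eavesdropper has a strategy forcing $-\frac1n\log\Pr(d_e\le D_e)\le\min_Q D(Q\|P)+R(Q,D,D_e)+o(1)$, while the \emph{achievability} part will exhibit one universal encoder against which every eavesdropper does at least that well; the two together pin down the limit, so the existence claim in \eqref{defE} comes for free. The workhorse throughout is the method of types, and in particular the conditional type-covering lemma: for any joint type $P_{XY}$ and any $y^n$ in the type class of $P_Y$ there is a set $\mathcal{C}(y^n)\subseteq\mathcal{V}^n$ of size at most $2^{nR(P_{XY},D_e)+o(n)}$ whose $d_e$-balls of radius $D_e$ cover every $x^n$ with $\hat P_{x^ny^n}=P_{XY}$; this is the type-counting reading of the conditional rate-distortion function \eqref{Rsideinfo}.

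For the converse I would fix a feasible $\{f_n\}$ and have the eavesdropper, upon seeing $y^n$, form $\mathcal{C}_{P_{XY}}(y^n)$ for every joint type $P_{XY}$ whose $Y$-marginal is the type of $y^n$, and then output the single codeword in the union of these (polynomially many) sets that maximizes $\sum_{x^n:\,d_e(x^n,v^n)\le D_e}\Pr(X^n=x^n,Y^n=y^n)$. The key observation is that, for a fixed source type $Q$, feasibility forces every $x^n\in T_Q$ that can occur with $y^n$ to satisfy $\E_{\hat P_{x^ny^n}}[d]\le D$, hence $R(\hat P_{x^ny^n},D_e)\le R(Q,D,D_e)$ by \eqref{Rmaxmin}; thus the subfamily of covering sets with $X$-marginal $Q$ covers all of $T_Q\cap f_n^{-1}(y^n)$ using at most $2^{nR(Q,D,D_e)+o(n)}$ codewords, and the best single codeword captures at least a $2^{-nR(Q,D,D_e)-o(n)}$ fraction of $\Pr(X^n\in T_Q,Y^n=y^n)$. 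Summing over $y^n$, pulling out the constants, and using $\sum_{y^n}\max_Q\ge\max_Q\sum_{y^n}$ together with $\Pr(X^n\in T_Q)\ge2^{-nD(Q\|P)-o(n)}$ gives $\Pr(d_e\le D_e)\ge2^{-n\min_Q[D(Q\|P)+R(Q,D,D_e)]-o(n)}$, where the passage from a minimum over length-$n$ types to a minimum over all distributions uses the uniform continuity in (P2) of Proposition~\ref{continuity}.

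For achievability I would build a universal type-by-type encoder. Given a source type $Q$, let $P^{*}_{Y|X}(Q)$ attain the maximum in \eqref{Rmaxmin}, so that $\E[d(X,Y)]\le D$ and $R(Q\times P^{*}_{Y|X},D_e)=R(Q,D,D_e)$, and fix a length-$n$ joint type $\widetilde Q\approx Q\times P^{*}_{Y|X}$; the encoder should send each $x^n\in T_Q$ to some $y^n$ with $\hat P_{x^ny^n}=\widetilde Q$, with \emph{balanced fibers}, i.e., so that no $y^n$ receives more than $2^{nH_{\widetilde Q}(X|Y)+o(n)}$ preimages from $T_Q$. Since the bipartite joint-type incidence structure between $T_Q$ and $T_{\widetilde Q_Y}$ is essentially regular, a greedy averaging argument (a defect version of Hall's theorem) produces such an assignment. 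To keep the almost-sure constraint $d(x^n,f_n(x^n))\le D$ when $\widetilde Q$ only approximates $Q\times P^{*}_{Y|X}$, I would design for distortion level $D-\delta$ and invoke the continuity of $R(P_X,D,D_e)$ in $D$ from (P2), letting $\delta\downarrow0$ at the end; any residual sequences are handled by the symbol-wise map $y_i=\argmin_y d(x_i,y)$, always feasible because $D\ge D_{\min}$, at negligible probability cost. Against this encoder, for any $g_n$ and any $y^n$ used by $T_Q$, the $x^n$ in its fiber with $d_e(x^n,g_n(y^n))\le D_e$ form a union of conditional type classes and thus number at most $2^{n\beta+o(n)}$ with $\beta=\max\{H(X|Y,V):P_{XY}=\widetilde Q,\ \E[d_e(X,V)]\le D_e\}=H_{\widetilde Q}(X|Y)-R(Q,D,D_e)$, the last equality being $\min_{P_{V|XY}:\,\E[d_e]\le D_e}I(X;V|Y)=R(\widetilde Q,D_e)$ rewritten via $I(X;V|Y)=H(X|Y)-H(X|Y,V)$. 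Since $T_Q$ occupies at most $2^{nI_{\widetilde Q}(X;Y)+o(n)}$ fibers, the eavesdropper captures at most $2^{n(H(Q)-R(Q,D,D_e))+o(n)}$ sequences of $T_Q$, each of $P$-probability $2^{-n(H(Q)+D(Q\|P))}$; summing the resulting bound $2^{-n(D(Q\|P)+R(Q,D,D_e))+o(n)}$ over the polynomially many types yields $\Pr(d_e\le D_e)\le2^{-nE(P,D,D_e)+o(n)}$ for every $g_n$.

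I expect the main obstacle to be the encoder construction in the achievability part: one must simultaneously (i) realize the \emph{worst} conditional law $P^{*}_{Y|X}(Q)$ as an honest length-$n$ joint type, (ii) spread $T_Q$ over the outputs finely enough that no eavesdropper reconstruction covers more than a $2^{-nR(Q,D,D_e)}$ fraction of it, and (iii) never violate the per-sequence constraint $d(x^n,f_n(x^n))\le D$. Reconciling (i) with (iii) is precisely where the continuity statements of Proposition~\ref{continuity} are needed, and obtaining (ii) requires the Hall-type spreading argument rather than an off-the-shelf rate-distortion code. By comparison, the conditional type-covering lemma and the $\sum\max\ge\max\sum$ manipulation in the converse are routine.
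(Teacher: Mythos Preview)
Your proposal is correct and follows the same architecture as the paper, but you overcomplicate the achievability in two places where the paper's route is simpler.

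First, the Hall-type spreading argument for balanced fibers is unnecessary. The paper just invokes the type-covering lemma (Lemma~\ref{lemmacode}) to get, for each source type $Q_X$ and chosen joint type $Q_{XY}$, a codebook $\mathcal{C}_{Q_X}\subseteq T_{Q_Y}$ of size at most $2^{n(I_{Q_{XY}}(X;Y)+\epsilon)}$ covering $T_{Q_X}$. The per-fiber bound you want is then automatic: since every $x^n$ mapped to $y^n$ has $(x^n,y^n)\in T_{Q_{XY}}$, one has $f^{-1}_{Q_X}(y^n)\subseteq T_{Q_{X|Y}}(y^n)$, so $|B_{D_e}(v^n)\cap f^{-1}_{Q_X}(y^n)|\le |B_{D_e}(v^n)\cap T_{Q_{X|Y}}(y^n)|$, and the right-hand side is bounded by $2^{nH_{P_{XYV}}(X|V,Y)+o(n)}$ exactly as you compute. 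No balancing is needed: the small codebook bounds the number of fibers, and containment in the conditional type class bounds the captured mass per fiber. Summing gives the paper's inequality~\eqref{mainchain1}.

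Second, the $D-\delta$ backoff is also unnecessary. The paper optimizes directly over the finite set $\mathcal{Q}^n_{\mathcal{XY}}(Q_X,D)$ of joint types satisfying $\E_{Q_{XY}}[d]\le D$ exactly, and passes to the continuous optimum via Proposition~\ref{proplimit}, which combines density of types with the uniform continuity of $R(P_{XY},D_e)$ from (P1). This is why Theorem~\ref{MainThm} holds at $D$ and $D_e$ with no slack.

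For the converse your deterministic MAP-plus-covering argument is fine; the paper instead has the eavesdropper guess the joint type uniformly at random and then draw $V^n$ uniformly from $T_{Q_{V|Y}}(y^n)$ (Lemmas~\ref{lemmaprob}--\ref{lemmaprobfinal}). The randomized version makes universality with respect to both $f$ and the source law $P$ more transparent, but the two arguments are interchangeable at the level of the exponent.
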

\begin{Remark}
We do not require any $\epsilon$-backoff for $D$ or $D_e$ to characterize the associated exponent.
\end{Remark}
\bigskip

An interesting feature of Theorem~\ref{MainThm} is the emergence of mutual information as part of the solution in~\eqref{maineq}, even though the setup does not include any rate constraints. Moreover, an interesting contrast can be seen between the expression in~\eqref{MerhavArikanExponent} for the expected number of guesses metric and the expression in~\eqref{maineq} for our metric. Indeed, the former  evaluates the performance of a given scheme asymptotically by a weighted \emph{best-case} scenario, whereas the latter evaluates it by a weighted \emph{worst-case} scenario. 

As an application of the theorem, we compute the \emph{perfect secrecy} exponent, which we define as the best achievable exponent when the primary user is not subject to any constraint and denote it by $E_0(P,D_e)$. To this end, we introduce a trivial distortion function: $d(x,y)=0$, for all $x \in \mathcal{X}$ and $y \in \mathcal{Y}$. Then, $R(Q,D)=0$, for all $Q$ and all $D \geq 0$. It then follows from (P3) of Proposition~\ref{continuity} that $R(Q,D,D_e)=R_e(Q,D_e)$ for all $Q$. Therefore,
\begin{equation}
\label{eqperfectsecrecyexp}
E_0(P,D_e) = \min_Q D(Q||P)+R_e(Q,D_e).
\end{equation} 

The next two subsections are devoted to proving Theorem~\ref{MainThm}. We first propose a scheme for the primary user and show that the induced exponent is lower-bounded by the right-hand side of~\eqref{maineq}. From the eavesdropper's point of view, this is a converse result. Similarly, we propose a scheme for the eavesdropper and show that the induced exponent is upper-bounded by the right-hand side of~\eqref{maineq}, which establishes the desired result.

We set some notation for the remainder of the paper. In the following, $\mathcal{Z}$ is an arbitrary discrete set, and $Z$ is a random variable over $\mathcal{Z}$.
\begin{enumerate}
\item[-] The set of probability distributions over $\mathcal{Z}$ is denoted by $\mathcal{P}_\mathcal{Z}$. 
\item[-] For a sequence $z^n \in \mathcal{Z}^n$, $Q_{z^n}$ is the empirical PMF of $z^n$, also referred to as its type.
\item[-] $\mathcal{Q}_\mathcal{Z}^n$ is the set of types in $\mathcal{Z}^n$, i.e., the set of rational PMF's with denominator $n$.
\item[-] For $Q_Z \in \mathcal{Q}_\mathcal{Z}^n$, the type class of $Q_Z$ is $T_{Q_Z} \triangleq \{z^n \in \mathcal{Z}^n: Q_{z^n}=Q_Z \}$.
\item[-] $\E_Q[\cdot]$, $H_Q(\cdot)$, and $I_Q(\cdot;\cdot)$ denote respectively expectation, entropy, and mutual information taken with respect to distribution $Q$.
\item[-] All logarithms and exponentials are taken to the base 2.
\end{enumerate}

\subsection{Achievability for the Primary User (Eavesdropper's Converse Result)} \label{primaryachieve}

Let \begin{align} 
& E^-(P,D,D_e) = \notag \\
& \liminf_{n \rightarrow \infty} \max_{\{f_n\}} \min_{\{g_n\}} -\frac{1}{n}\log \Pr \left( d_e \Big(X^n,g_n(f_n(X^n))\Big) \leq  D_e \right).\label{defE-}
\end{align}  
We will show that $E^-(P,D,D_e) \geq \min_{Q} D(Q||P) + R(Q,D,D_e)$. 

The primary user will operate on the source sequences on a type-by-type basis. For each type $Q_X \in \mathcal{Q}_\mathcal{X}^n$, we create a rate distortion code $\mathcal{C}_{Q_X}$ to cover each sequence in $T_{Q_X}$ as follows. We associate with $Q_X$ a joint type $Q_{XY}$ from $\mathcal{Q}_{\mathcal{XY}}^n(Q_X,D)$:\footnote{Assumption (A3) guarantees that $\mathcal{Q}_{\mathcal{XY}}^n(Q_X,D)$ is nonempty for any $Q_X$.}
\begin{align}
& \mathcal{Q}_{\mathcal{XY}}^n(Q_X,D) = \notag \\
& ~~\{ P_{XY} \in \mathcal{Q}_{\mathcal{XY}}^n: P_X=Q_X, ~ \E_{P_{XY}} [d(X,Y)]  \leq D\}.
\label{defqqxd}
\end{align}  
The code is then constructed from $T_{Q_Y}$ as given by the following lemma, which bounds the size of the code. 
\bigskip
\begin{Lemma} \label{lemmacode}
Given $\epsilon > 0$, there exists $n_0\left(\epsilon,|\mathcal{X}|,|\mathcal{Y}| \right)$ such that for any $n \geq n_0$, for each joint type $Q_{XY} \in \mathcal{Q}_{\mathcal{XY}}^n$, there exists a code $(y_1^n,y_2^n,\cdots,y_N^n)$ such that $N \leq 2^{n( I_{Q_{XY}} (X;Y)+\epsilon)}$, and for all $x^n \in T_{Q_X},$ there exists $i$ satisfying $(x^n,y_i^n) \in T_{Q_{XY}}$.
\end{Lemma}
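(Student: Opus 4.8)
The plan is to prove this by a standard random coding argument over the type class $T_{Q_Y}$, combined with the combinatorial estimates for type classes and conditional type classes (the "method of types"). The key facts I would invoke are: $|T_{Q_Y}| \geq 2^{n(H_{Q_{XY}}(Y) - \epsilon')}$ for $n$ large, $|T_{Q_{Y|X}}(x^n)| \geq 2^{n(H_{Q_{XY}}(Y|X) - \epsilon')}$ for any $x^n \in T_{Q_X}$ (this uses that $Q_{XY}$ is a genuine joint type so the conditional type class is nonempty and of the expected size), and the bound $|T_{Q_X}| \leq 2^{nH_{Q_{XY}}(X)}$. Note that for fixed $x^n \in T_{Q_X}$, a sequence $y^n$ drawn uniformly from $T_{Q_Y}$ lands in the conditional type class $T_{Q_{Y|X}}(x^n)$ (equivalently, $(x^n, y^n) \in T_{Q_{XY}}$) with probability $|T_{Q_{Y|X}}(x^n)| / |T_{Q_Y}| \geq 2^{-n(I_{Q_{XY}}(X;Y) + \epsilon'')}$, since $H_{Q_{XY}}(Y) - H_{Q_{XY}}(Y|X) = I_{Q_{XY}}(X;Y)$.

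With that probability bound in hand, I would draw $N = \lceil 2^{n(I_{Q_{XY}}(X;Y) + \epsilon)} \rceil$ codewords $Y_1^n, \dots, Y_N^n$ independently and uniformly from $T_{Q_Y}$. For a fixed $x^n \in T_{Q_X}$, the probability that \emph{no} codeword satisfies $(x^n, Y_i^n) \in T_{Q_{XY}}$ is at most $(1 - 2^{-n(I_{Q_{XY}}(X;Y)+\epsilon'')})^N \leq \exp(-N \cdot 2^{-n(I_{Q_{XY}}(X;Y)+\epsilon'')}) \leq \exp(-2^{n(\epsilon - \epsilon'')})$. A union bound over all $x^n \in T_{Q_X}$ — of which there are at most $2^{nH_{Q_{XY}}(X)} \leq 2^{n\log|\mathcal{X}|}$ — gives a failure probability at most $2^{n\log|\mathcal{X}|} \exp(-2^{n(\epsilon-\epsilon'')})$. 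Choosing the polynomial corrections $\epsilon', \epsilon''$ (which come from the $\mathrm{poly}(n)$ factors in the type-size estimates and decay like $O(\log n / n)$, hence are $< \epsilon$ for $n$ large) so that $\epsilon - \epsilon'' > 0$, this doubly-exponentially small quantity is $< 1$ for all $n \geq n_0(\epsilon, |\mathcal{X}|, |\mathcal{Y}|)$. Hence there exists a realization of the codebook that covers every $x^n \in T_{Q_X}$, which is exactly the claimed code.

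The main thing to be careful about — the only real obstacle — is the bookkeeping of the $o(n)$ / polynomial-in-$n$ factors: one must verify that the gap $\epsilon$ genuinely dominates the accumulated correction terms $\epsilon', \epsilon''$ arising from $|T_{Q_Y}| \geq (n+1)^{-|\mathcal{Y}|} 2^{nH_{Q_{XY}}(Y)}$ and the analogous conditional-type bound, and that the threshold $n_0$ can be chosen depending only on $\epsilon$, $|\mathcal{X}|$, and $|\mathcal{Y}|$ — not on the particular type $Q_{XY}$. This is fine because all the polynomial factors are uniform over types in $\mathcal{Q}^n_{\mathcal{XY}}$, and the entropies are uniformly bounded by $\log|\mathcal{X}|$ and $\log|\mathcal{Y}|$, so the double-exponential in the failure bound beats the single-exponential union bound uniformly. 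Everything else is a direct citation of Csiszár–Körner-style type lemmas, so I would state those precisely and then assemble the argument as above.
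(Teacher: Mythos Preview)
Your proposal is correct and is essentially the same argument the paper uses: the paper defers the proof of this lemma to the stronger Lemma~\ref{lemmaachievability} (Appendix~\ref{mainlemmaproof}), whose ``Covering'' part generates $N$ codewords i.i.d.\ uniformly from $T_{Q_Y}$, bounds $\Pr((x^n,Y_m^n)\in T_{Q_{XY}}) = |T_{Q_{Y|X}}(x^n)|/|T_{Q_Y}|$ exactly as you do, uses $(1-\beta)^N \le e^{-\beta N}$ to get a doubly-exponentially small non-covering probability, and union-bounds over $|\mathcal{X}|^n$ sequences. Your attention to the uniformity of $n_0$ in $Q_{XY}$ is also on point and matches the paper's handling.
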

The proof is a refinement of the covering lemma~\cite[Lemma 2.4.1]{korner}. We later prove a stronger result, Lemma~\ref{lemmaachievability}, in Appendix~\ref{mainlemmaproof}.
\bigskip
\begin{Remark}
One might be tempted to use an optimal rate-distortion code for each type $Q_X$, presuming that this choice is best at preserving secrecy since it achieves optimal compression, i.e., it only sends the necessary information. However, the problem is more subtle since the ``redundancy of information'' depends on the eavesdropper's distortion constraint $d_e$. The optimal choice of $Q_{XY}$ will be revealed when analyzing the eavesdropper's optimal strategy. 
\end{Remark}
\bigskip

Now, fix $\epsilon >0$ and let $n$ be at least as large as $n_0$ in Lemma~\ref{lemmacode}. We will denote by $\mathcal{C}^n_{Q_X}$ the rate distortion code associated with type $Q_X$. Thus, the function $f$ of the primary user is as follows: each sequence $x^n$ is mapped to a sequence $y^n \in \mathcal{C}^n_{Q_{x^n}}$ satisfying $Q_{x^ny^n}=Q_{XY}$ (where $Q_{XY}$ is associated with $Q_{x^n}$) and subsequently $d(x^n,y^n)\leq D$.\\

To determine the eavesdropper's optimal guess, we define $B_{D_e} (v^n)= \{ x^n \in \mathcal{X}^n: ~d_e(x^n,v^n) \leq  D_e \}.$ Then, for each observed $y^n$, the optimal rule is given by
\begin{equation*}
g(y^n) = \argmax_{v^n \in \mathcal{V}^n}  \sum_{x^n \in B_{D_e} (v^n) } p(x^n | y^n).
\end{equation*}
This can be understood as the MAP rule, and we denote in the remainder by\footnote{The MAP rule depends on $f$ and thus should be denoted by $g_{o,f}$. Since this is obvious, we drop the subscript $f$ for notational convenience.} $g_o$ (where ``o'' stands for optimal). To upper-bound the probability of a correct guess, we consider a \emph{genie-aided} rule that is aware of the type of the transmitted source sequence. That is, the genie-aided MAP rule yields 
\begin{equation*} 
g_o(y^n,Q_X)= \argmax_{v^n \in \mathcal{V}^n}   \sum_{x^n \in B_{D_e} (v^n) \cap Q_X }  p(x^n | y^n, X^n \in T_{Q_X}).  
\end{equation*}
\begin{Remark}
One should not expect the upper bound to be loose since there are only polynomially many types in $n$, so that the exponent is not affected.  
\end{Remark}
For a given $y^n$, let $f^{-1}_{Q_X}(y^n)=\{x^n \in T_{Q_X}: f(x^n)=y^n\}$ be the set of sequences in $T_{Q_X}$ that are mapped to it. Then, the observation of $y^n$ implies that $X^n \in f_{Q_X}^{-1}(y^n)$, and the genie-aided MAP rules makes a successful guess if $X^n \in B_{D_e} \left(g_o(y^n,Q_X) \right)$. Therefore, we will derive an upper bound on the maximum possible size of the intersection of these two sets. First, note that, $x^n \in T_{Q_X}$ and $f(x^n)=y^n$ implies that $Q_{x^ny^n}=Q_{XY}$, where $Q_{XY}$ is the joint type associated with $Q_X$. So $f^{-1}_{Q_X} (y^n) \subseteq T_{Q_{X|Y}}(y^n) \triangleq \{x^n \in T_{Q_X}: (x^n,y^n) \in T_{Q_{XY}} \}$. Now, consider any $v^n \in \mathcal{V}^n$,
\begin{align} \label{eavesbound1}
& \left|B_{D_e}(v^n) \bigcap f_{Q_X}^{-1}(y^n) \right| \notag \\
& 
\leq  \left|B_{D_e}(v^n) \bigcap T_{Q_{X|Y}}(y^n) \right| \notag \\ 
 & \stackrel{\text{(a)}} =  \sum_{\substack{P_{XYV} \in \mathcal{Q}_{\mathcal{XYV}}^n: \\ P_{XY} = Q_{XY} \\ \E_{P_{XYV}} [d_e(X,V)] \leq D_e \\ P_{YV}=Q_{y^n v^n} } } \sum_{ \substack{x^n: \\ (x^n,y^n,v^n) \in T_{P_{XYV}} }} 1 \notag \\
& \stackrel{\text{(b)}} = \sum_{ \substack{P_{XYV} \in \mathcal{Q}^n(Q_{XY},D_e): \\ P_{YV}=Q_{y^n v^n}} } |T_{P_{X|V,Y}}(v^n,y^n) |  \notag \\
& \leq (n+1)^{|\mathcal{X}||\mathcal{Y}| |\mathcal{V}|} \max_{\substack{P_{XYV} \in \mathcal{Q}^n(Q_{XY},D_e): \\ P_{YV}=Q_{y^n v^n}}  } |T_{P_{X|V,Y}}(v^n,y^n) | \notag \\
& \stackrel{\text{(b)}}\leq (n+1)^{|\mathcal{X}||\mathcal{Y}| |\mathcal{V}|} \max_{\substack{P_{XYV} \in \mathcal{Q}^n(Q_{XY},D_e): \\ P_{YV}=Q_{y^n v^n}}  } 2^{n H_{P_{XYV}} ( X|V,Y) },
\end{align}
where  
\begin{enumerate}
\item[(a)] follows from the fact that $ (x^n,y^n,v^n) \in T_{P_{XYV}} \Rightarrow P_{YV}=Q_{y^n v^n}$.
\item[(b)] follows from the definition of $\mathcal{Q}^n(Q_{XY},D_e)$ as:
\begin{align} 
 \mathcal{Q}^n(Q_{XY},D_e) = \{ & P_{XYV} \in \mathcal{Q}_{\mathcal{XYV}}^n: P_{XY}=Q_{XY}, \notag \\
&  \E_{P_{XYV}} [d_e(X,V)] \leq D_e \}. \label{QXYDE}
\end{align} 
\item[(b)] follows from Lemma 1.2.5 in~\cite{korner}.
\end{enumerate}
Therefore, for large enough $n$, we get
\begin{align}
&\max_{v^n \in \mathcal{V}^n}  \left|B_{D_e}(v^n) \bigcap f_{Q_X}^{-1}(y^n) \right| \notag \\
& \leq  \max_{P_{XYV} \in \mathcal{Q}^n(Q_{XY},D_e)} 2^{ n \left(H_{P_{XYV}} ( X|V,Y)+\epsilon \right) }.\label{eavesbound}
\end{align}
Let $P^\star_n(Q_{XY})$ be the joint type achieving the max in \eqref{eavesbound}, where the dependence on $D_e$ is suppressed since it is fixed throughout the analysis. We can now upper-bound the probability that the eavesdropper makes a successful guess as follows:
\begin{align} \label{mainchain1}
& \Pr \Big(d_e \Big(X^n, g_{o}(f(X^n)) \Big) \leq  D_e \Big) \notag \\
 & \leq \Pr \Big(d_e \Big(X^n,g_o \left(f(X^n),Q_{X^n} \right) \Big) \leq  D_e \Big)
\notag \\
& = \sum_{x^n \in \mathcal{X}^n}  P(x^n) \mathbf{1} \left\lbrace x^n \in B_{D_e}\left(g_o \big(f(x^n),Q_{x^n} \big) \right) \right\rbrace \notag \\
& = \! \sum_{Q_X \in \mathcal{Q}_\mathcal{X}^n} \! \sum_{y^n \in \mathcal{C}^n_{Q_X} } \! \! \sum_{ \substack{x^n \in T_{Q_X} : \\ f(x^n)=y^n}} \! \! \! P(x^n) \mathbf{1} \left\lbrace x^n \! \in \! \! B_{D_e} \! \left(g_o \big(y^n,Q_{X} \big)\!\right)\! \right\rbrace \notag \\
& \stackrel{\text{(a)}} \leq  \sum_{Q_X \in \mathcal{Q}_\mathcal{X}^n}  \sum_{y^n \in \mathcal{C}^n_{Q_X} }  2^{n \left(-D(Q_X || P)-H_{Q_X}(X) \right)}\cdot \notag \\
&  \qquad \qquad \qquad 2^{n\left(H_{P^\star_n(Q_{XY}) } (X|V,Y) +\epsilon \right)} \notag \\
& \stackrel{\text{(b)}} \leq   \sum_{Q_X \in \mathcal{Q}_\mathcal{X}^n} 2^{n \left(I_{Q_{XY}}(X;Y)+\epsilon-D(Q_X || P)-H_{Q_X}(X) \right)} \cdot \notag \vspace{-3mm} \\ 
& \qquad \qquad \qquad 2^{n\left(H_{P^\star_n(Q_{XY}) } (X|V,Y) +\epsilon \right)} \notag \\
& =  \sum_{Q_X \in \mathcal{Q}_\mathcal{X}^n} 2^{n \left(-D(Q_X || P) -H_{Q_{XY}}(X|Y)+H_{P^\star_n(Q_{XY})} (X|V,Y)+2\epsilon   \right)} \notag \\
& =  \sum_{Q_X \in \mathcal{Q}_\mathcal{X}^n} 2^{-n \left(D(Q_X||P)+I_{P^\star_n(Q_{XY})}(X;V|Y) -2\epsilon \right)},
\end{align}
where
\begin{enumerate}
\item[(a)] follows from \eqref{eavesbound}.
\item[(b)] follows from Lemma~\ref{lemmacode}.
\end{enumerate}

To interpret the exponent in~\eqref{mainchain1}, note that $P^\star_n(Q_{XY})$ minimizes $I(X;V|Y)$ over $\mathcal{Q}^n(Q_{XY},D_e)$ (follows readily from~\eqref{eavesbound}). Therefore, $I_{P^\star_n(Q_{XY})}(X;V|Y)$ is roughly $R(Q_{XY},D_e)$. The eavesdropper's scheme can then be seen as picking a codeword from an optimal rate-distortion code that uses side information generated according to $Q_{Y|X}$. 

Since $Q_{XY}$ is the choice of the primary user, who is interested in maximizing the exponents in \eqref{mainchain1}, we define for each $Q_X \in \mathcal{Q}_\mathcal{X}^n$:
\begin{equation} \label{eqQstarQX}
Q^\star(Q_X) \in \argmax_{Q_{XY} \in \mathcal{Q}_{\mathcal{XY}}^n(Q_X,D)} I_{P^\star_n(Q_{XY})}(X;V|Y),
\end{equation}
where we have again suppressed the dependence on $D$ and $D_e$ in the notation.
\begin{Remark}
The maximization does not depend on the source statistics, and consequently neither does the proposed encoding function $f$.
\end{Remark}
With a slight abuse of notation, we rewrite $P^\star_n(Q^\star(Q_X))$ as $P^\star_n(Q_X)$ to get
\begin{align} 
I_{P^\star_n(Q_X)}(X;V|Y) & = \max_{ \substack{ Q_{XY} \in \\  \mathcal{Q}_{\mathcal{XY}}^n(Q_X,D)} } I_{P^\star_n(Q_{XY})}(X;V|Y) \notag \\
& = \!\! \max_{ \substack{ Q_{XY} \in \\ \mathcal{Q}_{\mathcal{XY}}^n(Q_X,D)~ } } \!\! \min_{ \substack{ Q_{XYV} \in \\ ~\mathcal{Q}^n(Q_{XY},D_e) }}  \!\!\! I_{Q_{XYV}}\!\!(X;V|Y). \label{Imax}
\end{align}
We can now rewrite \eqref{mainchain1} as
\begin{align} \label{mainchain2}
 & \Pr \Big(d_e \Big(X^n, g_{o}(f(X^n)) \Big) \leq  D_e \Big) \notag \\
 & \leq  \sum_{Q_X \in \mathcal{Q}_\mathcal{X}^n} 2^{-n \left(D(Q_X||P)+I_{P^\star_n(Q_{X})}(X;V|Y) -2\epsilon \right)} \notag \\
 & \leq (n+1)^{|\mathcal{X}|} \max_{Q_X \in \mathcal{Q}_\mathcal{X}^n} 2^{-n \left(D(Q_X||P)+I_{P^\star_n(Q_{X})}(X;V|Y) -2\epsilon \right)}.
 \end{align}
Taking the limit as $n$ goes to infinity, and noting that $\epsilon$ is arbitrary, we get
\begin{align} \
& E^-(P,D,D_e)  = \notag \\
&  \liminf_{n \rightarrow \infty} \max_{\{f_n\}} \min_{\{g_n\}} -\frac{1}{n} \log  \Pr \Big(d_e \Big(X^n, g_n(f_n(X^n)) \Big) \leq  D_e \Big) \notag \\
& \geq \min_{Q} D(Q||P)+R(Q,D,D_e), \label{acheq}
\end{align}
where the last inequality follows from the following proposition, the proof of which is given in Appendix~\ref{proplimitproof}.
\begin{Proposition} \label{proplimit}
\begin{align*}
& \lim_{n \rightarrow \infty }  \min_{Q_X \in \mathcal{Q}_\mathcal{X}^n} [D(Q_X||P)+I_{P^\star_n(Q_{X})}(X;V|Y)] \\
& = \min_{Q} D(Q||P)+R(Q,D,D_e).
\end{align*}
\end{Proposition}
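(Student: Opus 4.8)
The plan is to compare the minimization over types against the minimization over the whole simplex by showing that $g_n(Q_X)\triangleq D(Q_X\|P)+I_{P^\star_n(Q_X)}(X;V|Y)$ converges, uniformly over $Q_X\in\mathcal{Q}_\mathcal{X}^n$, to $g(Q)\triangleq D(Q\|P)+R(Q,D,D_e)$. First note that $g$ is continuous on the compact simplex $\mathcal{P}_\mathcal{X}$: $D(\cdot\|P)$ is finite and continuous there since $P$ has full support (assumption (A2)), and $R(\cdot,D,D_e)$ is uniformly continuous in its first argument by (P2). Since $\bigcup_n\mathcal{Q}_\mathcal{X}^n$ is dense in $\mathcal{P}_\mathcal{X}$, this already gives $\min_{Q_X\in\mathcal{Q}_\mathcal{X}^n}g(Q_X)\to\min_{Q}g(Q)$, so the proposition reduces to
$$\lim_{n\to\infty}\ \sup_{Q_X\in\mathcal{Q}_\mathcal{X}^n}\bigl|\,I_{P^\star_n(Q_X)}(X;V|Y)-R(Q_X,D,D_e)\,\bigr|=0 .$$

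Recall that $I_{P^\star_n(Q_X)}(X;V|Y)=\max_{Q_{XY}\in\mathcal{Q}_{\mathcal{XY}}^n(Q_X,D)}r_n(Q_{XY})$, where $r_n(Q_{XY})\triangleq\min_{Q_{XYV}\in\mathcal{Q}^n(Q_{XY},D_e)}I_{Q_{XYV}}(X;V|Y)$, while $R(Q_X,D,D_e)=\max_{P_{Y|X}:\,\E[d(X,Y)]\le D}R(Q_X\times P_{Y|X},D_e)$. I would handle the inner and outer optimizations separately. For the inner one, the bound $r_n(Q_{XY})\ge R(Q_{XY},D_e)$ is immediate, since every feasible joint type in $\mathcal{Q}^n(Q_{XY},D_e)$ has $XY$-marginal $Q_{XY}$ and induces a test channel meeting the $D_e$-constraint; for the reverse bound, take a test channel that is $\epsilon$-optimal for $R(Q_{XY},D_e)$ and replace it by a conditional-type approximation relative to the (denominator-$n$) type $Q_{XY}$, observing that the weighting by the entries of $Q_{XY}$ makes the induced perturbations of $\E[d_e(X,V)]$ and of $I(X;V|Y)$ of order $\frac{\log n}{n}$, uniformly in $Q_{XY}$, by (P1) and the standard continuity estimates for entropy. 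The outer optimization is treated in the same spirit: every joint type in $\mathcal{Q}_{\mathcal{XY}}^n(Q_X,D)$ yields an admissible channel, so $\max_{Q_{XY}}R(Q_{XY},D_e)\le R(Q_X,D,D_e)$, while approximating an $\epsilon$-optimal $P_{Y|X}$ by a conditional type relative to $Q_X$ gives the matching lower bound up to $o(1)$, again uniformly via (P1). Combining the inner and outer estimates yields the displayed uniform limit.

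The main obstacle is carrying out these approximations \emph{without} an $\epsilon$-backoff in $D$ or $D_e$, since conditional-type rounding generically pushes $\E[d(X,Y)]$ (resp.\ $\E[d_e(X,V)]$) slightly above $D$ (resp.\ $D_e$), and one must moreover make the error bounds uniform in $Q_X$ and $Q_{XY}$. Here assumption (A3) is exactly what is needed: because $D\ge D_{\min}$ and $D_e\ge D_{e,\min}$, the channels that map each symbol to a distortion-minimizing output are always admissible, so one may first tilt the near-optimal channel a small amount toward such a channel --- producing strict slack in the active constraint --- and only then round, keeping the objective within $o(1)$ by uniform continuity of $I(X;V|Y)$ in its arguments. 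In the single boundary regime in which no slack is available, i.e.\ $D=\E_{Q_X}[\min_y d(X,y)]$ (or the analogous equality for $d_e$), every admissible channel is forced onto a common face of the simplex spanned by distortion-minimizing outputs; that face has rational vertices and hence contains conditional types densely, so the approximation can be performed inside it with the constraint satisfied by construction. Assembling these ingredients with the uniform continuity statements of Proposition~\ref{continuity} gives the required uniform convergence and completes the proof.
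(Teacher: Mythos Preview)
Your overall architecture matches the paper's: split into an ``inner'' approximation (showing $\min_{Q_{XYV}\in\mathcal{Q}^n(Q_{XY},D_e)}I(X;V|Y)\to R(Q_{XY},D_e)$ uniformly in $Q_{XY}$) and an ``outer'' approximation (showing $\max_{Q_{XY}\in\mathcal{Q}_{\mathcal{XY}}^n(Q_X,D)}R(Q_{XY},D_e)\to R(Q_X,D,D_e)$ uniformly in $Q_X$), then combine. The paper packages these as Propositions~\ref{propinnerminconv} and~\ref{propoutermaxconv}.

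Where you diverge is in how you defeat the ``no $\epsilon$-backoff'' obstacle. Your tilt-then-round strategy and separate boundary-face analysis are workable, but the paper's device is both simpler and avoids any case distinction: when rounding the optimal $P^\star_{V|XY}$ to a conditional type, round \emph{down} at every $v\neq v(x):=\argmin_v d_e(x,v)$ and dump the leftover mass onto $v(x)$. Because this only shifts probability toward the distortion-minimizing output, one has $\E_{P'}[d_e(X,V)]\le\E_{P^\star}[d_e(X,V)]\le D_e$ automatically, with no slack needed and no boundary regime to treat. The same asymmetric rounding (toward $y(x)=\argmin_y d(x,y)$) handles the outer approximation. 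This buys uniformity in $Q_X$ and $Q_{XY}$ for free, whereas in your scheme the slack created by a tilt of size $\epsilon$ is $\epsilon\bigl(D-\E_{Q_X}[\min_y d(X,y)]\bigr)$, which can be arbitrarily small for some $Q_X$ when $D=D_{\min}$, forcing you into the face argument you sketch. So your proof is correct in outline but the paper's one-line rounding trick replaces the most delicate part of your argument.
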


\subsection{Converse for the Primary User (Eavesdropper's Achievability Result)} \label{primaryconv}

Let $E^+(P,D,D_e)=$ \begin{align} 
 ~\limsup_{n \rightarrow \infty} \max_{\{f_n\}} \min_{\{g_n\}} 
 -\frac{1}{n}\log \Pr \left( d_e \Big( \! X^n,g_n(f_n(X^n))\! \Big) \! \leq \! D_e \right) \! . \label{defE+}
\end{align}  
We will now show that $E^+(P,D,D_e) \leq \min_{Q} D(Q||P) + R(Q,D,D_e)$. This means that the eavesdropper can achieve the exponent in \eqref{maineq} for any function $f$ the primary user implements.\\

We propose a two-stage scheme for the eavesdropper. In the first stage, observing $y^n$, s/he tries to guess the joint type of $x^n$ and $y^n$ by choosing an element uniformly at random from the set $\mathcal{Q}_{\mathcal{\mathcal{XY}}}^n(Q_{y^n},D)$, where
\begin{align}
\mathcal{Q}_{\mathcal{XY}}^n(Q_{Y},D) = \{P_{XY} \in \mathcal{Q}_{\mathcal{XY}}^n: & P_Y=Q_{Y}, \notag \\
&  \E_{P_{XY}}[d(X,Y)] \leq D \}.
\end{align}
The correct joint type must fall in this set since the restriction $d(X^n,Y^n) \leq D$ is imposed on each realization of $(X^n,Y^n)$. We  denote the function corresponding to this stage by  $g_1: \mathcal{Y}^n \rightarrow \mathcal{Q}_{XY}^n$.
\begin{Remark}
We differentiate between $\mathcal{Q}_{\mathcal{XY}}^n(Q_{Y},D)$ and $\mathcal{Q}_{\mathcal{XY}}^n(Q_{X},D)$ by their first argument. A summary of different notations is given in Table~\ref{tablesets}.
\end{Remark}
\begin{table}[t]
\caption{Summary of useful notation.} \label{tablesets}
\centering
\begin{tabular}{| c | c |}
\hline
Notation & Description \\
\hline
$R(P_{XY},D_e)$ & $\min_{ {P_{V|X,Y}:\E[d_e(X,V)] \leq D_e}} ~~I(X;V |Y)$ \\
\hline
$R(P_X,D,D_e)$ & $~\max_{ { P_{Y|X}:  \E[d(X,Y)] \leq D}} ~~R(P_{XY},D_e)$ \\
\hline
$B_{D_e}(v^n)$ & $x^n \in \mathcal{X}^n: d_e(x^n,v^n) \leq D_e$ \\
\hline
$\mathcal{Q}_{\mathcal{XY}}^n(Q_X,D)$ & $ P_{XY} \in \mathcal{Q}_{\mathcal{XY}}^n:  \!\! P_X=Q_X, ~ \E_{P_{XY}} [d(X,Y)]  \leq D.$ \\
\hline
$\mathcal{Q}_{\mathcal{XY}}^n(Q_{Y},D) $ & $ P_{XY} \in \mathcal{Q}_{\mathcal{XY}}^n: \!\!  P_Y=Q_{Y},~ \E_{P_{XY}}[d(X,Y)] \leq D. $ \\
\hline
$\mathcal{Q}^n(Q_{XY},D_e)$ & \begin{tabular}{c}
$ P_{XYV} \in \mathcal{Q}_{\mathcal{XYV}}^n: \!\! P_{XY}=Q_{XY},$ \\ $ \E_{P_{XYV}} [d_e(X,V)] \leq D_e.$ 
\end{tabular}  \\
\hline
\end{tabular}
\label{tableballout}
\end{table}

The eavesdropper then proceeds assuming $g_1(y^n)$ is the correct joint type. S/he randomly chooses a sequence from a set that covers $T_{Q_{X|Y}}(y^n)$. To this end, we associate with each joint type $Q_{XY}$ a joint type $Q_{XYV}$ from $\mathcal{Q}^n(Q_{XY},D_e)$ (cf. Table~\ref{tablesets}), and generate a sequence uniformly at random from $T_{Q_{V|Y}}(y^n),$ where $Q_{V|Y}$ is the conditional probability induced  by $Q_{XYV}$.  We denote the function corresponding to this stage by $g_2: \mathcal{Y}^n \times \mathcal{Q}_{XY}^n \rightarrow \mathcal{V}^n$. Thus, $g(y^n)=g_2(y^n,g_1(y^n))$.

\bigskip
\begin{Remark}
The above strategy does not depend on the specifics of the function $f$ implemented by the primary user, i.e., it only uses the fact that $d(X^n,f(X^n)) \leq D.$ It is also independent of the source statistics.
\end{Remark}
\bigskip

The following lemma lower-bounds the probability that $g_2(y^n,Q_{XY})$ generates a sequence $V^n$ satisfying $d_e(x^n,V^n) \leq D_e$, for a given pair $(x^n,y^n) \in T_{Q_{XY}},$ i.e., assuming the eavesdropper guesses the joint type correctly.
\bigskip
\begin{Lemma} \label{lemmaprob}
Given joint type $Q_{XYV} \in \mathcal{Q}_{XYV}^n$  and $(x^n,y^n) \in T_{Q_{XY}}$, if $V^n$ is chosen uniformly at random from $T_{Q_{V|Y}}(y^n)$, then $ \Pr \left( V^n \in T_{Q_{V|X}}(x^n) \right) \geq c_n 2^{-nI_{Q_{XYV}}(X;V|Y)},$ where $c_n = (n+1)^{ -|\mathcal{X}||\mathcal{Y}||\mathcal{V}|}$.
\end{Lemma}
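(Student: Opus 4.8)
The plan is to count sequences using the standard type-class size estimates and reduce the probability to a ratio of type-class cardinalities. Since $V^n$ is drawn uniformly from $T_{Q_{V|Y}}(y^n)$, we have
\[
\Pr\left(V^n \in T_{Q_{V|X}}(x^n)\right) = \frac{\left|T_{Q_{V|Y}}(y^n) \cap T_{Q_{V|X}}(x^n)\right|}{\left|T_{Q_{V|Y}}(y^n)\right|}.
\]
First I would observe that because $(x^n,y^n) \in T_{Q_{XY}}$, any $v^n$ with $(x^n,v^n) \in T_{Q_{XV}}$ and $(y^n,v^n) \in T_{Q_{YV}}$ that is jointly consistent with the fixed pair — i.e.\ $(x^n,y^n,v^n) \in T_{Q_{XYV}}$ — lies in the intersection; in fact the intersection $T_{Q_{V|Y}}(y^n) \cap T_{Q_{V|X}}(x^n)$ contains $T_{Q_{V|X,Y}}(x^n,y^n)$. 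So it suffices to lower bound $\left|T_{Q_{V|X,Y}}(x^n,y^n)\right| \big/ \left|T_{Q_{V|Y}}(y^n)\right|$.

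The main tool is the bound on conditional type-class sizes (Lemma~1.2.5 in~\cite{korner}): for a joint type $Q_{UW}$ and $u^n \in T_{Q_U}$,
\[
(n+1)^{-|\mathcal{U}||\mathcal{W}|}\, 2^{n H_{Q_{UW}}(W|U)} \leq \left|T_{Q_{W|U}}(u^n)\right| \leq 2^{n H_{Q_{UW}}(W|U)}.
\]
Applying the lower bound with conditioning variable $(X,Y)$ gives $\left|T_{Q_{V|X,Y}}(x^n,y^n)\right| \geq (n+1)^{-|\mathcal{X}||\mathcal{Y}||\mathcal{V}|}\, 2^{n H_{Q_{XYV}}(V|X,Y)}$, and applying the upper bound with conditioning variable $Y$ gives $\left|T_{Q_{V|Y}}(y^n)\right| \leq 2^{n H_{Q_{XYV}}(V|Y)}$. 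Dividing, the polynomial prefactor $c_n = (n+1)^{-|\mathcal{X}||\mathcal{Y}||\mathcal{V}|}$ survives and the exponent becomes $n\left(H_{Q_{XYV}}(V|X,Y) - H_{Q_{XYV}}(V|Y)\right) = -n\, I_{Q_{XYV}}(X;V|Y)$, which is exactly the claimed bound.

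The one point that requires a little care — and the only place a subtlety could arise — is verifying that the conditional type classes involved are nonempty so that the estimates apply, and that the chain of inclusions $T_{Q_{V|X,Y}}(x^n,y^n) \subseteq T_{Q_{V|Y}}(y^n) \cap T_{Q_{V|X}}(x^n)$ is valid. Nonemptiness is automatic: since $Q_{XYV} \in \mathcal{Q}_{XYV}^n$ and $(x^n,y^n) \in T_{Q_{XY}}$, the set $T_{Q_{V|X,Y}}(x^n,y^n)$ is nonempty (one can permute any representative of $T_{Q_{XYV}}$ to match the given pair), and similarly for the marginal conditional classes. For the inclusion, if $(x^n,y^n,v^n) \in T_{Q_{XYV}}$ then marginalizing coordinates shows $(x^n,v^n) \in T_{Q_{XV}}$ and $(y^n,v^n) \in T_{Q_{YV}}$, hence $v^n \in T_{Q_{V|X}}(x^n)$ and $v^n \in T_{Q_{V|Y}}(y^n)$. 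I do not anticipate any real obstacle here; the result is essentially an application of the conditional-type-size sandwich, and the ``hard part'' is merely bookkeeping of which variable plays the role of the conditioning block in each invocation.
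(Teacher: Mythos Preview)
Your proposal is correct and follows essentially the same argument as the paper: use the inclusion $T_{Q_{V|X,Y}}(x^n,y^n) \subseteq T_{Q_{V|Y}}(y^n) \cap T_{Q_{V|X}}(x^n)$ and then bound the resulting ratio of conditional type-class sizes via Lemma~1.2.5 of~\cite{korner}. The paper's write-up is slightly terser (it passes directly to $\Pr(V^n \in T_{Q_{V|X,Y}}(x^n,y^n))$ rather than first writing out the intersection), but the substance is identical.
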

\begin{proof}
\begin{align*}
\Pr \left( V^n \in T_{Q_{V|X}}(x^n) \right) & \geq \Pr \left( V^n \in T_{Q_{V|X,Y}}(x^n,y^n) \right) \\
& =\frac{\left|T_{Q_{V|X,Y}}(x^n,y^n)\right|}{\left|T_{Q_{V|Y}}(y^n)\right|} \\
& \geq  \frac{c_n 2^{nH(V|X,Y)}}{2^{nH(V|Y)}}=c_n 2^{-nI(X;V|Y)}.
\end{align*}
where the second inequality follows from Lemma 1.2.5 in~\cite{korner}.
\end{proof}
\bigskip

Since the eavesdropper is interested in maximizing this probability, s/he will associate, with each $Q_{XY}$, a joint type achieving the maximum:
\begin{equation} \label{Pmineav}
P^\star_n(Q_{XY}) \in \argmin_{P_{XYV} \in \mathcal{Q}^n(Q_{XY},D_e)} I(X;V|Y).
\end{equation}
Note that this is the same joint type achieving the maximum in~\eqref{eavesbound}.
\bigskip

We can now lower-bound the probability that $x^n \in B_{D_e}(g(y^n))$, for a given pair $(x^n,y^n)$ satisfying $d(x^n,y^n) \leq D$.
\bigskip
\begin{Lemma} \label{lemmaprobfinal}
Given  $(x^n,y^n) \in \mathcal{X}^n \times \mathcal{Y}^n $ satisfying $d(x^n,y^n) \leq D,$ 
$\Pr \left(x^n \in B_{D_e}\big(g(y^n)\big)\right) \geq c'_n 2^{-nI_{P^\star_n(Q_{XY})}(X;V|Y)}$, where $c'_n=(n+1)^{ -|\mathcal{X}||\mathcal{Y}|(|\mathcal{V}|+1)}$, $Q_{XY}=Q_{x^ny^n}$, and 
$g$ is as described above.
\end{Lemma}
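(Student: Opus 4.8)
The plan is to decompose the success event along the two stages of $g = g_2(\cdot,g_1(\cdot))$ and to lower-bound the probability that each stage does the ``right'' thing, where throughout I write $Q_{XY} = Q_{x^ny^n}$ for the true joint type.

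First I would show that the first stage guesses $Q_{XY}$ with probability at least $(n+1)^{-|\mathcal{X}||\mathcal{Y}|}$. Since $d(x^n,y^n) \leq D$ we have $\E_{Q_{XY}}[d(X,Y)] = d(x^n,y^n) \leq D$, and trivially $(Q_{XY})_Y = Q_{y^n}$, so $Q_{XY} \in \mathcal{Q}_{\mathcal{XY}}^n(Q_{y^n},D)$. As $g_1$ draws an element of this set uniformly at random and the set has at most $(n+1)^{|\mathcal{X}||\mathcal{Y}|}$ elements (the crude bound on the number of joint types), it follows that $\Pr\big(g_1(y^n) = Q_{XY}\big) \geq (n+1)^{-|\mathcal{X}||\mathcal{Y}|}$.

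Next, conditioned on $g_1(y^n) = Q_{XY}$, the second stage draws $V^n$ uniformly from $T_{Q_{V|Y}}(y^n)$, where $Q_{V|Y}$ is the conditional induced by the associated joint type $P^\star_n(Q_{XY}) \in \mathcal{Q}^n(Q_{XY},D_e)$ from \eqref{Pmineav}. Since $(x^n,y^n) \in T_{Q_{XY}}$, Lemma~\ref{lemmaprob} applies with the joint type $P^\star_n(Q_{XY})$ and yields $\Pr\big(V^n \in T_{Q_{V|X}}(x^n)\big) \geq c_n\, 2^{-nI_{P^\star_n(Q_{XY})}(X;V|Y)}$ with $c_n = (n+1)^{-|\mathcal{X}||\mathcal{Y}||\mathcal{V}|}$.

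Finally I would observe that these two events together force a successful guess: if $V^n \in T_{Q_{V|X}}(x^n)$, then the joint type of $(x^n,V^n)$ is exactly the $(X,V)$-marginal of $P^\star_n(Q_{XY})$, so $d_e(x^n,V^n) = \E_{P^\star_n(Q_{XY})}[d_e(X,V)] \leq D_e$, i.e.\ $x^n \in B_{D_e}(V^n) = B_{D_e}(g(y^n))$. Multiplying the two lower bounds, using that the randomness of the two stages is independent, gives $\Pr\big(x^n \in B_{D_e}(g(y^n))\big) \geq (n+1)^{-|\mathcal{X}||\mathcal{Y}|}\, c_n\, 2^{-nI_{P^\star_n(Q_{XY})}(X;V|Y)} = c'_n\, 2^{-nI_{P^\star_n(Q_{XY})}(X;V|Y)}$, which is the claim. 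There is no serious obstacle; the only points needing slight care are checking that the true joint type lies in the stage-one candidate set $\mathcal{Q}_{\mathcal{XY}}^n(Q_{y^n},D)$, and translating ``$V^n$ has the right conditional empirical type'' into the per-sequence distortion bound $d_e(x^n,V^n) \leq D_e$.
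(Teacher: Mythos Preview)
Your proposal is correct and follows essentially the same approach as the paper's proof: decompose over the two stages, lower-bound the probability of guessing the correct joint type by $(n+1)^{-|\mathcal{X}||\mathcal{Y}|}$, and then invoke Lemma~\ref{lemmaprob} on the second stage. You even make explicit two points the paper leaves implicit---that $Q_{XY}\in\mathcal{Q}_{\mathcal{XY}}^n(Q_{y^n},D)$ and that $V^n\in T_{Q_{V|X}}(x^n)$ forces $d_e(x^n,V^n)\leq D_e$---so your write-up is, if anything, slightly more complete.
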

\begin{proof}
\begin{align*}
& \Pr(x^n \in B_{D_e}(g(y^n))) \\
& =  \sum_{Q'_{XY} \in \mathcal{Q}_{XY}^n(Q_{y^n},D)} p(g_1(y^n)=Q'_{XY}) \cdot \\
& \qquad \qquad \qquad \qquad \quad ~~ p \left( x^n \in B_{D_e} \Big( g_2(y^n,Q'_{XY} ) \Big) \right) \\
& \geq p(g_1(y^n)=Q_{x^n y^n}) p \left( x^n \in B_{D_e} \Big( g_2(y^n,Q_{x^n y^n} ) \Big) \right) \\
& \geq (n+1)^{-|\mathcal{X}||\mathcal{Y}|}  p \left( x^n \in B_{D_e} \Big( g_2(y^n,Q_{XY} ) \Big) \right) \\
&  \geq (n+1)^{ -|\mathcal{X}||\mathcal{Y}|(|\mathcal{V}|+1)} 2^{-nI_{P^\star_n(Q_{XY})}(X;V|Y)},
\end{align*}
where the last inequality follows from Lemma~\ref{lemmaprob}.
\end{proof}

We now show that the above described scheme indeed achieves the exponent in \eqref{maineq}. Consider any possibly random function $f$ implemented by the primary user (and satisfying the distortion constraint), and denote by $P_f$ the induced joint probability on $(X^n,Y^n)$.  Now, consider the following chain of inequalities.
\begin{align} \label{mainchainconv}
 & \Pr \Big(d_e \Big(X^n, g(f(X^n)) \Big) \leq  D_e \Big) \notag \\
& = \sum_{x^n \in \mathcal{X}^n} \sum_{y^n \in \mathcal{Y}^n}  P(x^n)P_f (y^n | x^n) p \left( x^n \in B_{D_e} \big(g(y^n)\big) \right) \notag \\
& \stackrel{\text{(a)}} \geq c'_n \sum_{x^n \in \mathcal{X}^n} \sum_{y^n \in \mathcal{Y}^n}  P(x^n)P_f (y^n | x^n)2^{-nI_{P^\star_n(Q_{x^ny^n})}(X;V|Y) } \notag \\
&  \geq    c'_n 
 \sum_{x^n \in \mathcal{X}^n}   P(x^n) \sum_{y^n \in \mathcal{Y}^n} P_f (y^n | x^n) \cdot \notag \\
 &  \qquad \qquad \qquad \qquad  \min_{Q_{XY} \in \mathcal{Q}_{\mathcal{XY}}^n(Q_{x^n},D)}   2^{-nI_{P^\star_n(Q_{XY})}(X;V|Y) } \notag \\
& = c'_n \sum_{Q_{X} \in \mathcal{Q}_\mathcal{X}^n} \!\! \sum_{x^n \in T_{Q_X}}  P(x^n)   \cdot \notag \\
& \qquad \qquad \qquad \qquad \min_{ Q_{XY} \in  \mathcal{Q}_{\mathcal{XY}}^n(Q_X,D)  }  2^{-nI_{P^\star_n(Q_{XY})}(X;V|Y) }  \notag \\
& \stackrel{\text{(b)}} = c'_n \sum_{Q_{X} \in \mathcal{Q}_\mathcal{X}^n} \sum_{x^n \in T_{Q_X}} 2^{-n(D(Q_X||P)+H_{Q_X}(X)) } \cdot \notag \\
& \qquad \qquad \qquad \qquad \qquad  2^{-nI_{P^\star_n(Q_{X})}(X;V|Y) }  \notag \\
& \stackrel{\text{(c)}} \geq c'_n (n+1)^{-|\mathcal{X}|} \sum_{Q_{X} \in \mathcal{Q}_\mathcal{X}^n}  2^{-n \left(D(Q_X||P)+ I_{P^\star_n(Q_{X})}(X;V|Y) \right)}  \notag \\
& \geq c'_n (n+1)^{-|\mathcal{X}|} \max_{Q_{X} \in \mathcal{Q}_\mathcal{X}^n}  2^{-n \left(D(Q_X||P)+ I_{P^\star_n(Q_{X})}(X;V|Y) \right)},  
\end{align} 
where
\begin{enumerate}
\item[(a)] follows from Lemma~\ref{lemmaprobfinal}.
\item[(b)] follows from \eqref{Pmineav} and \eqref{Imax}. 
\item[(c)] follows from Lemma 1.2.3 in~\cite{korner} $ \left(|T_{Q_X}| \geq (n+1)^{-|\mathcal{X}|} 2^{nH_{Q_X}(X)} \right).$
\end{enumerate}

Taking the limit as $n$ goes to infinity,  we get
\begin{align} 
 &E^+(P,D,D_e)  \notag \\
 &= \limsup_{n \rightarrow \infty} \max_{\{f_n\}} \min_{\{g_n\}} -\frac{1}{n} \log  \Pr \Big( \! d_e \! \Big(X^n, g_n(f_n(X^n)) \Big) \!\! \leq \!  D_e \! \Big) \notag \\
& \leq \min_{Q} D(Q||P)+R(Q,D,D_e), \label{conveq}
\end{align}
where the last inequality follows from Proposition~\ref{proplimit}.

Combining~\eqref{conveq} and~\eqref{acheq} yields that the limit in~\eqref{defE} exists and is equal to the expression given in~\eqref{maineq}, thus establishing Theorem~\ref{MainThm}.

\section{Lossy Communication for the Shannon Cipher System} \label{commonkey}

We now consider the setup of the Shannon cipher system with lossy communication. More precisely, the transmitter is subject to a rate constraint, and the transmitter and legitimate receiver share common randomness $K \in \mathcal{K}=\{0,1\}^{nr}$, where $r >0$ denotes the rate of the key. $K$ is uniformly distributed over $\mathcal{K}$ and is independent of $X^n$. The transmitter sends a message $M=f(X^n,K)$ to the receiver over a noiseless channel at rate $R$, i.e., $M \in \mathcal{M} = \{0,1\}^{nR}$. The receiver, then, generates $Y^n=h(M,K)$. Both functions $f$ and $h$ are allowed to be stochastic, but must satisfy $\Pr \left( d (X^n, Y^n) > D \right) \leq 2^{-n \alpha}$, for a given reliability exponent $\alpha > 0$.

The message $M$ is overheard by the eavesdropper who knows the statistics of the source and the encoding and decoding functions $f$ and $h$. However, s/he does not have access to the common randomness $K$.

As before, the relevant secrecy metric is the probability of a successful guess, i.e., a guess $V^n=g(M)$ satisfying $d_e(X^n,V^n) \leq  D_e$. The optimal guess is determined, again, by the MAP rule $g_o$.

We assume (A1)-(A3) (given in Section~\ref{nokey}) hold throughout this section. We further assume\footnote{For the primary user's problem to be feasible, it is necessary to have $R \geq \max_{Q:D(Q||P) \leq \alpha} R(Q,D)$. }
\begin{enumerate}
\item[(A4)] $R > R_\alpha := \max_{Q: D(Q||P) \leq \alpha} R(Q,D)$. 
\end{enumerate}  
Let $\overrightarrow{D}=(D,D_e)$ and $\overrightarrow{R}=(R,r)$. For a given DMS $P$, distortion vector $\overrightarrow{D}$, rate vector $\overrightarrow{R}$, and reliability exponent $\alpha$, we denote the optimal exponent by $E(P,\overrightarrow{D},\overrightarrow{R},\alpha)$, i.e.,
\begin{align} 
& E(P,\overrightarrow{D},\overrightarrow{R},\alpha)  =  \notag \\
& \lim_{n \rightarrow \infty} \max_{\{f_n\}} \min_{\{g_n\}} -\frac{1}{n}\log \Pr \Big( d_e \Big(X^n,g_n(f_n(X^n,K)) \Big) \! \leq \! D_e \! \Big), \label{defEr}
\end{align}
where $\{f_n\}$ is restricted to the class of functions ensuring the feasibility of the primary user's problem. Similarly to~\eqref{defEr}, we define $E^-(P,\overrightarrow{D},\overrightarrow{R},\alpha)$ and $E^+(P,\overrightarrow{D},\overrightarrow{R},\alpha)$ using the $\liminf$ and $\limsup$, respectively.

\bigskip
We extend the definition of $R(P_X,D,D_e)$ to account for the rate constraint as follows. For a given distribution $P_X$ satisfying $R(P_X,D) \leq R$,
\begin{align}
\label{eqdefRPRDDe}
R(P_X,R,D,D_e) =  \max_{ \substack{ P_{Y|X}: \\ \E[d(X,Y)] \leq D \\ I(X;Y) \leq R
} } R(P_{XY},D_e).
\end{align}
Extending the properties of $R(P_X,D,D_e)$, we prove the following properties of $R(P_X,R,D,D_e)$ in Appendix~\ref{AppendixpropcontinuityR}. 
\begin{Proposition} \label{propcontinuityR} 
In the following statements, $D \geq D_{\min}$, $D_e \geq D_{e,\min}$, and a given pair $(P_X,R)$ satisfy $R \geq R(P_X,D)$.
\begin{enumerate}
\item[(P4)] For fixed $P_X$, $R(P_X,R,D,D_e)$ is a finite-valued function of $(R,D,D_e)$. Moreover, for fixed $D_e$, $R(P_X,R,D,D_e)$ is continuous in the triple $(P_X,R,D)$ over the set $ \mathcal{S} = \{ (P_X,R,D) : P_X \in \mathcal{P}_{\mathcal{X}}, D \geq D_{\min},R > R(P_X,D) \} $.
\item[(P5)] $R_e(P_X,D_e) - R(P_X,D) \leq R(P_X,R,D,D_e) \leq R(P_X,D,D_e) \leq R_e(P_X,D_e)$.
\end{enumerate}
 
\end{Proposition}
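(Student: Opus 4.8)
The plan is to obtain (P5) quickly from the already-established no-rate bound (P3), and then prove (P4) by viewing \eqref{eqdefRPRDDe} as a parametric maximization whose value is continuous, with the hypothesis $R > R(P_X,D)$ playing the role of a Slater condition.

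\textbf{(P5) and finiteness.} The middle inequality $R(P_X,R,D,D_e) \leq R(P_X,D,D_e)$ is immediate, since adding the constraint $I(X;Y) \leq R$ only shrinks the feasible set in \eqref{eqdefRPRDDe}; the rightmost inequality $R(P_X,D,D_e) \leq R_e(P_X,D_e)$ is exactly (P3). For the lower bound, pick a channel $P^{\circ}_{Y|X}$ attaining the rate-distortion function at level $D$, so $\E[d(X,Y)] \leq D$ and $I(X;Y) = R(P_X,D) \leq R$; hence $P^{\circ}_{Y|X}$ is feasible in \eqref{eqdefRPRDDe}. For every $P_{V|X,Y}$ with $\E[d_e(X,V)] \leq D_e$, the chain rule and monotonicity of mutual information give $I(X;V|Y) = I(X;V,Y) - I(X;Y) \geq I(X;V) - R(P_X,D) \geq R_e(P_X,D_e) - R(P_X,D)$, the last step because the induced $V$-marginal satisfies the distortion constraint. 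Minimizing over $P_{V|X,Y}$ yields $R\big(P_XP^{\circ}_{Y|X},D_e\big) \geq R_e(P_X,D_e) - R(P_X,D)$, hence $R(P_X,R,D,D_e) \geq R_e(P_X,D_e) - R(P_X,D)$; this is the argument already used for the lower bound of (P3), plus the single observation that the rate-distortion-optimal channel respects the rate budget whenever $R \geq R(P_X,D)$. In particular $0 \leq R(P_X,R,D,D_e) \leq R_e(P_X,D_e) < \infty$, which gives the finiteness assertion of (P4).

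\textbf{Continuity in (P4).} The objective $(P_X,P_{Y|X}) \mapsto R\big(P_XP_{Y|X},D_e\big)$ is jointly continuous, by the uniform continuity of $R(\cdot,D_e)$ from (P1) composed with the continuous map $(P_X,P_{Y|X}) \mapsto P_XP_{Y|X}$. The feasible correspondence $\mathcal{F}(P_X,R,D) = \{P_{Y|X} : \E_{P_X}[d(X,Y)] \leq D,\ I_{P_X}(X;Y) \leq R\}$ is compact-valued, being a closed subset of the compact simplex of channels (the map $P_{Y|X} \mapsto \E_{P_X}[d(X,Y)]$ is affine and $P_{Y|X} \mapsto I_{P_X}(X;Y)$ is continuous), and it is nonempty on $\mathcal{S}$ because it contains the rate-distortion-optimal channel. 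By Berge's maximum theorem it therefore suffices to show $\mathcal{F}$ is continuous on $\mathcal{S}$. Upper hemicontinuity is routine, following from the closedness of the two inequality constraints under limits together with compactness of the ambient simplex. Lower hemicontinuity is the crux: given $(P_X,R,D) \in \mathcal{S}$, a channel $P^{*}_{Y|X} \in \mathcal{F}(P_X,R,D)$, and a sequence $\big(P_X^{(k)},R^{(k)},D^{(k)}\big) \to (P_X,R,D)$ in $\mathcal{S}$, we must produce $P^{(k)}_{Y|X} \in \mathcal{F}\big(P_X^{(k)},R^{(k)},D^{(k)}\big)$ with $P^{(k)}_{Y|X} \to P^{*}_{Y|X}$. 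Since $R > R(P_X,D)$ and $R(\cdot,\cdot)$ is continuous, the rate slack $\delta_k := R^{(k)} - R\big(P_X^{(k)},D^{(k)}\big)$ stays bounded away from $0$; mixing a vanishing fraction $\lambda_k$ of the rate-distortion-optimal channel for $\big(P_X^{(k)},D^{(k)}\big)$ into $P^{*}_{Y|X}$ produces channels converging to $P^{*}_{Y|X}$ that meet the rate budget $R^{(k)}$ for all large $k$, using convexity of $I_{P_X^{(k)}}(X;Y)$ in the channel and $I_{P_X^{(k)}}(X;Y^{*}) \to I_{P_X}(X;Y^{*}) \leq R$.

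\textbf{Main obstacle.} The delicate point is keeping the perturbed channels \emph{distortion}-feasible, since $P^{*}_{Y|X}$ may have distortion exactly $D$ under $P_X$ while $D^{(k)}$ tends to $D$ from below. When $\E_{P_X}[d(X,Y^{*})] < D$ this is automatic for large $k$. Otherwise one also mixes in a vanishing fraction of the greedy minimum-distortion channel $G(y|x) = \mathbf{1}\{y = \psi(x)\}$ with $\psi(x) \in \argmin_y d(x,y)$, which satisfies $\E_Q[d(X,\psi(X))] \leq D_{\min} \leq D^{(k)}$ for \emph{every} input distribution $Q$; this already suffices when $D > D_{\min}$ or $\sum_x P_X(x)\min_y d(x,y) < D_{\min}$. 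The only remaining configuration, $\E_{P_X}[d(X,Y^{*})] = D = D_{\min}$, forces $P^{*}_{Y|X}$ to be supported on $\argmin_y d(x,y)$ for each $x$ in the support of $P_X$; redefining $P^{*}_{Y|X}$ off that support to equal $G$ leaves the joint law $P_XP^{*}_{Y|X}$ (hence the objective value) unchanged while making the distortion constraint hold at every $\big(P_X^{(k)},D^{(k)}\big)$, after which only the rate mixing of the previous paragraph is needed. The whole continuity argument parallels that of (P2), with the rate constraint absorbed through the Slater condition $R > R(P_X,D)$; I expect the careful bookkeeping of the mixing fractions $\lambda_k$ (and the greedy fraction), rather than any conceptual difficulty, to be the bulk of the work.
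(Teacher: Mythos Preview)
Your proof of (P5) coincides with the paper's: both note that the rate-distortion-optimal channel $P^{\circ}_{Y|X}$ has $I(X;Y)=R(P_X,D)\leq R$ and is therefore feasible in \eqref{eqdefRPRDDe}, reducing the lower bound to the chain of inequalities already used for (P3).

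For (P4) your argument is correct, but the route differs from the paper's. You invoke Berge's maximum theorem and establish lower hemicontinuity of the feasible correspondence by an explicit mixing construction: blend $P^{*}_{Y|X}$ with the rate-distortion-optimal channel to create rate slack, and with the greedy channel $G$ to create distortion slack. The paper proceeds instead in two steps. First it proves one-variable continuity (in $R$ alone and in $D$ alone) via its general parametric-optimization lemmas, Propositions~\ref{propcontinuity} and~\ref{propcontinuityD}. Then, for joint continuity, it backs off to some $R'<R$ and $D'<D$ chosen so that $R(P_X,R',D',D_e)$ is within $\epsilon$ of $R(P_X,R,D,D_e)$; the maximizer at $(P_X,R',D')$ is then \emph{strictly} feasible in both constraints and hence automatically feasible for $(P_X^{(k)},R^{(k)},D^{(k)})$ for all large $k$. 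Both routes hit the same delicate boundary case (your $D=D_{\min}$ with $\sum_x P_X(x)\min_y d(x,y)=D_{\min}$ is the paper's $D=D_0$), and both resolve it by the identical off-support redefinition you describe. Your approach is more standard from the optimization viewpoint and needs no auxiliary propositions; the paper's back-off trick trades those propositions for avoiding the mixing-fraction bookkeeping you flagged as the bulk of the work. One practical point if you carry your argument through: do the distortion fix first (greedy mixing or off-support redefinition), then mix the result with the rate-distortion-optimal channel for $(P_X^{(k)},D^{(k)})$; that way both ingredients of the second-stage mixture are already distortion-feasible, so convexity preserves distortion feasibility and only the rate needs balancing.
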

\bigskip

The main result is given by the following theorem.
\begin{Theorem} \label{Thmwithkey}
Under assumptions (A1)-(A4), for any DMS $P$, distortion functions $d$ and $d_e$ with associated distortion levels $D \geq D_{\min}$ and $D_e \geq D_{e,\min}$, corresponding respectively to the primary user and the eavesdropper, and reliability exponent $\alpha$:
\begin{align}
& E(P,\overrightarrow{D},\overrightarrow{R},\alpha) = \notag \\
& \min \! \Big\lbrace \! E_0(P,D_e),   r  + \!\!\! \min_{Q: D(Q||P) \leq \alpha} \!\! D(Q||P) \! + \! R(Q,R,D,D_e) \Big\rbrace. \label{eqthmwithkey}
\end{align}
\end{Theorem}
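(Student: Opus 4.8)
The plan is to establish Theorem~\ref{Thmwithkey} by proving matching achievability (converse for the eavesdropper) and converse (achievability for the eavesdropper) bounds, reusing the machinery of Theorem~\ref{MainThm} as a black box wherever possible. Observe first that the claimed exponent is a minimum of two quantities: the perfect secrecy exponent $E_0(P,D_e)$ from~\eqref{eqperfectsecrecyexp}, which bounds what \emph{any} scheme can achieve (the eavesdropper can always make a blind guess ignoring $M$), and the term $r + \min_{Q:D(Q||P)\le\alpha} D(Q||P) + R(Q,R,D,D_e)$, which reflects the interplay of the key rate $r$, the rate constraint $R$, and the reliability exponent $\alpha$. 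The structure of the proof should mirror this decomposition: one direction shows the primary user can guarantee \emph{both} exponents simultaneously; the other shows the eavesdropper can always beat \emph{at least one} of them.

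For achievability (eavesdropper's converse), I would have the transmitter operate type-by-type as in Section~\ref{primaryachieve}, but now: (i) discard source sequences in types $Q$ with $D(Q||P) > \alpha$ (mapping all of them to a single dummy message $M$), which is permissible precisely because their total probability is $\le 2^{-n\alpha}$ up to polynomial factors, consistent with the reliability constraint; (ii) for each surviving type $Q_X$, use assumption (A4) together with $D(Q_X||P)\le\alpha$ to pick a conditional law $Q_{Y|X}$ achieving the max in~\eqref{eqdefRPRDDe}, so that $I(X;Y)\le R$ and the rate constraint $M\in\{0,1\}^{nR}$ is met; (iii) independently generate a large family of random rate-distortion codes for each type and use the $nr$ key bits to select which code indexes $X^n$ --- this is the "separate key-based randomization" the introduction advertises. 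The eavesdropper's best strategy is then one of two: blindly guess (yielding exponent $\ge E_0(P,D_e)$ against this scheme, since the scheme satisfies the distortion constraint) or guess the key (costing a factor $2^{-nr}$) and then run the genie-aided MAP analysis of~\eqref{mainchain1}. Combining, the probability of a successful guess is at most roughly $2^{-nE_0(P,D_e)} + 2^{-n(r + \min_Q D(Q||P)+I_{P_n^\star(Q_X)}(X;V|Y))}$, and taking logs, limits, and invoking Proposition~\ref{proplimit} (suitably restricted to $\{Q:D(Q||P)\le\alpha\}$ and with the rate constraint) gives the lower bound on $E$.

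For the converse (eavesdropper's achievability), the eavesdropper randomizes between the two attacks described above with, say, probability $1/2$ each. Against any feasible $f$, the blind guess succeeds with probability at least $2^{-n(E_0(P,D_e)+o(n))}$ by an argument parallel to Section~\ref{primaryconv} applied with trivial side information. For the key-guessing attack: the eavesdropper picks a key value uniformly at random (success probability $\ge 2^{-nr}$), thereby reconstructing $Y^n=h(M,K)$, which satisfies $d(X^n,Y^n)\le D$ except on the $2^{-n\alpha}$-probability failure event, and also satisfies $I(X^n;Y^n)\lesssim R$ type-wise because $M$ has only $2^{nR}$ values; then s/he applies the two-stage scheme of Section~\ref{primaryconv} (guess the joint type, pick from a covering set). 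The lower bound from Lemma~\ref{lemmaprobfinal}, now restricted to joint types whose $X$-marginal $Q$ satisfies $D(Q||P)\le\alpha$ (types outside contribute negligibly to the source probability anyway) and whose induced $I(X;Y)$ respects $R$, yields success probability $\ge 2^{-n(r+\min_{Q:D(Q||P)\le\alpha} D(Q||P)+R(Q,R,D,D_e)+o(n))}$. The eavesdropper thus achieves the max of the two success probabilities, i.e.\ the min of the two exponents, for \emph{every} feasible $f$.

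The main obstacle I anticipate is the converse bookkeeping around the rate constraint $R$ and the reliability exponent $\alpha$ acting together. In the no-key setting the transmitter's output $Y^n$ always satisfied $d(X^n,Y^n)\le D$ surely, so every joint type lay in $\mathcal{Q}_{\mathcal{XY}}^n(Q_X,D)$; here $Y^n=h(M,K)$ may violate the distortion constraint with probability up to $2^{-n\alpha}$, so one must argue that on the complement the relevant joint types satisfy \emph{both} the distortion \emph{and} (approximately, up to $\epsilon$-slack) the $I(X;Y)\le R$ constraint, which requires a careful counting argument bounding $|h(\cdot,k)^{-1}(y^n)\cap T_{Q_X}|$ by $2^{n(H_{Q_X}(X) - I(X;Y) + \epsilon)}$ uniformly, and then matching this against the definition~\eqref{eqdefRPRDDe}. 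Getting the continuity/limit statement to go through will lean on Proposition~\ref{propcontinuityR}, particularly the care needed because $R > R(P_X,D)$ is a \emph{strict} inequality on an open set $\mathcal{S}$ --- one must ensure the optimizing $Q$ in the limit does not sit on the boundary where $R(Q,D)=R$, which is where the $\epsilon$-backoff arguments and (A4) are essential. The perfect-secrecy term $E_0(P,D_e)$, by contrast, should be comparatively routine since it is exactly the $r=0$, $R=\infty$ specialization already handled by Theorem~\ref{MainThm}.
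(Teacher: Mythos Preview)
Your converse (eavesdropper's achievability) is essentially the paper's argument: blind guess for the $E_0$ term, uniform key guess followed by the two-stage scheme of Section~\ref{primaryconv} for the other term. You also correctly identify the delicate point, namely forcing the empirical $I(X;Y)$ to respect the rate constraint $R$; the paper handles this via Lemma~\ref{LemmaWeissman} (Weissman--Ordentlich), which is exactly the counting argument you sketch.

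The achievability direction, however, has a genuine gap. You describe the right \emph{scheme} (discard rare types, pick $Q_{Y|X}$ achieving~\eqref{eqdefRPRDDe}, generate $2^{nr}$ random codebooks, key-select among them), but your \emph{analysis} --- ``the eavesdropper's best strategy is then one of two: blindly guess \dots\ or guess the key'' --- is precisely the statement that needs to be proved, not assumed. Achievability must upper-bound the performance of the \emph{arbitrary} (MAP) eavesdropper rule $g_o(m)$, and there is no a priori reason the optimal $v^n$ behaves like one of your two named attacks; it could exploit partial information about the key through $M$. The paper closes this gap with Lemma~\ref{lemmaachievability}: part~2 shows that for each fixed codebook $\mathcal{C}_k^n\notin\mathcal{E}$ and any $v^n,m$, $\Pr(d_e(X^n,v^n)\le D_e\mid M=m,\mathcal{C}_k^n)\le 2^{-n(R(Q_{XY},D_e)-4\epsilon)}$, and that the \emph{expectation} of this quantity over the random codebook is $\le 2^{-n(R_e(Q_X,D_e)-4\epsilon)}$; part~3 then writes $\Pr(d_e(X^n,v^n)\le D_e\mid M=m)$ as an average over $K$ of $2^{nr}$ such (conditionally i.i.d.) terms and applies a Chernoff bound to concentrate the sum, yielding the $\min\{R_e(Q_X,D_e),\,r+R(Q_{XY},D_e)\}$ exponent uniformly over $v^n$ and $m$ with doubly-exponentially small failure probability. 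This concentration step (borrowed from Schieler--Cuff) is the substantive idea your sketch is missing; without it, the bound ``$2^{-nE_0}+2^{-n(r+\cdots)}$'' you wrote is not an upper bound on the MAP rule's success probability.
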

\bigskip
\begin{Remark}
The minimization over $Q$ is due to the imposition of an exponentially decaying probability of violating the distortion constraint. If we replace it instead by $\Pr \left( d(X^n,Y^n) > D \right) \leq \delta$, for some small $\delta$, then the second term of~\eqref{eqthmwithkey} would collapse to $r+R(P,R,D,D_e)$. 
\end{Remark}
\begin{Remark}
We can recover Theorem~\ref{MainThm} by setting $\alpha=+\infty$, $r=0$, and $R= \log |\mathcal{Y}|$. Weinberger and Merhav's result~\cite[Theorem 1]{merhav2015probmetric} can also be recovered by noting that the leniency assumption implies $R(Q,R,D,D_e)=0$ for all $Q$. Moreover,
for any $D_e > \min_{v \in \mathcal{V}} \E_{P} [d(X,v)]$ and $r > 0$, $E(P,\overrightarrow{D},\overrightarrow{R},\alpha)  >0$. Indeed, the first condition implies $R_e(P,D_e) > 0$, hence $E_0(P,D_e) > 0$. This refines Schieler and Cuff's observation~\cite{cuff2013secrecycausal} that any positive key rate drives the distortion at the eavesdropper to its maximal expected value with high probability. 
\end{Remark}
\bigskip

A straightforward but useful corollary of Theorem~\ref{Thmwithkey} is a necessary and sufficient condition on the key rate for the achievability of the perfect secrecy exponent. In particular,
\begin{align}
& E(P,\overrightarrow{D},\overrightarrow{R},\alpha) = E_0(P,D_e) \text{ if and only if } \notag \\
 & r \geq E_0(P,D_e) - \min_{Q: D(Q||P) \leq \alpha} D(Q||P) + R(Q,R,D,D_e). 
\label{eqCorollary}
\end{align}
Let $r_0$ be the minimum rate needed to achieve $E_0(P,D_e)$. The condition in~\eqref{eqCorollary} is interesting in that it allows $r_0$ to be \emph{strictly} less than $E_0(P,D_e)$, which itself satisfies $E_0(P,D_e) \leq R_e(P,D_e)$. 
\begin{Remark}
One might suspect that $r \geq \max_{Q:D(Q||P)\leq \alpha} R(Q,D)$ is sufficient to achieve $E_0(P,D_e)$, since we can use good rate-distortion codes for each type and the number of available keys is large enough to completely ``hide'' the source sequence within a type class. This is, indeed, true as it implies the condition in~\eqref{eqCorollary}:
\begin{align*}
  D(Q||P)+R_e(Q,D_e) - D(Q||P)-R_e(Q,D_e) + R(Q,D)  \\  = R(Q,D), \\
 \Rightarrow  \min_{Q'} [D(Q'||P)+R_e(Q',D_e)] - D(Q||P)-R_e(Q,D_e) \\  \qquad + R(Q,D)  \leq R(Q,D), \\
 \text{By Property (P5): } \min_{Q'} [D(Q'||P)+R_e(Q',D_e)]  - D(Q||P) \\  \qquad -R(Q,R,D,D_e)  \leq R(Q,D), \\
 \Rightarrow  E_0(P,D_e) + \max_{Q:D(Q||P)\leq \alpha}[-D(Q||P)-R(Q,R,D,D_e)]   \\
 \leq \max_{Q:D(Q||P)\leq \alpha} R(Q,D).
\end{align*}
\end{Remark}
\bigskip

The converse of Theorem~\ref{Thmwithkey} is based on the following analysis. To achieve the second exponent in~\eqref{eqthmwithkey}, the eavesdropper tries to guess the value of the key and then applies the scheme suggested in the previous section. Taking into consideration the rate constraint, the term $R(Q,D,D_e)$ which appears in~\eqref{maineq} is replaced by $R(Q,R,D,D_e)$. Also, taking into account the modified distortion constraint, the minimization over all $Q$'s which appears in~\eqref{maineq} is replaced by a minimization over $Q$'s satisfying $D(Q||P) \leq \alpha$.
The first exponent is the perfect secrecy exponent (given in~\eqref{eqperfectsecrecyexp}), which  the eavesdropper can achieve even in the absence of any observation. The fact that one of these two schemes achieves the optimal exponent implies that the eavesdropper does not benefit from guessing only \emph{part} of the key. Either s/he guesses the entire key correctly and proceeds, or s/he makes a completely blind guess. Interestingly, a similar observation has been made by Schieler and Cuff~\cite{cuff2014henchman} in the context of minimum expected distortion over a list.

To describe the achievability result, it is helpful to rewrite~\eqref{eqthmwithkey} as:
\begin{align} 
& E(P,\overrightarrow{D},\overrightarrow{R},\alpha) = \min \Big\lbrace \min_{Q: D(Q||P) \geq \alpha} \!\! D(Q||P) + R_e(Q,D_e), \notag \\
& \! \min_{Q: D(Q||P) \leq \alpha} \!\!\!\!\! D(Q||P) \! + \! \min\{r \! + \! R(Q,R,D,D_e),R_e(Q,D_e)\} \!  \Big\rbrace.  \label{eqthmwithkeyrewrite}
\end{align}
The primary user will operate as follows. For low-probability types $Q$, particularly $Q$'s with $D(Q||P) > \alpha$, the transmitter will send a dummy message. This is feasible because we allowed some probability of violating the distortion constraint. For such $Q$'s, the eavesdropper receives no information. Therefore, the guessing exponent conditioned on $T_Q$ is given by $R_e(Q,D_e)$, yielding the second term of~\eqref{eqthmwithkeyrewrite}. For $Q$'s satisfying $D(Q||P) \leq \alpha$, let $E(\overrightarrow{D},\overrightarrow{R},Q) = \min\{r+R(Q,R,D,D_e),~R_e(Q,D)\}$. This can be understood as the exponent conditioned on $X^n \in T_Q$. For each such $Q$, we associate a joint type induced by a $P_{Y|X}$ that achieves the maximum in~\eqref{eqdefRPRDDe}. Similarly to Section~\ref{primaryachieve}, we use this joint type to generate a rate-distortion code. This roughly corresponds to the term $R(Q,R,D,D_e)$.
To take advantage of the secret key, we in fact produce $2^{nr}$ such codes, and use the key to randomize the choice of the code, yielding the additional $r$ term. Since the eavesdropper can always guess blindly and achieve the exponent $R_e(Q,D)$, we get $ \min\{r+R(Q,R,D,D_e),~R_e(Q,D)\}$. We will show in Lemma~\ref{lemmaachievability} that such random construction fails to achieve the desired exponent with only \emph{doubly exponentially} small probability. \\

As mentioned, the code construction for each type depends on the conditional $P_{Y|X}$ achieving the maximum in~\eqref{eqdefRPRDDe}. A natural question arises: under what conditions does the optimal test channel (which we will denote by $P^\star_{Y|X}$) achieve that max? One can readily verify that this holds when $R(Q,R,D,D_e)=0$ (e.g., the eavesdropper's constraint is more lenient than that of the legitimate receiver). 
We further investigate this question by considering special cases of Theorem~\ref{Thmwithkey}.

\subsection{Applications of Theorem~\ref{Thmwithkey}}
In the following, assume $\alpha=+\infty$. Hence, $R > \max_{Q} R(Q,D)$.

\subsubsection{Perfect Reconstruction at the Eavesdropper}

Suppose $\mathcal{V} = \mathcal{X}$, and the eavesdropper is required to reconstruct the source sequence perfectly, i.e., the secrecy metric is $\Pr (V^n = X^n)$. In our formulation, this is equivalent to setting  $d_e$ to be the Hamming distance and $D_e$ to 0. Then, for each $Q$, we get
\begin{align*}
R(Q,R,D,0) & = \max_{ \substack{ P_{Y|X}: \\ \E[d(X,Y)] \leq D \\ I(X;Y) \leq R }} R(P_{XY},0) \\
& = \!\! \max_{ \substack{ P_{Y|X}: \\ \E[d(X,Y)] \leq D \\ I(X;Y) \leq R}} \!\! H(X|Y) = H_Q(X) - R(Q,D).
\end{align*}
Note that the maximum is achieved by the optimal test channel, and the exponent is given by 
\begin{align*}
E(P,\overrightarrow{D},\overrightarrow{R}) = \min_Q & D(Q||P) + \\
&  \min \{r + H_Q(X) - R(Q,D), ~H_Q(X) \},
\end{align*}
where we have used the equivalent form~\eqref{eqthmwithkeyrewrite}. Note that, in contrast to $R(Q,R,D,D_e)=0$, this case corresponds to a more lenient constraint at the legitimate receiver, which leads us to our next example.
\bigskip

\subsubsection{Binary Source with Hamming Distortion and $D_e \leq D$} Suppose $\mathcal{X}=\mathcal{Y}=\mathcal{V}=\{0,1\}$, $d$ and $d_e$ are both the Hamming distance, and $D_e \leq D < 1/2$. 
 We prove the following lemma in Appendix~\ref{lemmahammingproof}.
\begin{Lemma} \label{lemmahamming}
If $D_e \leq D < 1/2$,
\begin{align*}
& R(Q,D,D_e) \\
& =  R_e(Q,D_e)-R(Q,D) \\
&  = \begin{cases}
0, &  H(Q) \leq H(D_e), \\
H(Q) - H(D_e),&  H(D_e) \leq H(Q) \leq H(D), \\
H(D)-H(D_e), &  H(Q) \geq H(D).
\end{cases}
\end{align*}
\end{Lemma}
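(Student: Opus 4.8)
\emph{Reduction.} The plan is to peel off the piecewise formula and reduce everything to the single identity $R(Q,D,D_e)=R_e(Q,D_e)-R(Q,D)$. Since the Hamming distortions are invariant under relabelling $0\leftrightarrow 1$, I may assume $q:=Q(\{1\})\le 1/2$, so that $H(Q)=H(q)$. Recall the classical binary Hamming rate--distortion formula $R(Q,\delta)=[H(q)-H(\delta)]^{+}$, and likewise $R_e(Q,\delta)=[H(q)-H(\delta)]^{+}$, valid for $\delta\le 1/2$. Because $D_e\le D<1/2$ and $H$ is increasing on $[0,1/2]$, we have $H(D_e)\le H(D)$; substituting the two formulas into $R_e(Q,D_e)-R(Q,D)$ and distinguishing the three cases $H(q)\le H(D_e)$, $H(D_e)\le H(q)\le H(D)$, $H(q)\ge H(D)$ produces exactly the three expressions claimed. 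Thus it remains to prove $R(Q,D,D_e)=R_e(Q,D_e)-R(Q,D)$. The inequality ``$\ge$'' is immediate from (P3) of Proposition~\ref{continuity}. For ``$\le$'': if $H(q)\le H(D)$ then $R(Q,D)=0$ and the claim is $R(Q,D,D_e)\le R_e(Q,D_e)$, which is the other half of (P3); so assume $H(q)>H(D)$, i.e.\ $q>D$, where $R(Q,D)=H(q)-H(D)$ and the target becomes $R(Q,D,D_e)\le H(D)-H(D_e)$. By the definition~\eqref{Rmaxmin} of $R(Q,D,D_e)$ as a maximum of $R(Q_{XY},D_e)$ over binary $Y$ with $\E[d(X,Y)]\le D$, it suffices to show $R(Q_{XY},D_e)\le H(D)-H(D_e)$ for every such $Y$.

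\emph{The converse bound for a fixed $Y$.} Fix $Y$ with $p:=\Pr(X\ne Y)\le D$. For each $y$ put $\tilde q_y:=\min\{\Pr(X{=}0\mid Y{=}y),\Pr(X{=}1\mid Y{=}y)\}$ (the Bayes error given $Y=y$), so $\bar p:=\sum_y\Pr(Y{=}y)\,\tilde q_y\le p\le D$. If $p\le D_e$ take $V=Y$ (cost $0$); if $\bar p\le D_e$ take $V$ to be the per-$y$ Bayes estimator (cost $0$); in either case $R(Q_{XY},D_e)=0\le H(D)-H(D_e)$, so assume $D_e<\bar p$. Since $X\mid Y{=}y\sim\mathrm{Ber}(q_y)$ has Hamming rate--distortion function $[H(\tilde q_y)-H(\delta)]^{+}$, building $V$ conditionally from the optimal test channels for the sources $X\mid Y{=}y$ at levels $\delta_y$ (with $\sum_y\Pr(Y{=}y)\delta_y\le D_e$) gives, via $I(X;V|Y)=\sum_y\Pr(Y{=}y)I(X;V\mid Y{=}y)$, the bound $R(Q_{XY},D_e)\le\sum_y\Pr(Y{=}y)\,[H(\tilde q_y)-H(\delta_y)]^{+}$. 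I choose the water-filling allocation $\delta_y=\min(\tilde q_y,\lambda)$, where $\lambda\in[D_e,1/2)$ is picked by continuity so that $\sum_y\Pr(Y{=}y)\min(\tilde q_y,\lambda)=D_e$ (note $\lambda\ge D_e$ since $\min(\tilde q_y,\lambda)\le\lambda$, and $\lambda<1/2$ since the sum at $\lambda=1/2$ equals $\bar p>D_e$). Setting $\pi:=\Pr(\tilde q_Y>\lambda)$ and $s:=\E[\,\tilde q_Y-\lambda\mid \tilde q_Y>\lambda\,]$, the constraint reads $\pi s=\bar p-D_e$, and $\lambda+s=\E[\tilde q_Y\mid \tilde q_Y>\lambda]\le 1/2$. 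Then
\begin{align*}
R(Q_{XY},D_e)
 &\le \pi\left(H(\lambda+s)-H(\lambda)\right) \\
 &\le \pi\left(H(D_e+s)-H(D_e)\right) \\
 &\le H(D_e+\pi s)-H(D_e) \\
 &= H(\bar p)-H(D_e) \ \le\ H(D)-H(D_e),
\end{align*}
where the first inequality is Jensen applied to the concave map $t\mapsto H(\lambda+t)$ on the event $\{\tilde q_Y>\lambda\}$; the second is monotonicity of secant slopes of the concave function $H$ (the interval $[D_e,D_e+s]$ lies to the left of $[\lambda,\lambda+s]$ because $\lambda\ge D_e$); the third is a two-point Jensen with weights $\pi,1-\pi$ at $D_e+s$ and $D_e$; and the last uses $\bar p\le D\le 1/2$ with $H$ increasing on $[0,1/2]$. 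Maximizing over $Y$ yields $R(Q,D,D_e)\le H(D)-H(D_e)$, which with the ``$\ge$'' direction gives the identity and hence the lemma.

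\emph{Main obstacle.} The only genuinely delicate point is the converse bound $R(Q_{XY},D_e)\le H(D)-H(D_e)$. A uniform distortion allocation $\delta_y\equiv D_e$ is \emph{not} good enough --- it can exceed $H(D)-H(D_e)$ when some $\tilde q_y$ are much smaller than others (those $y$ waste budget) --- so one must use the water-filling allocation and then extract the bound from the three-inequality concavity chain above, in which each step exploits a different consequence of the concavity of the binary entropy. The rest (the reduction to the identity, the binary rate--distortion formula, the ``$\ge$'' direction via (P3), and the case $H(q)\le H(D)$) is routine.
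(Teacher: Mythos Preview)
Your proof is correct, but it takes a genuinely different route from the paper's. The reductions (second equality from the binary rate--distortion formula, the ``$\ge$'' direction and the case $H(q)\le H(D)$ from (P3)) match the paper. For the remaining bound $R(Q_{XY},D_e)\le H(D)-H(D_e)$, the paper exploits the translation invariance of Hamming distortion via the substitution $\tilde X=X\oplus Y$, $\tilde V=V\oplus Y$: since $d(X,V)=d(\tilde X,\tilde V)$, one may restrict to test channels with $\tilde V-\tilde X-Y$ Markov, giving $I(X;V\mid Y)=I(\tilde X;\tilde V\mid Y)\le I(\tilde X;\tilde V)$, and then the ordinary binary rate--distortion formula for $\tilde X\sim\mathrm{Ber}(p)$ with $p=\Pr(X\ne Y)\le D$ yields $[H(p)-H(D_e)]^{+}\le H(D)-H(D_e)$ in one line. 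Your approach instead decomposes $I(X;V\mid Y)$ per value of $Y$, chooses a water-filling distortion allocation $\delta_y=\min(\tilde q_y,\lambda)$, and then runs a three-step concavity chain (conditional Jensen, decreasing secants of $H$, two-point Jensen) to reach the same bound. The XOR argument is shorter and reveals why the Hamming case is special; your argument is heavier but more ``bare-hands'' in that it never invokes the algebraic group structure, only the concavity and monotonicity of the binary entropy, and it makes explicit the obstruction you note---that a uniform allocation fails---which the paper's change of variables hides.
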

 It follows from property (P5) of Proposition~\ref{propcontinuityR} that
\begin{align*}
R(Q,D,D_e) & =  R_e(Q,D_e)-R(Q,D) \\ \Rightarrow R(Q,R,D,D_e) & = R_e(Q,D_e)-R(Q,D).
\end{align*}
Therefore, the exponent is given by:
\begin{align*}
E(P,\overrightarrow{D},\overrightarrow{R}) = \min\{E_1,E_2,E_3\},
\end{align*}
where 
\begin{align*}
E_1 & = \min \Big\lbrace D(Q||P):  \quad H(Q) \leq H(D_e)\Big\rbrace, \\
E_2 & = \min \Big\lbrace D(Q||P) + H(Q)-H(D_e): \\
& \qquad \qquad \qquad \qquad  H(D_e) \leq H(Q) \leq H(D)\Big\rbrace, \intertext{and} 
E_3 & = \min \Big\lbrace D(Q||P) + \\ 
& 
 \qquad \min\{r+H(D)-H(D_e), H(Q)-H(D_e)\}: \\
 & \qquad \qquad   H(Q) \geq H(D) \Big\rbrace.
\end{align*}
If $X \sim$ Ber($1/2$), then $D(Q||P)= 1-H(Q)$, and the minima corresponding to the first two cases reduce to $1-H(D_e)$. The third minimum can be computed as follows:
\begin{align*}
& \min_{\substack{Q: \\ H(Q) \geq H(D)}} 1-H(D_e) + \min \{r+H(D)-H(Q), 0\}  \\
& \quad = 1-H(D_e) + \min \{r+H(D)-1,0\}. 
\end{align*}
Therefore, 
\begin{equation*}
E(P,\overrightarrow{D},\overrightarrow{R}) = \begin{cases}
1-H(D_e), & r \geq 1-H(D),\\
r+H(D)-H(D_e), & r < 1-H(D).
\end{cases}
\end{equation*} 

The resulting expression when $r < 1-H(D)$ admits a simple geometric explanation, shown in Figure~\ref{fighammingballs} below. 
\begin{figure}[htp]
\centering
\includegraphics[scale=0.6]{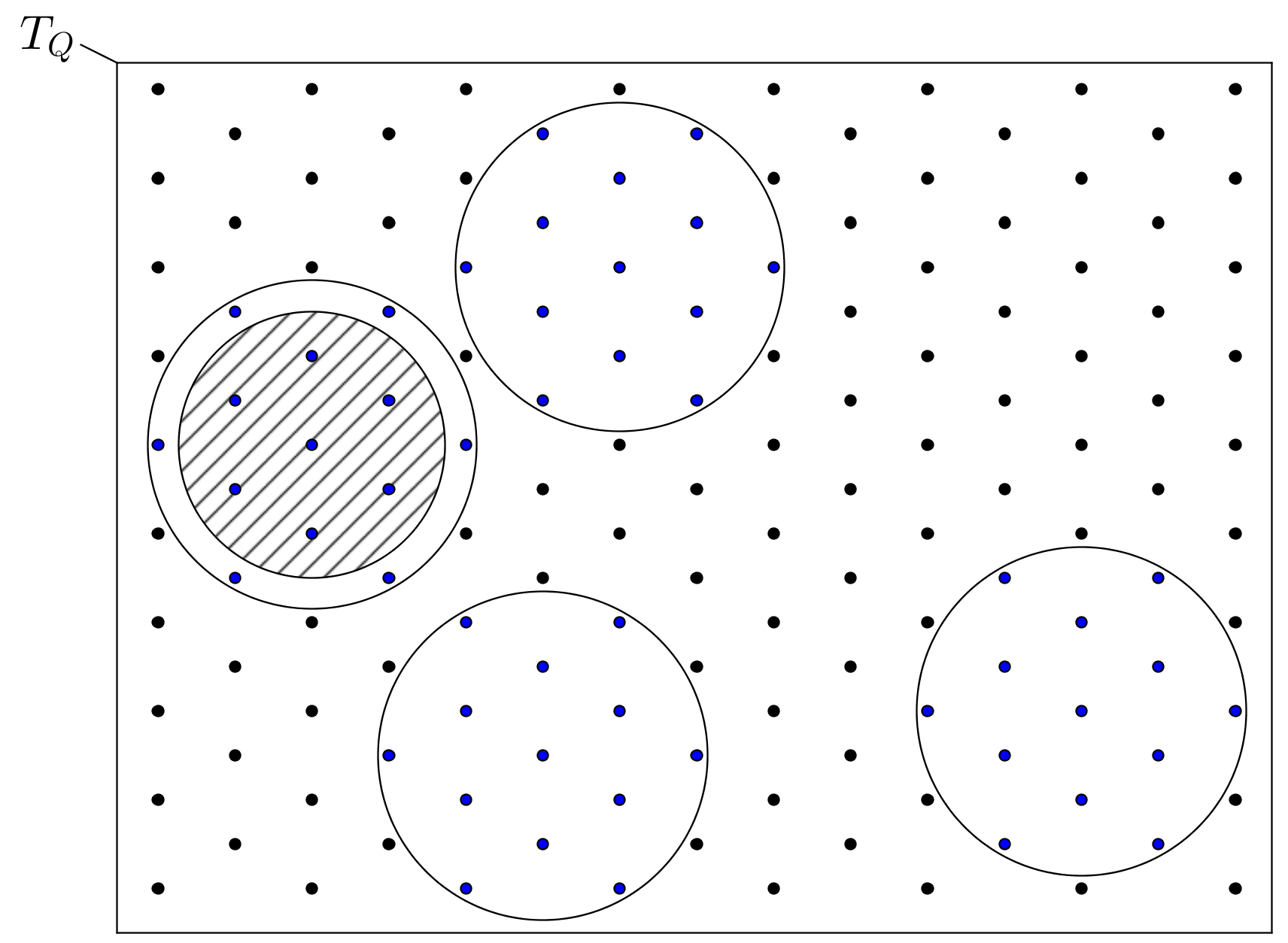}
\caption{The dots represent sequences in a type class $T_Q$. Each of the $2^{nr}$ non-dashed circles represents a Hamming-distortion ball of radius $D$, corresponding to a possible reconstruction at the legitimate receiver. Thus, dots within the circle (in blue) represent candidate source sequences. The dashed circle represents the distortion ball of radius $D_e$ around the eavesdropper's reconstruction, and it fits entirely in a non-dashed circle.} \label{fighammingballs}
\end{figure}
Upon observing the public message, the candidate source sequences are clustered into $2^{nr}$ balls. Each ball corresponds to a possible value of the key $K$, and has volume $2^{nH(D)}$ since it is the pre-image of a possible reconstruction at the legitimate receiver. For the eavesdropper, the maximum volume of the ball that s/he can generate to ``engulf'' candidate sequences is $2^{nH(D_e)}$. Due to the structure of Hamming distortion, this maximally-sized ball can fit entirely into any one of the clusters, so that the probability of a successful guess is $2^{nH(D_e)}2^{-n(r+H(D))}$. 
Note that the geometric interpretation assumed that we are using good rate-distortion codes (to get pre-images of volume $2^{nH(D)}$). The described structure is also reminiscent of successive refinement~\cite{SuccessiveRefinement}. These can be explained by the following lemma.
\begin{Lemma} \label{lemmasuccessive}
If $R(Q,D,D_e) = R_e(Q,D_e)-R(Q,D)$, then the optimal test channel $P^\star_{Y|X}$ achieves the maximum in~\eqref{Rmaxmin}. Moreover, $Q$ is successively refinable from $D$ to $D_e$.
\end{Lemma}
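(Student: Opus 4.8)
The plan is to exploit the hypothesis $R(Q,D,D_e) = R_e(Q,D_e) - R(Q,D)$ together with property (P3) of Proposition~\ref{continuity}, which says that $R_e(Q,D_e) - R(Q,D) \le R(Q,D,D_e) \le R_e(Q,D_e)$ always holds. The hypothesis forces the \emph{lower} bound in (P3) to be tight. First I would fix a test channel $P^\star_{Y|X}$ that achieves the rate-distortion function $R(Q,D)$ at distortion level $D$, and denote by $P^\star_{XY}$ the induced joint law. The claim that $P^\star_{Y|X}$ achieves the maximum in~\eqref{Rmaxmin} is equivalent to showing $R(P^\star_{XY},D_e) \ge R(Q,D,D_e)$, since $R(Q,D,D_e)$ is by definition the maximum of $R(P_{XY},D_e)$ over all feasible $P_{Y|X}$, so any single feasible channel gives a value $\le R(Q,D,D_e)$, and it suffices to exhibit one attaining it.

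The key step is a converse-style inequality showing $R(P^\star_{XY},D_e) \ge R_e(Q,D_e) - R(Q,D)$ for \emph{this particular} optimal test channel. I would argue as follows: take any $P_{V|X,Y}$ feasible for the conditional rate-distortion problem in~\eqref{Rsideinfo} (i.e. $\E[d_e(X,V)] \le D_e$ under $P^\star_{XY}P_{V|X,Y}$). Then the composite channel $X \to (Y,V)$, or more simply the marginal channel $P_{V|X}$ induced by $P^\star_{Y|X}P_{V|X,Y}$, satisfies $\E[d_e(X,V)] \le D_e$, hence $I(X;V) \ge R_e(Q,D_e)$. Now apply the chain rule and data processing: $I(X;Y,V) = I(X;Y) + I(X;V|Y)$, while also $I(X;Y,V) \ge I(X;V) \ge R_e(Q,D_e)$; combined with $I(X;Y) = R(Q,D)$ (optimality of $P^\star_{Y|X}$), this gives $I(X;V|Y) \ge R_e(Q,D_e) - R(Q,D)$. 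Taking the infimum over feasible $P_{V|X,Y}$ yields $R(P^\star_{XY},D_e) \ge R_e(Q,D_e) - R(Q,D) = R(Q,D,D_e)$, as desired. Combined with $R(P^\star_{XY},D_e) \le R(Q,D,D_e)$ from the definition, we get equality, so $P^\star_{Y|X}$ achieves the max in~\eqref{Rmaxmin}.

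For successive refinability from $D$ to $D_e$: recall $Q$ is successively refinable (in the sense of~\cite{SuccessiveRefinement}) if there exists a Markov chain $X \to V \to Y$ with $\E[d(X,Y)] \le D$, $\E[d_e(X,V)] \le D_e$, $I(X;Y) = R(Q,D)$, and $I(X;V) = R_e(Q,D_e)$. I would build $V$ as the optimal reconstruction at level $D_e$ and $Y$ as a degraded version. The equality case analyzed above should pin down the joint distribution: tracing through when $I(X;Y,V) = I(X;V)$ holds with equality, we need $I(Y;V|X)$-type slack to vanish, which forces the Markov structure $X \to V \to Y$. Concretely, I would start from the $V$ achieving $R_e(Q,D_e)$ and show that the $Y$ obtained by an appropriate channel $P_{Y|V}$ (not depending on $X$) simultaneously meets the distortion constraint $D$ and the rate $R(Q,D)$ — the existence of such a $P_{Y|V}$ is exactly what the tightness of the bound guarantees. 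The main obstacle I anticipate is this last extraction of the explicit Markov chain: the chain-rule argument above shows the scalar equality, but converting it into the existence of a \emph{channel} $P_{Y|V}$ that is independent of $X$ requires carefully identifying which conditional independence is forced by equality in data processing, and then checking that the resulting $Y$ still satisfies $\E[d(X,Y)] \le D$ rather than merely achieving the right mutual information. I would handle this by examining the KKT/support conditions of the optimal test channels, or alternatively by a direct information-theoretic identity: equality in $I(X;Y,V) = I(X;V)$ means $I(X;Y|V) = 0$, i.e. exactly $X \to V \to Y$, and then the distortion and rate bookkeeping follows from the definitions.
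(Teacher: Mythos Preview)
Your proposal is correct and essentially follows the paper's own argument. The paper's proof also proceeds by revisiting the derivation of the lower bound in (P3): the hypothesis forces both inequalities in that derivation to be equalities, from which one reads off that $P^\star_{Y|X}$ attains the max and that the conditional-rate-distortion minimizer $P_{V|XY}$ satisfies $H(X|V,Y)=H(X|V)$, i.e.\ $I(X;Y|V)=0$, giving the Markov chain $X\to V\to Y$; then $I(X;V)=I(X;Y)+I(X;V|Y)=R(Q,D)+R(Q,D,D_e)=R_e(Q,D_e)$.

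The only difference is cosmetic: the paper phrases the equality-tracking in terms of conditional entropies and introduces two separate minimizers (one for each line of the (P3) proof) to run a sandwich argument, whereas your chain-rule formulation $I(X;Y,V)=I(X;Y)+I(X;V|Y)\ge I(X;V)\ge R_e(Q,D_e)$ is a bit more direct and avoids that bookkeeping. Your detour in the last paragraph (starting from the unconditional $V$ and trying to build a channel $P_{Y|V}$, or invoking KKT conditions) is unnecessary---as you yourself note at the end, the ``alternative'' you sketch is the right one: take the $V$ that minimizes $R(P^\star_{XY},D_e)$, observe that the equality $I(X;Y,V)=I(X;V)$ is forced, hence $I(X;Y|V)=0$, and the distortion constraints $\E[d(X,Y)]\le D$, $\E[d_e(X,V)]\le D_e$ are already in place from the construction. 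No support analysis is needed.
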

\begin{IEEEproof}
Consider the proof of the lower bound in (P3) of Proposition~\ref{continuity}. $R(Q,D,D_e) = R_e(Q,D_e)-R(Q,D)$ implies that~\eqref{eqp3proof1} is an equality. Hence, $P^\star_{Y|X}$ achieves the maximum. Moreover,~\eqref{eqp3proof2} becomes an equality. Let $P^{(1)}_{V|XY}$ be the minimizer in~\eqref{eqp3proof1}, and $P^{(2)}_{V|XY}$ the minimizer in~\eqref{eqp3proof2}. Then, $H_{P^\star_{XY} P^{(1)}_{V|XY}} (X|V) \leq H_{P^\star_{XY} P^{(2)}_{V|XY}} (X|V) = H_{P^\star_{XY} P^{(1)}_{V|XY}} (X|V,Y) \leq H_{P^\star_{XY} P^{(1)}_{V|XY}} (X|V).$ Therefore, $P^{(1)}_{V|XY}$ satisfies $\E[d_e(X,V)] \leq D_e$ and  $H_{P^\star_{XY} P^{(1)}_{V|XY}} (X|V,Y) = H_{Q P^{(1)}_{V|X}} (X|V)$, i.e. the Markov chain $X-V-Y$ holds. Finally, note that $I(X;V)=I(X;(V,Y))=I(X;Y)+I(X;V|Y)=R_e(Q,D_e)$, implying that $Q$ is successively refinable from $D$ to $D_e$.
\end{IEEEproof}

\subsection{Achievability Proof} \label{Secachkey}

We show that $E^-(P,\overrightarrow{D},\overrightarrow{R},\alpha) \geq \min \{ E_0(P,D), ~ r  + \min_{Q: D(Q||P) \leq \alpha} D(Q||P) + R(Q,R,D,D_e) \}$ by demonstrating an encoding-decoding strategy for the primary user that achieves the given exponent.

As before, the primary user will operate on the source sequences on a type-by-type basis. The result is driven by the following lemma, which is based on the analysis of Schieler and Cuff~\cite{cuff2014henchman} and the proof of which is given in Appendix~\ref{mainlemmaproof}.

\bigskip

\begin{Lemma} \label{lemmaachievability}
Let $\epsilon \! > \!  0, n  \! \in \!  \mathbb{N}, Q_{XY} \!  \in  \! \mathcal{Q}_{\mathcal{XY}}^n$ be given.  Let $N$ be an integer such that $2^{n(I_{Q_{XY}}(X;Y)+2\epsilon/3)} \! \leq \! N \! \leq \! 2^{n(I_{Q_{XY}}(X;Y)+\epsilon)} $.  Generate a code $\mathcal{C}^n=(Y_1^n,Y_2^n,\dots,Y_N^n)$ by choosing $N$ elements independently and uniformly at random from $T_{Q_Y}$. \begin{enumerate}
\item {\bf Covering:} For $x^n \in T_{Q_X}$, define
\begin{align}
\label{eqdefNxn}
\mathcal{C}(x^n) & =  \{ m \in [N]: (x^n,Y_m^n) \in T_{Q_{XY}} \}, \\
 N_{x^n} & = |\mathcal{C}(x^n)|, 
\end{align} and the event
\begin{align}
\mathcal{E} = \big\lbrace \mathcal{C}^n: & \text{ there exists } x^n \in T_{Q_X} \text{ such that } N_{x^n} = 0 \notag \\
& \text{ or } N_{x^n} > 2^{2n\epsilon} \big\rbrace. \label{eqdefEv} 
\end{align}
Then, there exists $n_1(\epsilon,|\mathcal{X}|,|\mathcal{Y}|)$ (independent of $Q_{XY}$) such that, for all $n \geq n_1$,
\begin{align}
\label{eqlemmaach1}
\Pr(\mathcal{E}) \leq e^{-2^{n \epsilon/7}}.
\end{align}
\item {\bf Guessing---single code:} Suppose $X^n \sim \mathrm{Unif}(T_{Q_X})$ and $\mathcal{C}^n \notin \mathcal{E}$. Let $P_{M|X^n}^{\mathcal{C}}$ be as follows. Given $x^n$, $M$ is chosen uniformly at random from $\mathcal{C}(x^n)$. Then, for all $n \geq n_1$, for all $v^n \in \mathcal{V}^n$ and all $m \in [N]$,
\begin{align}
\label{eqlemmaach2}
\Pr(d_e(X^n,v^n) \! \leq \! D_e |M \!\! = \! m,\mathcal{C}^n) \! \leq  \! 2^{-n( \! R(Q_{XY},D_e \!) - 4\epsilon)},
\end{align}
and \begin{align}
 \E [ \Pr(d_e(X^n,v^n) \leq D_e |M=m,\mathcal{C}^n) | \mathcal{E}^c ] \leq \notag \\
 2^{-n(R_e(Q_X,D_e)-4 \epsilon)}, \label{eqlemmaach3}
\end{align} where the probabilities are computed with respected to the randomness in $P_{M|X^n}^\mathcal{C}$, and the expectation with respect to the distribution of the code $\mathcal{C}^n$.
\item {\bf Guessing---multiple codes:}  Let $K$ be uniform over $\mathcal{K}=[2^{nr}]$, $r>0$, and independent of $X^n$ (which is uniform over $T_{Q_X}$). For each $k \in [2^{nr}]$, generate $\mathcal{C}^n_k$ as described above. Define $P_{M|X^n,K}^{\{\mathcal{C}^n_k \}_k}$ as follows. Given $k$, if $C_k \in \mathcal{E}$, then $M$ is chosen uniformly at random from $[N]$ independently of $X^n$. If  $\mathcal{C}^n_k \notin \mathcal{E}$, then $M$ is chosen uniformly at random from $\mathcal{C}_k(X^n)$. 
Let, 
\begin{align}
\tilde{\mathcal{E}} & =   \bigg\{ \{   \mathcal{C}_k \}_{k=1}^{2^{nr}}:    \text{ there exists } m \in [N] \text{ such that}  \notag \\
& \left. \max_{v^n \in \mathcal{V}^n} \Pr \left(d_e(X^n,v^n) \leq D_e |M=m, \{\mathcal{C}_k^n\}_{k=1}^{2^{nr}} \right) > \right. \notag \\
& \left.  2^{-n (\min \{ R_e(Q_X,D_e), r+R(Q_{XY},D_e)\}-8\epsilon)} \right\rbrace,  \label{eqdefEvg}
\end{align}  where the probability is computed with respect to $P_{M|X^n,K}^{\{ \mathcal{C}^n_k\}_k}$. Then, for all $n \geq n_1$,
\begin{align}
\label{eqlemmaach4}
\Pr ( \tilde{\mathcal{E}} ) \leq  e^{-2^{n\epsilon/9}},
\end{align}
where the probability is computed with respected to the distributions of the codes $\{\mathcal{C}^n_k\}_k$. \hfill $\square$
\end{enumerate}
\end{Lemma}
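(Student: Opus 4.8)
My overall plan is to treat the three parts sequentially, since each builds on the previous one, and in each part the core tool is a concentration/union-bound argument over the random choice of codewords.

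For the \textbf{Covering} part, I would fix $x^n \in T_{Q_X}$ and note that $N_{x^n}$ is a sum of $N$ i.i.d.\ Bernoulli random variables, each equal to $1$ with probability $p = |T_{Q_{X|Y}}(\cdot)|\,/\,|T_{Q_Y}|$; by the standard type-size estimates this probability is $2^{-n(I_{Q_{XY}}(X;Y)\pm o(n))}$, so $\E[N_{x^n}] = N p$ lies between roughly $2^{-n\epsilon/3}$ and $2^{n\epsilon}$. (Actually, to make the lower-bound direction work I would want $Np \ge 2^{n\epsilon/2}$ or so, which is why the hypothesis forces $N \ge 2^{n(I+2\epsilon/3)}$; I should double-check the precise exponent bookkeeping here.) Then I apply a Chernoff bound in both directions: the probability that $N_{x^n}=0$ is at most $e^{-\E[N_{x^n}]}$, which is doubly-exponentially small, and the probability that $N_{x^n} > 2^{2n\epsilon}$ is also doubly-exponentially small because $2^{2n\epsilon}$ exceeds $\E[N_{x^n}]$ by an exponential factor. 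A union bound over the at most $2^{nH_{Q_X}(X)} \le 2^{n\log|\mathcal{X}|}$ sequences $x^n \in T_{Q_X}$ is absorbed into the doubly-exponential decay, giving \eqref{eqlemmaach1} for $n$ large enough (the threshold $n_1$ depending only on $\epsilon,|\mathcal{X}|,|\mathcal{Y}|$ since all type-size estimates are uniform in $Q_{XY}$).

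For the \textbf{Guessing---single code} part, conditioning on $\mathcal{C}^n \notin \mathcal{E}$, I would fix $m$ and $v^n$ and write $\Pr(d_e(X^n,v^n)\le D_e \mid M=m,\mathcal{C}^n)$ as $|\{x^n : m \in \mathcal{C}(x^n),\ d_e(x^n,v^n)\le D_e\}|$ divided by a normalization. The key observation is that, given $M=m$, the conditional distribution of $X^n$ is supported on $\{x^n : Y_m^n \text{ covers } x^n\} = T_{Q_{X|Y}}(Y_m^n)$ with weights proportional to $1/N_{x^n} \ge 2^{-2n\epsilon}$; so the conditional probability is at most $2^{2n\epsilon}$ times $|B_{D_e}(v^n)\cap T_{Q_{X|Y}}(Y_m^n)| \,/\, |T_{Q_{X|Y}}(Y_m^n)|$. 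The numerator was already bounded in the converse analysis of Section~\ref{primaryachieve} (the chain \eqref{eavesbound1}) by $2^{n(H_{P^\star_n(Q_{XY})}(X|V,Y)+\epsilon)}$ up to polynomial factors, and the denominator is $2^{n(H_{Q_{XY}}(X|Y)-o(n))}$; the exponent of the ratio is $-I_{P^\star_n(Q_{XY})}(X;V|Y) = -R(Q_{XY},D_e)$ up to $O(\epsilon)$, which gives \eqref{eqlemmaach2}. For \eqref{eqlemmaach3}, since $v^n$ and $m$ are fixed and $Y_m^n$ is uniform on $T_{Q_Y}$, I take the expectation over the code and bound $\E|B_{D_e}(v^n)\cap T_{Q_{X|Y}}(Y_m^n)|$ directly; averaging over $Y_m^n$ effectively replaces the conditioning on the joint type by the marginal type, so the ratio's exponent becomes $-R_e(Q_X,D_e)$ (the unconditional rate-distortion exponent), again up to $O(\epsilon)$ slack, yielding \eqref{eqlemmaach3}.

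For the \textbf{Guessing---multiple codes} part, the point is that a fixed $m \in [N]$ now indexes a reconstruction only after the key $k$ is also specified, and the eavesdropper sees only $M$. I would write $\Pr(d_e(X^n,v^n)\le D_e \mid M=m, \{\mathcal{C}^n_k\}_k)$ as an average over $k$ (weighted by $\Pr(K=k \mid M=m)$, which is close to uniform) of the single-code quantities $\Pr(d_e(X^n,v^n)\le D_e \mid M=m, \mathcal{C}^n_k)$. By part~2, each term is at most $2^{-n(R(Q_{XY},D_e)-4\epsilon)}$ deterministically once $\mathcal{C}^n_k\notin\mathcal{E}$, so the average is at most $2^{-n(R(Q_{XY},D_e)-4\epsilon)}$; but this only gives the $r+R(Q_{XY},D_e)$ bound after accounting for the $2^{-nr}$ prior-probability factor on each individual $k$-event — the trade-off being that the $r$ ``shows up'' because conditioning on $M=m$ does not pin down $k$. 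To get the $\min$ with $R_e(Q_X,D_e)$, I would use \eqref{eqlemmaach3}: the average over $k$ of the single-code probabilities has expectation (over the codes) at most $2^{-n(R_e(Q_X,D_e)-4\epsilon)}$, and I apply a concentration bound over the $2^{nr}$ independent codes to show this average does not exceed $2^{-n(R_e(Q_X,D_e)-8\epsilon)}$ except with doubly-exponentially small probability; combining with the deterministic $r+R(Q_{XY},D_e)$ bound and a union bound over the $N \le 2^{n(I+\epsilon)}$ values of $m$ and polynomially-many types $v^n$ (all absorbed into the doubly-exponential decay) gives \eqref{eqlemmaach4}.

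The step I expect to be the \textbf{main obstacle} is the last one: getting the precise interaction between the key randomization and the averaging over $m$ right, so that the exponent comes out as exactly $\min\{R_e(Q_X,D_e),\ r+R(Q_{XY},D_e)\}$ rather than something weaker, and ensuring the concentration argument over $2^{nr}$ codes is valid uniformly (the number of ``bad'' events to union-bound over, $N \cdot |\mathcal{V}|^n$, grows exponentially, so the doubly-exponential decay of each must be robust to that). Keeping all the $\epsilon$-slacks consistent across the three parts — the $2\epsilon/3$ in the hypothesis on $N$, the $4\epsilon$ in parts~2, and the $8\epsilon$ in part~3 — will require care but should be routine once the structure is in place.
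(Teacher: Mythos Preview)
Your proposal follows essentially the same approach as the paper for Parts~1 and~2, and has the right overall structure for Part~3. There is, however, one genuine muddle in Part~3 that would trip you up if you tried to write it out.

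You write that the deterministic bound \eqref{eqlemmaach2} on each term, together with ``the $2^{-nr}$ prior-probability factor on each individual $k$-event,'' yields the exponent $r+R(Q_{XY},D_e)$. This is not so: if each of the $2^{nr}$ terms is at most $2^{-n(R(Q_{XY},D_e)-4\epsilon)}$ and the weights $\Pr(K=k\mid M=m)$ are close to $2^{-nr}$, the average is only bounded by $2^{-n(R(Q_{XY},D_e)-O(\epsilon))}$, with no $r$. The exponent $r+R(Q_{XY},D_e)$ cannot be obtained from the deterministic per-term bound alone. In the paper's argument, \emph{both} exponents in the minimum arise from a \emph{single} application of the multiplicative Chernoff bound to the sum $\sum_{k} \Pr(d_e(X^n,v^n)\le D_e \mid M=m,\mathcal{C}^n_k)$: one inputs simultaneously the per-term upper bound $b=2^{-n(R(Q_{XY},D_e)-4\epsilon)}$ from \eqref{eqlemmaach2} and the per-term mean bound $\mu=2^{-n(R_e(Q_X,D_e)-4\epsilon)}$ from \eqref{eqlemmaach3}, and sets the threshold $t=2^{-n(E-r-6\epsilon)}$ with $E=\min\{R_e(Q_X,D_e),\,r+R(Q_{XY},D_e)\}$. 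The Chernoff tail $(eN\mu/t)^{t/b}$ is then doubly-exponentially small precisely because $R_e-E\ge 0$ (so the base is $\le e\,2^{-2n\epsilon}$) and $r+R(Q_{XY},D_e)-E\ge 0$ (so the exponent $t/b\ge 2^{2n\epsilon}$). So your plan to ``apply a concentration bound over the $2^{nr}$ independent codes'' is the right move, but it delivers the full $\min$ at once; there is no separate deterministic route to $r+R(Q_{XY},D_e)$.

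Two smaller points. First, the near-uniformity of $\Pr(K=k\mid M=m)$ is not automatic: you need to argue, via Bayes and the $1\le N_{x^n}\le 2^{2n\epsilon}$ bounds from Part~1, that $\Pr(M=m\mid K=k,\mathcal{C}^n_k)$ varies over $k$ by at most a factor $2^{2n\epsilon}$ (the paper also uses that $|\{x^n:(x^n,y^n)\in T_{Q_{XY}}\}|$ is the same for every $y^n\in T_{Q_Y}$). Second, the final union bound is over all $v^n\in\mathcal{V}^n$, not ``polynomially-many types $v^n$''; there are $|\mathcal{V}|^n$ of them, but this singly-exponential factor is still swallowed by the doubly-exponential decay.
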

\bigskip

The first part of the Lemma asserts that if we generate the codebook randomly, then each $x^n \in T_{Q_{X}}$ will be covered by a small number of codewords (the probability that this event does not occur is doubly exponentially small). Therefore, if we encode $x^n$ by choosing a codeword uniformly at random from its cover, the induced $P_{X^n|Y^n}(.|y^n)$ will be roughly uniform over the set $T_{Q_{X|Y}}(y^n)=\{ x^n: (x^n,y^n) \in T_{Q_{XY}}\}$. Consequently, given a codeword index $m$ and $v^n \in \mathcal{V}^n$, the second part bounds the probability that $v^n$ covers $X^n$, and also bounds the expectation (over the choice of the codebook) of that probability. 

Finally, the third part considers generating $2^{nr}$ codebooks and the induced distribution $P_{X^n|M}$, where $M$ is the index of a chosen codeword. This distribution roughly corresponds to generating $2^{nr}$ elements uniformly at random from $T_{Q_Y}$, revealing the chosen elements to the adversary, then choosing one of them uniformly at random and generating $X^n$ uniformly at random from $T_{Q_{X|Y}} (Y^n)$. This setup is similar to the one studied by Schieler and Cuff~\cite[Theorem 4]{cuff2014henchman}. Equation~\eqref{eqlemmaach4} states that, for most realizations of the codebooks, the probability that the adversary generates a successful guess, given a codeword  index, is upper-bounded by $2^{-n (\min \{ R_e(Q_X,D_e), r+R(Q_{XY},D_e)\}}$. The implication is that the best the adversary could do is 1) either ignore the index and guess $X^n$ blindly, 2) or guess which codebook is being used (i.e., guess the value of the key $K$) and use the scheme suggested in the previous section.\\

Now, fix $\delta > 0$ such that $R_{\alpha+\delta} = \max_{Q: D(Q||P) \leq \alpha+\delta} R(Q,D) < R$. Note that such $\delta$ exists since $\lim_{\delta \rightarrow 0} R_{\alpha+\delta} = R_{\alpha}$ (which follows from Proposition~\ref{propcontinuityD} in Appendix~\ref{continuityproof} and the fact that $D(Q||P)$ is convex).  Fix $R'$ such that $R_{\alpha+\delta} < R' < R$, and $\epsilon > 0$ such that $\epsilon < R-R'$. Let \begin{align} \label{eqdefQXalpha}
\mathcal{Q}_\mathcal{X}^n (\alpha,\delta) = \{Q \in \mathcal{Q}_{\mathcal{X}}^n: D(Q||P) \leq \alpha+\delta \}.
\end{align}

Let $n$ be large as given by Lemma~\ref{lemmaachievability}. For each type $Q_X \in \mathcal{Q}_\mathcal{X}^n (\alpha,\delta)$, we associate a joint type $Q_{XY}$ and generate $2^{nr}$ codebooks $\{\mathcal{C}_k\}_{k=1}^{2^{nr}} \in \tilde{\mathcal{E}}^c $ where the size of each codebook is upper-bounded by $2^{n(I_{Q_{XY}} (X;Y) + \epsilon)}$
 (the existence of such codes follows from~\eqref{eqlemmaach4}). Since the primary user wants to minimize the probability of a successful guess by the eavesdropper, but must also satisfy a rate constraint, the associated type is chosen as follows:
 \begin{align} \label{eqdefQstarRQX}
 Q^\star_{R'} (Q_X) \in \argmax_{ \substack{ Q_{XY} \in \mathcal{Q}_\mathcal{XY}^n (Q_X,D): \\ I_{Q_{XY}} (X;Y) \leq R'}} R(Q_{XY},D_e).
 \end{align}
 
 The encoding function $f$ is as follows. Given a source sequence $x^n$ satisfying $Q_{x^n} \in \mathcal{Q}_\mathcal{X}^n (\alpha,\delta)$, and a realization of the key $k$, a reconstruction sequence is chosen uniformly at random from $\mathcal{C}_{k} (x^n)$ (cf.~\eqref{eqdefNxn}). The associated message is then given by:
 \begin{itemize}
 \item $\lceil \log |\mathcal{Q}_{\mathcal{X}}^n | \rceil$ bits to describe $Q_X$.
 \item $\lceil \log | \mathcal{C}_k^n| \rceil$ bits to describe the index of the reconstruction.
 \end{itemize}
The legitimate receiver uses the first part of the message and the key to determine which codebook is being used, and then uses the second part of the message to recover the reconstruction $Y^n$. Finally, all sequences $x^n$ such that $Q_{x^n} \notin  \mathcal{Q}_\mathcal{X}^n (\alpha,\delta)$ are mapped to an arbitrary message $m_0$.

\begin{Remark}
One can check that this encoding is feasible by noting that the required number of bits satisfy:
\begin{align*}
\lceil \log |\mathcal{Q}_{\mathcal{X}}^n | \rceil +\lceil \log | \mathcal{C}_k^n| \rceil & \leq \\
 |\mathcal{X}| \log(n+1) +1 + n \left(I_{Q^\star_{R'} (Q_X)}(X;Y) + \epsilon \right) + 2 & \leq \\
  n \left( R'+\epsilon+ \frac{|\mathcal{X}|}{n} \log(n+1) +  \frac{3}{n} \right) & < nR,
\end{align*}
for $n$ large enough. Moreover, it satisfies the excess distortion probability constraint, since 
\begin{align*}
 \Pr(d(X^n,Y^n) > D) & \leq \sum_{Q_X \notin \mathcal{Q}_n(\alpha,\delta)} P(Q_X) \\
 & \leq \sum_{Q_X \notin \mathcal{Q}_n(\alpha,\delta)}  2^{-nD(Q_X||P)}  \\ & \leq (n+1)^{|\mathcal{X}|} 2^{-n(\alpha+\delta)}  < 2^{-n\alpha},
\end{align*}
where the last inequality holds for large enough $n$.
\end{Remark}
\bigskip

To analyze the performance of the eavesdropper, note that when s/he observes a message $m \neq m_0$, then the induced distribution $P_{X^n|M=m}$ is exactly the setup studied in part three of Lemma~\ref{lemmaachievability}. Indeed, the message $m$ indicates the type of the transmitted sequence and the index of the reconstruction (among $2^{nr}$ possible codebooks). For $m=m_0$, i.e., for sequences of type outside $\mathcal{Q}_n(\alpha,\delta)$, the performance can still be analyzed in light of Lemma~\ref{lemmaachievability} by considering the associated $Q_{XY}$ to be of the form $Q_X Q_Y$ (i.e., $X$ and $Y$ are independent), in which case $\min\{R_e(Q_X,D_e),r+R(Q_{XY},D_e)\} = \min\{R_e(Q_X,D_e),r+R_e(Q_X,D_e)\} = R_e(Q_X,D_e)$. Now, consider the following chain of inequalities.
\begin{align}
& \Pr \left( d_e \Big(X^n, g_0 (f(X^n,K)) \Big) \leq D_e \right) \notag \\
& = \sum_{Q_X \in \mathcal{Q}_{\mathcal{X}}^n} \sum_{x^n \in T_{Q_X}} \sum_{m \in \mathcal{M}} P(x^n)P_f(m|x^n) \cdot \notag \\
& \qquad \qquad \qquad \qquad \qquad   \mathbf{1} \left\lbrace x^n \in B_{D_e} \left( g_0 (m) \right) \right\rbrace \notag \\
& =  \sum_{Q_X \in \mathcal{Q}_{\mathcal{X}}^n} P(Q_X) \sum_{m \in \mathcal{M}} \sum_{x^n \in T_{Q_X}} P_f(m|T_{Q_X}) P_f(x^n|m,T_{Q_X}) \cdot \notag \\
& \qquad \qquad \qquad \qquad  \qquad \mathbf{1} \left\lbrace x^n \in B_{D_e} \left( g_0 (m) \right) \right\rbrace \notag \\
& =  \sum_{Q_X \in \mathcal{Q}_{\mathcal{X}}^n} P(Q_X) \sum_{m \in \mathcal{M}} P_f(m|T_{Q_X}) \cdot \notag \\
& \qquad \qquad \qquad \qquad \quad  \max_{ v^n \in \mathcal{V}^n} P_f \left( d_e(X^n,v^n) \leq D_e | m, T_{Q_X} \right) \notag \\
&  \stackrel{\text{(a)}} \leq \sum_{Q_X \in \mathcal{Q}_{\mathcal{X}}^n} P(Q_X) \sum_{m \in \mathcal{M}} P_f(m|T_{Q_X}) \cdot \notag \\
& \qquad \qquad \qquad \qquad 2^{-n (\min \{ R_e(Q_X,D_e), r+R({Q^\star_{R'} (Q_X)},D_e)\}-8\epsilon)} \notag \\
& \leq \!\! \sum_{Q_X \in \mathcal{Q}_{\mathcal{X}}^n} \!\! 2^{-n (D(Q_X||P) + \min \{ R_e(Q_X,D_e), r+R(Q^\star_{R'} (Q_X),D_e)\}-8\epsilon)} \notag \\
& \leq (n+1)^{|\mathcal{X}|}  \max_{Q_X \in \mathcal{Q}_{\mathcal{X}}^n}  \exp \Big\lbrace
-n \left(D(Q_X||P) + \right. \notag \\
& \qquad \qquad ~~  \left.  \min \{ R_e(Q_X,D_e), r+R({Q^\star_{R'} (Q_X)},D_e)  \}-8\epsilon \right) \!\! \Big\rbrace \notag \\
& = (n+1)^{|\mathcal{X}|}  \exp \bigg( \!\! -n  \min \left\lbrace 
\min_{Q_X \in \mathcal{Q}_{\mathcal{X}}^n(\alpha,\delta)} D(Q_X||P) + \right. \notag \\
& \qquad \qquad \qquad \quad  \min \{ R_e(Q_X,D_e), r+R({Q^\star_{R'} (Q_X)},D_e)  \} ,  \notag \\
& \qquad \qquad   \left. 
~~ \min_{Q_X \notin \mathcal{Q}_{\mathcal{X}}^n(\alpha,\delta)} D(Q_X||P)+ R_e(Q_X,D_e) 
\right\rbrace + 8\epsilon n  \bigg) \notag \\
& = (n+1)^{|\mathcal{X}|} \exp \left( -n  \min \left\lbrace 
\min_{Q_X \in \mathcal{Q}_{\mathcal{X}}^n(\alpha,\delta)} D(Q_X||P)+ r+ \right. \right. \notag \\
& \qquad \qquad \qquad \qquad \qquad \qquad \qquad R({Q^\star_{R'} (Q_X)},D_e)  , \notag \\
& \qquad \left. \left.   \min_{Q_X \in \mathcal{Q}_{\mathcal{X}}^n} D(Q_X||P)+ R_e(Q_X,D_e) 
\right\rbrace + 8\epsilon n  \right), \label{eqboundguesswithkey}
\end{align}
where (a) follows from~\eqref{eqdefEvg} of Lemma~\ref{lemmaachievability} and the fact that the codebooks $\{\mathcal{C}_k\}_{k=1}^{2^{nr}} \notin \tilde{\mathcal{E}}$ by construction. Therefore, 
\begin{align*}
& E^-(P,\overrightarrow{D},\overrightarrow{R}, \alpha ) \\
& = \liminf_{n \rightarrow \infty} \max_{ \{f_n\} } \min_{ \{g_n\}} \! -\frac{1}{n} \log \Pr \!\! \left( \! d_e \! \Big( \! X^n \! , g_n (f_n(X^n,K)) \! \Big) \! \leq \! D_e \! \right) \\
& \geq \min \bigg\lbrace E_0(P,D),  \\
& \qquad \quad \left. r  + \!\! \min_{Q: D(Q||P) \leq \alpha+\delta} \!\! D(Q||P) + R(Q,R',D,D_e) \!\! \right\rbrace \!\! - 8 \epsilon,
\end{align*} where the inequality follows from the following proposition, the proof of which is given in Appendix~\ref{AppendixproplimitR}.
\begin{Proposition} \label{proplimitR}
\begin{align*}
 \lim_{n \rightarrow \infty} & \min_{Q_X \in \mathcal{Q}_{\mathcal{X}}^n(\alpha,\delta)} D(Q_X||P)+R({Q^\star_{R'} (Q_X)},D_e) = \\
& \min_{Q: D(Q||P) \leq \alpha+ \delta} D(Q||P) + R(Q,R',D,D_e).
\end{align*}
\end{Proposition}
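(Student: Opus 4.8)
The plan is to prove the limit by establishing the two inequalities separately, relying on continuity (Property (P4) of Proposition~\ref{propcontinuityR}) and on the density of types in the probability simplex. Write $F_n = \min_{Q_X \in \mathcal{Q}_{\mathcal{X}}^n(\alpha,\delta)} D(Q_X\|P) + R(Q^\star_{R'}(Q_X),D_e)$ and $F = \min_{Q: D(Q\|P) \leq \alpha+\delta} D(Q\|P) + R(Q,R',D,D_e)$. The first observation I would make is that, by the definition of $Q^\star_{R'}(Q_X)$ in~\eqref{eqdefQstarRQX}, we have $R(Q^\star_{R'}(Q_X),D_e) = \max\{R(Q_{XY},D_e) : Q_{XY}\in\mathcal{Q}_{\mathcal{XY}}^n(Q_X,D),\ I_{Q_{XY}}(X;Y)\le R'\}$, i.e.\ it is the \emph{type-restricted} version of $R(Q_X,R',D,D_e)$ in which the test channel $P_{Y|X}$ is required to be an $n$-type conditional distribution. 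So the content of the proposition is that restricting the inner maximization to rational conditionals is asymptotically harmless.

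For the direction $\liminf_n F_n \geq F$: given any $Q_X \in \mathcal{Q}_{\mathcal{X}}^n(\alpha,\delta)$, any feasible $Q_{XY}$ in~\eqref{eqdefQstarRQX} is in particular a $P_{Y|X}$ with $\E[d(X,Y)]\le D$ and $I(X;Y)\le R' < R$, so $R(Q_{XY},D_e) \le R(Q_X,R',D,D_e)$ by definition~\eqref{eqdefRPRDDe}; hence $F_n \geq \min_{Q_X\in\mathcal{Q}_{\mathcal{X}}^n(\alpha,\delta)} D(Q_X\|P)+R(Q_X,R',D,D_e) \geq \inf_{Q: D(Q\|P)\le\alpha+\delta} D(Q\|P)+R(Q,R',D,D_e) = F$ (the last infimum is attained by compactness and finite-valuedness from (P4)). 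This direction is essentially immediate.

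For the reverse direction $\limsup_n F_n \leq F$: let $Q^\ast$ attain $F$, and let $P^\ast_{Y|X}$ achieve the maximum defining $R(Q^\ast,R',D,D_e)$, so $I_{Q^\ast P^\ast_{Y|X}}(X;Y)\le R'$. The idea is to approximate: since $D(Q^\ast\|P) \leq \alpha+\delta$ and, by the choice of $\delta$, $R_{\alpha+\delta}=\max_{Q:D(Q\|P)\le\alpha+\delta}R(Q,D) < R'$, we have $R(Q^\ast,D)\le R_{\alpha+\delta}<R'$, so there is slack in the rate constraint. I would pick a sequence of types $Q_X^{(n)}\in\mathcal{Q}_{\mathcal{X}}^n$ with $Q_X^{(n)}\to Q^\ast$ (approximating $Q^\ast$ in the simplex with denominator-$n$ types, at distance $O(1/n)$), and likewise a sequence of type conditionals $P_{Y|X}^{(n)}$ approximating $P^\ast_{Y|X}$; one must check that for $n$ large these $Q_X^{(n)}$ land in $\mathcal{Q}_{\mathcal{X}}^n(\alpha,\delta)$ (using continuity of $D(\cdot\|P)$, possibly after nudging $Q_X^{(n)}$ slightly toward $P$ to keep the divergence below $\alpha+\delta$) and that the induced joint type $Q_{XY}^{(n)}=Q_X^{(n)}P_{Y|X}^{(n)}$ satisfies $\E[d(X,Y)]\le D$ and $I_{Q_{XY}^{(n)}}(X;Y)\le R'$ (the distortion constraint may need a slight perturbation of $P^\ast_{Y|X}$ toward a minimizer of $d$, exploiting $D\ge D_{\min}$; the rate constraint is fine by the slack and continuity of mutual information). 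Then $Q_{XY}^{(n)}$ is feasible in~\eqref{eqdefQstarRQX}, so $R(Q^\star_{R'}(Q_X^{(n)}),D_e)\ge R(Q_{XY}^{(n)},D_e)$, and by the uniform continuity of $R(P_{XY},D_e)$ in $P_{XY}$ (Property (P1)) together with continuity of $D(\cdot\|P)$ we get $F_n \le D(Q_X^{(n)}\|P)+R(Q_{XY}^{(n)},D_e) \to D(Q^\ast\|P)+R(Q^\ast P^\ast_{Y|X},D_e) = F$.

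The main obstacle I expect is the simultaneous approximation in the reverse direction: one must produce a single perturbation of $(Q^\ast,P^\ast_{Y|X})$ into the $n$-type world that respects \emph{all three} constraints ($D(Q_X\|P)\le\alpha+\delta$, the distortion bound $\le D$, and the rate bound $\le R'$) at once, without inflating the objective by more than $o(1)$. Each constraint individually is easy to restore by a small shift (toward $P$, toward a $d$-minimizing $Y$, or using the rate slack), but combining them requires care that the shifts do not conflict; the choice of $\delta$ in the achievability proof (ensuring $R_{\alpha+\delta}<R'$) is precisely what buys the needed room in the rate constraint, and the hypothesis $D\ge D_{\min}$ buys room in the distortion constraint. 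Once the right perturbed sequence is exhibited, the rest is a routine continuity argument using (P1) and (P4).
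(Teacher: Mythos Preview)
Your two directions are swapped, and as written neither inequality actually follows. The inner operation in $F_n$ is a \emph{maximum}; restricting the feasible conditionals to $n$-types makes the inner value \emph{smaller}, not larger. Concretely, from $R(Q^\star_{R'}(Q_X),D_e)\le R(Q_X,R',D,D_e)$ you get
\[
F_n \;\le\; \min_{Q_X\in\mathcal{Q}_{\mathcal{X}}^n(\alpha,\delta)} D(Q_X\|P)+R(Q_X,R',D,D_e),
\]
not $\ge$. Combined with a type approximation of the optimal $Q^\ast$ and continuity of $R(\,\cdot\,,R',D,D_e)$ from (P4), this yields $\limsup_n F_n\le F$. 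So your ``trivial'' observation actually handles the upper bound, and the explicit approximation of $P^\ast_{Y|X}$ is unnecessary there. Likewise, in your second paragraph the step ``$R(Q^\star_{R'}(Q_X^{(n)}),D_e)\ge R(Q_{XY}^{(n)},D_e)$, hence $F_n\le D(Q_X^{(n)}\|P)+R(Q_{XY}^{(n)},D_e)$'' is invalid: you only know $F_n\le D(Q_X^{(n)}\|P)+R(Q^\star_{R'}(Q_X^{(n)}),D_e)$, and the $\ge$ points the wrong way to replace $R(Q^\star_{R'}(\cdot),D_e)$ by $R(Q_{XY}^{(n)},D_e)$.

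The nontrivial direction is $\liminf_n F_n\ge F$, and it requires exactly the inner-max approximation you sketched---but applied \emph{uniformly over all} types $Q_X\in\mathcal{Q}_{\mathcal{X}}^n(\alpha,\delta)$, not only near the single optimizer $Q^\ast$. The minimizer of $F_n$ is an unknown type, so you must show that for every such $Q_X$ there is an $n$-type conditional feasible in~\eqref{eqdefQstarRQX} with $R(Q_{XY},D_e)\ge R(Q_X,R',D,D_e)-\epsilon$. This is what the paper does: back off the rate to $R'-\delta_1$ (using the slack $R'>R_{\alpha+\delta}$ and uniform continuity from (P4)), take the maximizer $P^\star_{Y|X}$ there, and round it to an $n$-type conditional (toward the $d$-minimizing letter so the distortion constraint is preserved), with the $\delta_1$ slack absorbing the rounding error in $I(X;Y)$. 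That uniform estimate gives $F_n\ge\min_{Q_X\in\mathcal{Q}_{\mathcal{X}}^n(\alpha,\delta)}[D(Q_X\|P)+R(Q_X,R',D,D_e)]-\epsilon\ge F-\epsilon$. Your ingredients are the right ones; they are simply attached to the wrong inequalities.
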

Now, note that $\epsilon$ is arbitrary, and 
\begin{align} 
\lim_{R' \rightarrow R} & \min_{Q: D(Q||P) \leq \alpha+\delta} D(Q||P) + R(Q,R',D,D_e) = \notag \\
&    \min_{Q: D(Q||P) \leq \alpha+\delta} D(Q||P) + R(Q,R,D,D_e), \label{eqlimminR}
\end{align}
since $R(Q,\tilde{R},D,D_e)$ is uniformly continuous in $(Q,\tilde{R})$ over the set $\{(Q,\tilde{R}): D(Q||P) \leq \alpha+\delta, R' \leq \tilde{R} \leq R \}$ by Proposition~\ref{propcontinuityR}. Finally, it follows from Proposition~\ref{propcontinuityR} and Proposition~\ref{propcontinuityD} (to follow in Appendix~\ref{continuityproof}) that
\begin{align} 
\lim_{\delta \rightarrow 0} & \min_{Q: D(Q||P) \leq \alpha+\delta} D(Q||P) + R(Q,R,D,D_e) = \notag \\
&  \min_{Q: D(Q||P) \leq \alpha} D(Q||P) + R(Q,R,D,D_e) \label{eqlimmindelta}
\end{align}
As such,
\begin{align*}
& E^-(P,\overrightarrow{D},\overrightarrow{R}, \alpha ) \geq \min \bigg\lbrace E_0(P,D),  \\
& \qquad \qquad \left. r  + \!\! \min_{Q: D(Q||P) \leq \alpha} \!\! D(Q||P) + R(Q,R,D,D_e)  \right\rbrace.
\end{align*}

\subsection{Converse Proof}\label{Secconvkey}

We now prove that $E^+(P,\overrightarrow{D},\overrightarrow{R},\alpha) \leq \min \{r+ \min_{Q:D(Q||P) \leq \alpha} D(Q||P)+R(Q,R,D,D_E), E_0(P,D_e)\}$. We have already shown, following Theorem~\ref{MainThm}, that the perfect secrecy exponent $E_0(P,D_e)$ is achievable by the eavesdropper even in the absence of any observation. It follows immediately that
\begin{equation} \label{eqconvkey1}
E^+(P,\overrightarrow{D},\overrightarrow{R},\alpha) \leq E_0(P,D_e).
\end{equation}

So we only need to demonstrate a strategy that achieves the first exponent. The strategy is based on the one suggested in Section~\ref{primaryconv}. We will add an initial stage in which the eavesdropper tries to guess the value of $K$, by choosing an element uniformly at random from $\{1,2,\cdots,2^{nr} \}$. The eavesdropper's guess, denoted by $\tilde{K}$, is equal to $K$ with probability $2^{-nr}$ (This will correspond to the $r$ term in~\eqref{eqthmwithkey}). Then, s/he generates $\tilde{Y}^n=h(M,\tilde{K})$. 
Next, the eavesdropper implements the same stages suggested in Section~\ref{primaryconv}, where $\tilde{Y}^n$ plays the role of $Y^n$. We denote the  strategy by $g^\prime$. 

\begin{Remark}
If $h$ is stochastic, it can be replaced by a deterministic $h$ that still satisfies the reliability constraint. Since this does not change the conditional $P_{X^n|M}$, we can assume, without loss of generality, that $h$ is deterministic. 
\end{Remark}

Now, consider any functions $f$ and $h$ implemented by the primary user (and satisfying the distortion constraint). Let $P_f$ denote the induced joint probability  of $(X^n,M,K)$, and $P_K$ denote the distribution of $K$. To analyze the performance of $g'$, note that, unlike Section~\ref{nokey}, not every realization of $\tilde{Y}^n$ necessarily satisfies the distortion constraint. To that end, define
\begin{align}
& \mathcal{M}_D(x^n,k)  = \{ m \in \mathcal{M}: d(x^n,h(m,k)) \leq D \}, \notag \\
&  \qquad \qquad \qquad \qquad \qquad \qquad \quad  x^n \in \mathcal{X}^n, ~ k \in \mathcal{K}, \label{eqdefMD} \\
& \text{and } \mathcal{A}  = \{ (x^n,y^n) \in \mathcal{X}^n \times \mathcal{Y}^n: d(x^n,y^n) > D \}. \label{eqdefA}
\end{align}
The distortion constraint implies that 
\begin{align}
\label{eqboundPrA}
P_f( \mathcal{A}  ) \leq  2^{-n \alpha}.  
\end{align} 
Moreover, the analysis of  $g'$ should take into account the rate constraint $R$. 
The following Lemma by Weissman and Ordentlich~\cite{WeissmanEmpiricalRateConstrained} will be instrumental. 
\begin{Lemma}[{\cite[Lemma 3]{WeissmanEmpiricalRateConstrained}}] \label{LemmaWeissman}
Let $Y^n(.)$ be an $n$-block code of rate $\leq R$. Then, for every $Q \in \mathcal{Q}_{\mathcal{X}}^n$ and $\eta > 0$, if $X^n$ is uniformly distributed over $T_Q$, 
\begin{align} \label{eqLemmaWeissman}
\Pr  \left( \left\lbrace x^n \in T_Q: I_{Q_{x^n Y^n(x^n)}} (X;Y) > R + \eta \right\rbrace \right) \leq \notag \\ (n+1)^{|\mathcal{X}||\mathcal{Y}|+|\mathcal{X}|} 2^{- n\eta}.
\end{align}
\end{Lemma}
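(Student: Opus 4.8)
The plan is to prove this by a direct method-of-types counting argument: bound the number of ``bad'' source sequences in $T_Q$---those that $Y^n(\cdot)$ maps to a codeword with which they have empirical mutual information exceeding $R+\eta$---and then divide by $|T_Q|$, using only the elementary type-size estimates from \cite{korner}.

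First I would invoke that a block code of rate $\le R$ has at most $2^{nR}$ distinct codewords; call this set $\mathcal{Y}_{\mathrm{code}}$. Fix a codeword $y^n \in \mathcal{Y}_{\mathrm{code}}$. Every $x^n \in T_Q$ with $I_{Q_{x^n y^n}}(X;Y) > R+\eta$ must lie in the conditional type class $T_{P_{X|Y}}(y^n)$ for some joint type $P_{XY} \in \mathcal{Q}_{\mathcal{XY}}^n$ with $P_X = Q$ and $I_{P_{XY}}(X;Y) > R+\eta$. For any such joint type, $H_{P_{XY}}(X|Y) = H_Q(X) - I_{P_{XY}}(X;Y) < H_Q(X) - R - \eta$, so by Lemma~1.2.5 in \cite{korner} we get $|T_{P_{X|Y}}(y^n)| \le 2^{n H_{P_{XY}}(X|Y)} < 2^{n(H_Q(X) - R - \eta)}$.

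Next I would sum over the (at most) $(n+1)^{|\mathcal{X}||\mathcal{Y}|}$ joint types and over the (at most) $2^{nR}$ codewords, obtaining that the total number of bad $x^n$ is at most $2^{nR}\cdot(n+1)^{|\mathcal{X}||\mathcal{Y}|}\cdot 2^{n(H_Q(X)-R-\eta)} = (n+1)^{|\mathcal{X}||\mathcal{Y}|}\,2^{n(H_Q(X)-\eta)}$. Since $X^n$ is uniform over $T_Q$ and $|T_Q| \ge (n+1)^{-|\mathcal{X}|}\,2^{n H_Q(X)}$ by Lemma~1.2.3 in \cite{korner}, dividing the bad-sequence count by $|T_Q|$ yields the probability bound $(n+1)^{|\mathcal{X}||\mathcal{Y}| + |\mathcal{X}|}\,2^{-n\eta}$, as claimed.

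The main obstacle is essentially careful bookkeeping rather than any conceptual difficulty: one must track each polynomial-in-$n$ factor (the count of joint types and the $(n+1)^{-|\mathcal{X}|}$ loss in the lower bound on $|T_Q|$) so that the final exponents on $(n+1)$ match, and one must keep the strict inequality in the mutual-information threshold so that $H_{P_{XY}}(X|Y) < H_Q(X)-R-\eta$ is genuinely available for every relevant joint type. No step requires a genuinely hard idea---this is a pure type-counting estimate, which is why it can simply be quoted from \cite{WeissmanEmpiricalRateConstrained}.
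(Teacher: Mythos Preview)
Your proposal is correct and complete. The paper itself does not prove this lemma; it simply quotes it from \cite{WeissmanEmpiricalRateConstrained} and remarks that the statement there differs slightly, so your type-counting argument is in fact more than what the paper provides, and it is exactly the standard proof one would expect.
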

\begin{Remark} This is not the exact statement found in~\cite{WeissmanEmpiricalRateConstrained}, but it is a straightforward modification. 
 \end{Remark}
So, define for every $\eta > 0$, $x^n \in \mathcal{X}^n$, and $k \in \mathcal{K}$,
\begin{align}
\mathcal{M}_R (x^n,k,\eta) & = \{  m \in \mathcal{M}: I_{Q_{x^ny^n}} (X;Y) \leq R+\eta,  \notag \\
&  \qquad \text{where } y^n  =h(m,k) \}, \label{eqdefMR} \\
\mathcal{M}_{D,R} (x^n,k,\eta) & = \mathcal{M}_D (x^n,k) \cap \mathcal{M}_R (x^n,k,\eta),  \label{eqdefMDR} \\
\text{and }  \mathcal{B}(\eta) & = \big\lbrace (x^n,y^n) \in \mathcal{X}^n \times \mathcal{Y}^n: \notag \\
& \qquad I_{Q_{x^ny^n}} (X;Y) > R + \eta \big\rbrace. \label{eqdefB} 
\end{align} 
 
 Finally, fix $\epsilon > 0$, $\delta >0 $ and $\eta > 0$, and consider the following chain of inequalities.
\begin{align}
& \Pr \Big( d_e \Big(X^n,g^\prime(f(X^n,K)) \Big) \leq  D_e \Big)\notag \\
& =  \sum_{x^n \in \mathcal{X}^n} \sum_{k \in \mathcal{K}}  \sum_{m \in \mathcal{M}} P(x^n)P_K(k) P_f(m|x^n,k) \cdot \notag \\
& \qquad \qquad \qquad \qquad P_f \left(x^n \in B_{D_e} \Big( g^\prime (m)  \Big) \right) \notag \\
& \geq  \sum_{x^n \in \mathcal{X}^n} \sum_{k \in \mathcal{K}}  \sum_{ m \in  \mathcal{M}_{D,R} (x^n,k,\eta)} P(x^n)P_K(k) P_f(m|x^n,k) \cdot \notag \\
& \qquad \qquad \qquad \qquad  P_f \left(x^n \in B_{D_e} \Big( g^\prime (m)  \Big) \right) \notag \\
& =  \sum_{x^n \in \mathcal{X}^n} \sum_{k \in \mathcal{K}} \sum_{ m \in  \mathcal{M}_{D,R} (x^n,k,\eta)} P(x^n)P_K(k) P_f(m|x^n,k) \cdot \notag \\
& \qquad \qquad \sum_{\tilde{k} \in \mathcal{K} } P_K \left(\tilde{K}  = \tilde{k} \right)  P_f \left(x^n \in B_{D_e} \left( g ( h(m,\tilde{k}) )  \right)  \right) \notag \\
& \geq 2^{-nr} \sum_{x^n \in \mathcal{X}^n} \sum_{k \in \mathcal{K}} \sum_{ \substack{m \in  \mathcal{M}_{D,R} (x^n,k,\eta)}} \!\!\!\!\!\! P(x^n)P_K(k) P_f(m|x^n,k)\cdot \notag \\
& \qquad \qquad \qquad \qquad \qquad \qquad    P_f \left(x^n \in B_{D_e} \left( g ( h(m,k) )  \right)  \right) \notag \\
& \stackrel{\text{(a)}} \geq c'_n 2^{-nr}  \!\!\! \sum_{x^n \in \mathcal{X}^n} \sum_{k \in \mathcal{K}} \sum_{ \substack{m \in   \mathcal{M}_{D,R} (x^n,k,\eta)} } \!\!\!\!\!\!  P(x^n)P_K(k) P_f(m|x^n,k) \cdot \notag \\
& \qquad \qquad \qquad \qquad \qquad \qquad   2^{-n  I_{P^\star_n (Q_{x^n {y}^n})} (X;V|Y)} \notag \\
& \stackrel{\text{(b)}}\geq  c'_n 2^{-nr} \sum_{Q_X \in \mathcal{Q}_{\mathcal{X}}^n} \sum_{x^n \in T_Q} \sum_{k \in \mathcal{K}} \sum_{ m \in  \mathcal{M}_{D,R} (x^n,k,\eta)} P(x^n)P_K(k) \cdot \notag \\
& \qquad \qquad P_f(m|x^n,k) \min_{ \substack{Q_{XY} \in \mathcal{Q}_{\mathcal{X}\mathcal{Y}}^n(Q_{X},D): \\ I_{Q_{XY}} (X;Y) \leq R+\eta}} 2^{-n  I_{P^\star_n (Q_{XY})} (X;V|Y)} \notag \\
& =  c'_n 2^{-nr} \sum_{Q_X \in \mathcal{Q}_{\mathcal{X}}^n} \min_{ \substack{Q_{XY} \in \mathcal{Q}_{\mathcal{X}\mathcal{Y}}^n(Q_{X},D): \\ I_{Q_{XY}} (X;Y) \leq R+\eta}} 2^{-n  I_{P^\star_n (Q_{XY})} (X;V|Y)} \cdot \notag \\
& \qquad \qquad  \qquad \qquad  P_f( \mathcal{A}^c \cap \mathcal{B}^c(\eta) \cap T_Q) \notag \\
& \stackrel{\text{(c)}} \geq c'_n 2^{-nr} \!\!\! \sum_{Q_X \in \mathcal{Q}_{\mathcal{X}}^n(\alpha,-\delta)}   \min_{ \substack{Q_{XY} \in \mathcal{Q}_{\mathcal{X}\mathcal{Y}}^n(Q_{X},D): \\ I_{Q_{XY}} (X;Y) \leq R+\eta}} 2^{-n  (R(Q_{XY},D_e)+ \epsilon) }  \cdot \notag \\
& \qquad \qquad \qquad \qquad  P_f( \mathcal{A}^c \cap \mathcal{B}^c(\eta) \cap T_Q)  \notag \\
& \geq c'_n 2^{-nr} \sum_{Q_X \in \mathcal{Q}_{\mathcal{X}}^n(\alpha,-\delta)}    2^{-n  (R(Q_{X},R+\eta,D,D_e)+ \epsilon) }   \cdot \notag \\
& \qquad \qquad \qquad \qquad P_f( \mathcal{A}^c \cap \mathcal{B}^c(\eta) \cap T_Q)  \label{eqchainconvkey1}
\end{align}
where \begin{enumerate}
\item[(a)] follows from Lemma~\ref{lemmaprobfinal} and~\eqref{eqdefMD}: $m \in \mathcal{M}_D(x^n,k)$ guarantees that $d(x^n,h(m,k)) \leq D$ so that Lemma~\ref{lemmaprobfinal} is applicable. 
\item[(b)] follows from~\eqref{eqdefMDR}: $m \in \mathcal{M}_{D,R} (x^n,k,\eta)$ allows us to restrict the minimum to $Q_{XY} \in \mathcal{Q}_{\mathcal{X} \mathcal{Y}} (Q_X,D)$ satisfying $I_{Q_{XY}} (X;Y) \leq R + \eta$.
 \item[(c)] follows from Proposition~\ref{propinnerminconv} in Appendix~\ref{proplimitproof}.
\end{enumerate} 
Now, note that
\begin{align}
P_f(\mathcal{A} | T_Q) \leq \frac{P_f( \mathcal{A})}{P(T_Q)} \leq (n+1)^{|\mathcal{X}|} 2^{-n( \alpha-D(Q||P))}. \label{eqboundPA} 
\end{align}
Moreover, by Lemma~\ref{LemmaWeissman},
\begin{align}
P_f( \mathcal{B}(\eta) |T_Q) \leq (n+1)^{|\mathcal{X}||\mathcal{Y}|+|\mathcal{X}|} 2^{-n\eta}. \label{eqboundPB} 
\end{align}
Combining~\eqref{eqboundPA} and~\eqref{eqboundPB} yields, for every $Q_X \in \mathcal{Q}_{\mathcal{X}}^n(\alpha,-\delta)$, 
\begin{align}
& P_f( \mathcal{A}^c \cap \mathcal{B}^c(\eta) \cap T_Q) \notag \\
& = P(T_Q) P_f( \mathcal{A}^c \cap \mathcal{B}^c(\eta) |T_Q) \notag \\
& \geq P(T_Q) ( 1 - P_f(\mathcal{A} | T_Q) - P_f(\mathcal{B}(\eta) | T_Q)) \notag \\
& \geq P(T_Q)( 1- (n+1)^{|\mathcal{X}|} 2^{-n\delta} -(n+1)^{|\mathcal{X}||\mathcal{Y}|+|\mathcal{X}|} 2^{-n\eta} ) \notag \\
& \geq P(T_Q) /2,
\end{align}
where the last inequality holds for large enough $n$. Continuing from~\eqref{eqchainconvkey1}, we get
\begin{align}
& \Pr \Big( d_e \Big(X^n,g^\prime(f(X^n,K)) \Big) \leq  D_e \Big) \notag \\
& \geq (c'_n/2) 2^{-nr} (n+1)^{-|\mathcal{X}|} \cdot \notag \\
& \qquad   \sum_{Q_X \in \mathcal{Q}_{\mathcal{X}}^n(\alpha,-\delta)}    2^{-n  (D(Q_X||P)+R(Q_{X},R+\eta,D,D_e)+ \epsilon) }   \notag \\
& \geq (c'_n/2) (n+1)^{-|\mathcal{X}|} \exp \bigg\lbrace -n \bigg( \epsilon+ r +  \notag \\
& \qquad \left. \left. \min_{Q_X \in \mathcal{Q}_{\mathcal{X}}^n(\alpha,-\delta) } D(Q_X||P)+R(Q_{X},R+\eta,D,D_e) \right) \right\rbrace. 
\end{align}
Therefore, taking the limit as $n$ goes to infinity, and noting that $\epsilon$, $\delta$, and $\eta$ are arbitrary, we get
\begin{align}
& E^+(P,\overrightarrow{D},\overrightarrow{R},\alpha)= \notag \\
& \limsup_{n \rightarrow \infty} \max_{\{f_n\}} \min_{\{g_n\}} -\frac{1}{n}\log \Pr \Big( \! d_e \Big(X^n,g_n(f_n(X^n,K)) \Big) \!\! \leq \! D_e \! \Big) \notag \\
& \leq \lim_{\delta \rightarrow 0} \lim_{\eta \rightarrow 0} r+  \!\! \min_{Q: D(Q||P) \leq \alpha-\delta} \!\!  D(Q||P) \! + \! R(Q,R+\eta,D,D_e) \notag \\
& = r + \min_{Q: D(Q||P) \leq \alpha} D(Q||P)+R(Q,R,D,D_e), \label{eqconvkey2}
\end{align}
where the last equality follows similarly to equations~\eqref{eqlimminR} and~\eqref{eqlimmindelta}.
Combining~\eqref{eqconvkey1} and~\eqref{eqconvkey2} yields our result.

\begin{appendices}

\section{Proof of Proposition~\ref{continuity}} \label{continuityproof}

\subsection{Proof of Property (P1)}

(P1): For fixed $P_{XY}$, $R(P_{XY},D_e)$ is a finite valued, non-increasing convex function of $D_e$. Furthermore, $R(P_{XY},D_e)$ is a uniformly continuous function of the pair $(P_{XY},D_e)$.

Fix $P_{XY}$. The minimization in~\eqref{Rsideinfo} is over a compact set, which is non-empty due to assumption (A3). Since $I(X;V|Y)$ is a continuous function of $P_{V|X,Y}$, the minimum is achieved. The monotonicity in $D_e$ follows directly from the definition. It is easy to check that $I(X;V|Y)$ is convex in $P_{V|X,Y}$ for fixed $P_{XY}$. Then, the proof of the convexity of $R(P_{XY},D_e)$ in $D_e$ follows similarly to the case of the rate-distortion function with no side information (see Lemma 2.2.2 in~\cite{korner}).\\

To show the uniform continuity in the pair $(P_{XY},D_e)$, consider the following proposition, the proof of which is given in Appendix~\ref{propcontinuityproof1}.
\begin{Proposition} \label{propcontinuity}
Let $N_1$ and $N_2$ be in $\mathbb{N}$, and let $\mathcal{S}$ and $\mathcal{U}$ be compact subsets of $\mathbb{R}^{N_1}$ and $\mathbb{R}^{N_2}$, respectively. 
Let $\nu$ be a non-negative continuous function defined on $\mathcal{S} \times \mathcal{U}$, and let $\vartheta$ be a real-valued continuous function defined on $\mathcal{S} \times \mathcal{U}$. Suppose they satisfy the following condition:
\begin{enumerate}
\item[(PA)] If  $(s,u_1) \in \mathcal{S} \times \mathcal{U}$ satisfies $\nu(s,u_1)= \min_{u' \in \mathcal{U}} \nu(s,u')$, then there exists $u_2$ such that $\vartheta(s,u_2) = \vartheta(s,u_1)$, and for all $s' \in \mathcal{S}$, $\nu(s',u_2)=\min_{u' \in \mathcal{U}} \nu(s',u').$
\end{enumerate}
Let $t_0 = \max_{s \in \mathcal{S}} \min_{u \in \mathcal{U}} \nu(s,u)$, and let $\varphi$ be a function on $\mathcal{S} \times [t_0, +\infty)$ defined as follows:
\begin{equation*}
\varphi(s,t) = \min_{u: \nu(s,u) \leq t} \vartheta (s,u).
\end{equation*}
If for fixed $s \in \mathcal{S}$, $\varphi(s,t)$ is continuous in $t$, then $\varphi(s,t)$ is continuous in the pair $(s,t)$.
\end{Proposition}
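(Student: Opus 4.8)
The plan is to deduce joint continuity from separate lower and upper semicontinuity of $\varphi$ on $\mathcal{S}\times[t_0,+\infty)$. Observe first that for every $s\in\mathcal{S}$ and $t\ge t_0$ the set $F(s,t):=\{u\in\mathcal{U}:\nu(s,u)\le t\}$ is non-empty --- it contains any minimizer of $\nu(s,\cdot)$, whose value is $\min_u\nu(s,u)\le t_0\le t$ --- and it is compact, being closed in the compact set $\mathcal{U}$; hence the minimum defining $\varphi(s,t)$ is attained, and $\varphi(s,\cdot)$ is non-increasing.

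Lower semicontinuity is the routine direction and uses neither (PA) nor the hypothesis on $\varphi(s,\cdot)$. Given $(s_k,t_k)\to(s_0,t_0')$ in the domain, pass to a subsequence realizing $\liminf_k\varphi(s_k,t_k)$ and, by compactness of $\mathcal{U}$, a further subsequence with $u_k\to u_\star$, where $u_k\in F(s_k,t_k)$ attains $\varphi(s_k,t_k)$. Continuity of $\nu$ gives $\nu(s_0,u_\star)=\lim_k\nu(s_k,u_k)\le t_0'$, so $u_\star\in F(s_0,t_0')$, and thus $\varphi(s_0,t_0')\le\vartheta(s_0,u_\star)=\lim_k\vartheta(s_k,u_k)=\liminf_k\varphi(s_k,t_k)$.

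Upper semicontinuity is where (PA) and the $t$-continuity hypothesis enter. Fix $(s_0,t_0')$ and $\varepsilon>0$; everything reduces to producing a \emph{slack near-optimal} point, namely $\bar{u}\in\mathcal{U}$ with $\nu(s_0,\bar{u})<t_0'$ \emph{strictly} and $\vartheta(s_0,\bar{u})\le\varphi(s_0,t_0')+\varepsilon$. Once such a $\bar{u}$ is available, continuity of $\nu$ and the strict inequality yield $\nu(s,\bar{u})\le t$ for all $(s,t)$ in some neighborhood of $(s_0,t_0')$ (in particular for $t$ slightly below $t_0'$), so $\varphi(s,t)\le\vartheta(s,\bar{u})$ there, and continuity of $\vartheta$ gives $\limsup_{(s,t)\to(s_0,t_0')}\varphi(s,t)\le\vartheta(s_0,\bar{u})\le\varphi(s_0,t_0')+\varepsilon$; letting $\varepsilon\downarrow0$ proves usc. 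To build $\bar{u}$: if some minimizer $u_0$ of $\vartheta(s_0,\cdot)$ over $F(s_0,t_0')$ already satisfies $\nu(s_0,u_0)<t_0'$, take $\bar{u}=u_0$. If $t_0'>t_0$, use monotonicity and the assumed continuity of $\varphi(s_0,\cdot)$ to pick $\tau\in[t_0,t_0')$ with $\varphi(s_0,\tau)\le\varphi(s_0,t_0')+\varepsilon$, and let $\bar{u}$ minimize $\vartheta(s_0,\cdot)$ over $F(s_0,\tau)$, which is slack since $\nu(s_0,\bar{u})\le\tau<t_0'$. If $t_0'=t_0$ and the optimal $u_0$ is moreover a minimizer of $\nu(s_0,\cdot)$, then (PA) provides $u_2$ with $\vartheta(s_0,u_2)=\vartheta(s_0,u_0)=\varphi(s_0,t_0')$ that minimizes $\nu(s',\cdot)$ for \emph{every} $s'$; since $\nu(s,u_2)=\min_u\nu(s,u)\le t_0\le t$ throughout the domain, $u_2$ is feasible for every $(s,t)$ and $\varphi(s,t)\le\vartheta(s,u_2)\to\vartheta(s_0,u_2)=\varphi(s_0,t_0')$, which is even stronger than usc at that point.

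The one remaining case --- and, I expect, the true subtlety --- is $t_0'=t_0$ with $\min_u\nu(s_0,u)<t_0'$ and every $\vartheta(s_0,\cdot)$-minimizer over $F(s_0,t_0')$ on the boundary $\{\nu(s_0,\cdot)=t_0'\}$: (PA) bites only on minimizers of $\nu(s_0,\cdot)$, which here need not be $\vartheta$-optimal, and the hypothesized continuity of $\varphi(s_0,\cdot)$ on $[t_0,+\infty)$ is essentially empty at the left endpoint $t_0$ (a non-increasing function is automatically right-continuous). The resolution is to observe that $\varphi(s_0,\cdot)$ extends, and in the settings of interest remains continuous, on the larger interval $[\min_u\nu(s_0,u),+\infty)$; left-continuity of $\varphi(s_0,\cdot)$ at $t_0'$ then makes the $\tau$-backoff argument above go through with $\tau\in(\min_u\nu(s_0,u),t_0')$. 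Thus the semicontinuity estimates themselves are standard, and the care lies in this endpoint bookkeeping, for which (PA) (via a universal $\nu$-minimizer that is feasible everywhere in the domain) and the full continuity of $\varphi(s_0,\cdot)$ are the two levers.
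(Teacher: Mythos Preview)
Your approach is essentially the same as the paper's: both arguments split into lower and upper semicontinuity, handle the lower direction by extracting a convergent subsequence of minimizers and passing to the limit, and handle the upper direction by exhibiting a near-optimal feasible point with slack in the $\nu$-constraint (the paper organizes this as the dichotomy $t>t_s$ versus $t=t_s$, where $t_s=\min_u\nu(s,u)$, and invokes (PA) in the latter case exactly as you do).

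Where you go beyond the paper is in flagging the endpoint case $t_0'=t_0$ with $t_{s_0}<t_0$ and the $\vartheta$-minimizer on the boundary. The paper's proof silently handles this within its ``$t>t_s$'' branch by backing off to $t'=t-\min\{\delta/2,(t-t_s)/2\}$, which can land below $t_0$; it then uses continuity of $\varphi(s,\cdot)$ at $t'$, implicitly on the extended domain $[t_s,+\infty)$ rather than on the stated domain $[t_0,+\infty)$. Your diagnosis and proposed fix---read the $t$-continuity hypothesis on the natural domain $[\min_u\nu(s_0,u),+\infty)$, which is what actually holds in the rate-distortion applications---are exactly right and match what the paper does tacitly. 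One minor wording point: a non-increasing function is not automatically right-continuous; what makes $\varphi(s_0,\cdot)$ right-continuous at $t_0$ here is monotonicity together with the lower semicontinuity you already established, so the hypothesis is indeed contentless at the left endpoint in the sense you intend.
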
  
\bigskip
\begin{Remark}
The proposition generalizes Lemma 2.2.2 in~\cite{korner}, which shows the continuity of the regular rate-distortion function, and the proof follows along similar lines.
\end{Remark}
\bigskip

The proposition yields immediately the continuity of $R(P_{XY},D_e)$ by identifying $\mathcal{S}$ with $\mathcal{P}_{\mathcal{XY}}$, $\mathcal{U}$ with the set of conditional probability distributions $P_{V|XY}$, $t_0$ with $D_{e,\min}$, and the functions $\nu$, $\vartheta$, and $\varphi$ with $\E[d_e(X,V)]$, $I(X;V|Y)$, and $R(P_{XY},D_e)$ respectively. It is easy to check that $D_{e,\min} = \max_{P_{XY}} \min_{P_{V|XY}} \E[d_e(X,V)]$ so that we can identify it with $t_0$. 
To see why  $\E[d_e(X,V)]$ and $I(X;V|Y)$ satisfy (PA), note the following. For notational convenience, we write  $\E[d_e(X,V)]$ as $d_e(P_{XY},P_{V|XY})$, and $I(X;V|Y)$ as $I(P_{XY},P_{V|XY})$. Suppose $d_e(P_{XY},P_{V|XY})= \min_{\hat{P}_{V|XY}} d_e(P_{XY}, \hat{P}_{V|XY})$ and let $D_e(x)=\min_{v \in \mathcal{V}} d_e(x,v)$ for $x \in \mathcal{X}$. Then for all $(x,v)$ such that 
$ d_e(x,v) > D_e(x)$, $P_{XV} (x,v) =0.$ Expanding $P_{XV}(x,v)$: \\ $P_{XV} (x,v)= \sum_{y} P_{XY}(x,y) P_{V|XY}(v|x,y)=0 \Rightarrow$ for all $y \in \mathcal{Y}$, $P_{XY}(x,y)=0$ or $P_{V|XY}(v|x,y)=0$. Then, define $P'_{V|XY}$ as follows:
\begin{itemize}
\item If $P_{XY}(x,y) > 0$, let $P'_{V|(X=x,Y=y)} = P_{V|(X=x,Y=y)}$.
\item If $P_{XY}(x,y) = 0$, let $P'_{V|(X=x,Y=y)}$ satisfy $P'_{V|(X=x,Y=y)} (v |x,y) =0$ if $d_e(x,v) > D_e(x).$
\end{itemize}
Then $P_{XY} P_{V|X,Y} =P_{XY} P'_{V|X,Y}$, thus $I(P_{XY},P_{V|XY}) = I(P_{XY},P'_{V|XY})$. Moreover, the definition of $P'_{V|XY}$ guarantees that $d_e(x,v) > D_e(x) \Rightarrow P'_{V|XY} (v |x,y) =0$ for all $y$. Therefore, for any joint distribution $P'_{XY}$, $d_e(P'_{XY}, P'_{V|XY})= \min_{\hat{P}_{V|XY}} d_e(P_{XY}, \hat{P}_{V|XY})$. 

Finally, to prove uniform continuity, note that $R(P_{XY},D_e)=R(P_{XY},D_{e,\max})$ for all $D_e \geq D_{e,\max}$. Therefore, $R(P_{XY},D_e)$ is uniformly continuous on the set $\mathcal{P}_{\mathcal{XY}} \times [D_{e,\max},\infty)$. Since it is also uniformly continuous on $\mathcal{P}_{\mathcal{XY}} \times [D_{e,\min},D_{e,\max}]$, the result is established. \hfill $\blacksquare$

\subsection{Proof of Property (P2)}
(P2): For fixed $P_X$, $R(P_X,D,D_e)$ is a finite-valued function of $(D,D_e)$. Moreover, for fixed $D_e$, $R(P_X,D,D_e)$ is a uniformly continuous function of the pair $(P_X,D)$.

Fix $P_X$. The maximization in~\eqref{Rmaxmin} is over a compact set, which is non-empty due to assumption (A3). Since $R(P_{XY},D_e)$ is a continuous function of $P_{XY}$, it is also continuous in $P_{Y|X}$ for fixed $P_X$. Therefore, the maximum is achieved. 

As for the continuity in $(P_X,D)$ for fixed $D_e$, we view $R(P_X,D,D_e)$ as a function of $(P_X,D)$, and $R(P_{XY},D_e)$ as function of $(P_X,P_{Y|X})$. In the terminology of Proposition~\ref{propcontinuity}, we identify $\mathcal{S}$ with $\mathcal{P}_X$, $~\mathcal{U}$ with the set of conditional probability distributions $P_{Y|X}$, $t_0$ with $D_{\min}$, and the functions $\nu$, $\vartheta$, and $\varphi$ with $\E[d(X,Y)]$, $-R(P_{XY},D_e)$, and $-R(P_X,D,D_e)$ respectively. Proving that $\E[d(X,Y)]$ and $R(P_{XY},D_e)$ satisfy (PA) follows along the same lines as proving $\E[d_e(X,V)]$ and $I(X;V|Y)$ satisfy (PA) . Moreover, if continuity holds, uniform continuity follows from the fact that $R(P_X,D,D_e)$ is constant for all $D \geq D_{\max}$.

 It remains to show that $R(P_X,D,D_e)$ is a continuous function of $D$ for fixed $P_X$ and $D_e$. The result of Proposition~\ref{propcontinuity} then applies immediately. To this end, consider the following proposition, the proof of which is given in Appendix~\ref{propcontinuityproof2} 
\begin{Proposition} \label{propcontinuityD}
Let $N$ be in $\mathbb{N}$, and let $\mathcal{T}$ be a non-empty compact subset of $\mathbb{R}^{N}$. Let $L$ be a real-valued continuous function defined on $\mathcal{T}$. Let $T_1 \supseteq T_2 \supseteq \cdots$ be a decreasing sequence of non-empty compact subsets of $\mathcal{T}$. Let $T= \bigcap_{i \geq 1} T_i$. Then,
\begin{equation*}
\lim_{k \rightarrow \infty} \max_{t \in T_k} L(t) = \max_{t \in T} L(t).       
\end{equation*} 
Moreover, let $S_1 \subseteq S_2 \subseteq \cdots$ be an increasing sequence of non-empty compact subsets of $\mathcal{T}$. Let $S = \overline{\bigcup_{i \geq 1} S_i}$ (where the bar denotes closure of the set). Then
\begin{equation*}
\lim_{k \rightarrow \infty} \max_{t \in S_k} L(t) = \max_{t \in S} L(t).        
\end{equation*} 
Consequently, if $\mathcal{T}$ is also convex, and $L_c$ is a real-valued convex and continuous function defined on $\mathcal{T}$ with $s_0 = \min_{t \in \mathcal{T}} L_c(t)$, then
\begin{align*}
\hat{L}(s) := \max_{t: L_c(t) \leq s} L(t)
\end{align*}
is continuous in $s \in [s_0, +\infty)$.
\end{Proposition}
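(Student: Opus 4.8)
The plan is to treat the three assertions in sequence: the first two are elementary compactness facts, and the third follows by combining them with the convexity of $L_c$. For the decreasing sequence, first observe that $T = \bigcap_{i \geq 1} T_i$ is non-empty (a nested family of non-empty compact sets has non-empty intersection) and compact, so $\max_{t \in T} L(t)$ is attained. Writing $m_k = \max_{t \in T_k} L(t)$ and $m = \max_{t \in T} L(t)$, the inclusions $T \subseteq T_{k+1} \subseteq T_k$ force $(m_k)$ to be non-increasing and bounded below by $m$, hence $m_k \downarrow \ell$ for some $\ell \geq m$. For the reverse inequality, pick maximizers $t_k \in T_k$, extract a subsequence $t_{k_j} \to t^\ast$ inside the compact set $T_1$, note that $t^\ast \in T_i$ for every $i$ (since $t_{k_j} \in T_i$ for $k_j \geq i$ and $T_i$ is closed), conclude $t^\ast \in T$, and read off $L(t^\ast) = \lim_j m_{k_j} = \ell \leq m$.

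For the increasing sequence, the set $S = \overline{\bigcup_{i \geq 1} S_i}$ is a closed bounded, hence compact, subset of $\mathcal{T}$, so $\max_{t \in S} L(t)$ is attained at some $t^\ast$. With $m_k = \max_{t \in S_k} L(t)$, monotonicity of $(S_k)$ makes $(m_k)$ non-decreasing, and $S_k \subseteq S$ bounds it above by $m = \max_S L$, so $m_k \uparrow \ell$ for some $\ell \leq m$. For the reverse inequality, approximate $t^\ast$ by a sequence $u_j \in \bigcup_i S_i$ with $u_j \to t^\ast$; each $u_j$ lies in some $S_{i_j}$, so $L(u_j) \leq m_{i_j} \leq \ell$, and continuity of $L$ gives $m = L(t^\ast) \leq \ell$.

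For the consequence, set $T_s = \{ t \in \mathcal{T} : L_c(t) \leq s \}$. For $s \geq s_0$ this is a non-empty (because $\min_{\mathcal{T}} L_c = s_0$), closed, convex, compact sublevel set, so $\hat L(s) = \max_{t \in T_s} L(t)$ is well-defined and clearly non-decreasing in $s$. For right-continuity at $s$, take $s_k \downarrow s$: the sets $T_{s_k}$ decrease, with $\bigcap_k T_{s_k} = \{ t : L_c(t) \leq s \} = T_s$, so the decreasing-sequence part yields $\hat L(s_k) \to \hat L(s)$. For left-continuity at $s > s_0$, take $s_k \uparrow s$ with $s_k \geq s_0$: the sets $T_{s_k}$ increase, with $\bigcup_k T_{s_k} = \{ t \in \mathcal{T} : L_c(t) < s \}$, so the increasing-sequence part applies provided one knows that $\overline{\{ t \in \mathcal{T} : L_c(t) < s \}} = T_s$.

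I expect this last closure identity to be the only genuine obstacle, and it is exactly the step that uses convexity; note it is false at $s = s_0$, but there only right-continuity is needed, so that case is already covered. To establish it, fix a minimizer $t_0^\ast$ with $L_c(t_0^\ast) = s_0$; for any $t \in \mathcal{T}$ with $L_c(t) = s$ and any $\lambda \in (0,1]$, convexity gives $L_c\big((1-\lambda) t + \lambda t_0^\ast\big) \leq (1-\lambda) s + \lambda s_0 < s$, so letting $\lambda \to 0$ exhibits $t$ as a limit of points of $\{ L_c < s \}$; the opposite inclusion is immediate since $T_s$ is closed. Right- and left-continuity together then give continuity of $\hat L$ on $[s_0, +\infty)$, completing the proof.
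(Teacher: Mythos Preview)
Your proof is correct. It reaches the same conclusion as the paper by a slightly different route in the first two parts, and essentially the same route in the third.

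For the decreasing and increasing sequences, the paper works with $\delta$-neighborhoods: it shows that eventually $T_k \subseteq \mathcal{B}_\delta(T)$ (respectively $S \subseteq \mathcal{B}_\delta(S_k)$) by noting that the complements form a nested family of compacts with empty intersection, and then invokes uniform continuity of $L$ on $\mathcal{T}$ to push the $\delta$-closeness of sets to $\epsilon$-closeness of the maxima. You instead use sequential compactness directly: extract a convergent subsequence of maximizers and identify the limit as a feasible point for the limiting set. Your argument is shorter and avoids the explicit appeal to uniform continuity; the paper's approach has the mild advantage of giving a quantitative containment of the sets themselves, not just control of the maxima, though that extra information is not used elsewhere.

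For the third part the overall plan is identical: right-continuity from the decreasing case, left-continuity from the increasing case, with the only real work being the closure identity $\overline{\{L_c < s\}} = \{L_c \leq s\}$ for $s > s_0$. The paper argues that a boundary point $\tilde t$ with $L_c(\tilde t) = s > s_0$ cannot be a local minimum of the convex function $L_c$, so every neighborhood meets $\{L_c < s\}$. You instead give the explicit line segment from $\tilde t$ to a global minimizer $t_0^\ast$, which lies in $\{L_c < s\}$ by convexity of $L_c$ and of $\mathcal{T}$; this is a cleaner, more constructive version of the same idea.
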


It follows immediately then that $R(P_X,D,D_e)$ is continuous in $D$ for fixed $P_X$ and $D_e$, since $\E[d(X,Y)]$ is convex and continuous in $P_{Y|X}$, and $R(P_{XY},D_e)$ is continuous in $P_{Y|X}$ (for fixed $P_X$).

\subsection{Proof of Property (P3)}
(P3): $R_e(P_X,D_e)-R(P_X,D) \leq R(P_X,D,D_e) \leq R_e(P_X,D_e).$

The upper bound is straightforward since $R(P_{XY},D_e)$, the rate-distortion function with side information, is always upper-bounded by $R_e(P_X,D_e)$. The lower bound is derived by considering a conditional $P_{Y|X}^\star$ that achieves the rate-distortion function.
\begin{align}
& R(P_X,D,D_e) \notag \\
 &  = \max_{ \substack{ P_{Y|X}: \\ \E[d(X,Y)] \leq D}} \min_{ \substack{P_{V|X,Y}:\\ ~~\E[d_e(X,V)] \leq D_e}} I(X;V|Y) \notag \\
& \geq \!\! \min_{ \substack{P_{V|X,Y}:\\ ~~\E[d_e(X,V)] \leq D_e}} \!\! \!\! \!\! H_{P_{XY}^\star}(X|Y) - H_{P_{XY}^\star P_{V|XY}} (X|V,Y) \label{eqp3proof1} \\
& \geq \!\! \min_{ \substack{P_{V|X,Y}:\\ ~~\E[d_e(X,V)] \leq D_e}} \hspace{-5mm}H_{P_{XY}^\star}(X|Y) - H_{P_{X} P_{V|X}} (X|V) \label{eqp3proof2} \\
& = -H_{P_X}(X)+H_{P_{XY}^\star}(X|Y) + \notag \\
& \qquad \qquad  \min_{ \substack{P_{V|X,Y}:\\ ~~\E[d_e(X,V)] \leq D_e}} \hspace{-5mm} H_{P_X}(X) - H_{P_{X} P_{V|X}} (X|V) \notag \\
& =  -R(P_X,D)+R_e(P_X,D_e). \notag
\end{align}

%
%

\section{Proof of Proposition~\ref{proplimit}} 
\label{proplimitproof}

First, consider the following proposition.
\begin{Proposition} \label{propinnerminconv}
For all $\epsilon > 0$, there exists $n_2(\epsilon,|\mathcal{X}|,|\mathcal{Y}|,|\mathcal{V}|)$, such that for all $n \geq n_2$, for all $D_e \geq D_{e,\min}$, for each $Q_{XY} \in \mathcal{Q}_{XY}^n$,
\begin{equation*}
\Big | \min_{ \substack{ P_{XYV} \in \\ \mathcal{Q}_{XYV}^n (Q_{XY},D_e)}} I_{P_{XYV}} (X;V|Y) - R(Q_{XY},D_e) \Big| \leq \epsilon.
\end{equation*}
\end{Proposition}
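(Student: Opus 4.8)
\emph{Proof proposal.} The plan is to establish the two inequalities separately, each with a threshold $n_2$ depending only on $\epsilon$ and the alphabet sizes. The lower bound, $\min_{P_{XYV} \in \mathcal{Q}_{XYV}^n(Q_{XY},D_e)} I_{P_{XYV}}(X;V|Y) \ge R(Q_{XY},D_e)$, is immediate and needs no largeness of $n$: every $P_{XYV}$ in $\mathcal{Q}_{XYV}^n(Q_{XY},D_e)$ has $P_{XY}=Q_{XY}$ and $\E_{P_{XYV}}[d_e(X,V)]\le D_e$, so its conditional $P_{V|X,Y}$ is feasible for the minimization in~\eqref{Rsideinfo} and $I_{P_{XYV}}(X;V|Y)$ is exactly that objective; hence the left side is at least the infimum $R(Q_{XY},D_e)$. (Nonemptiness of the minimizing set, so that the minimum is well defined, will follow from the construction used for the other inequality.) It thus remains to show $\min_{P_{XYV} \in \mathcal{Q}_{XYV}^n(Q_{XY},D_e)} I_{P_{XYV}}(X;V|Y) \le R(Q_{XY},D_e)+\epsilon$ for all $n\ge n_2$, uniformly in $Q_{XY}$ and $D_e$.

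For this upper bound I would fix $Q_{XY}$ and $D_e$, let $P^\star_{V|X,Y}$ achieve the minimum defining $R(Q_{XY},D_e)$ (it exists by compactness, as in the proof of (P1)), and set $P^\star_{XYV}=Q_{XY}P^\star_{V|X,Y}$. The goal is to round $P^\star_{XYV}$ to a joint type $\hat P_{XYV}$ with denominator $n$ keeping the $X,Y$-marginal equal to $Q_{XY}$ and the distortion at most $D_e$, while moving each entry by at most $1/n$. I would round block by block: for each $(x,y)$ with $m_{xy}:=nQ_{XY}(x,y)>0$, write $a_v:=m_{xy}P^\star_{V|X,Y}(v|x,y)$ and $j:=m_{xy}-\sum_v\lfloor a_v\rfloor\in\{0,1,\dots,|\mathcal V|-1\}$, let $S$ be the set of the $j$ symbols $v$ with smallest $d_e(x,v)$ (ties broken arbitrarily), and put $\hat P_{XYV}(x,y,v):=(\lfloor a_v\rfloor+\mathbf{1}\{v\in S\})/n$; for $(x,y)$ with $m_{xy}=0$ there is nothing to do. The main obstacle, and the reason the naive rounding fails, is that the distortion constraint may hold with no slack (e.g.\ when $D_e=D_{e,\min}$), so the rounding must not increase the distortion; this choice of $S$ achieves that, because the fractional-part vector $(a_v-\lfloor a_v\rfloor)_v$ lies in $[0,1]^{|\mathcal V|}$ and sums to the integer $j$, and the linear functional $\theta\mapsto\sum_v\theta_v d_e(x,v)$ over $\{\theta\in[0,1]^{|\mathcal V|}:\sum_v\theta_v=j\}$ is minimized at a $0/1$ vector supported on the $j$ smallest coefficients, so $\sum_{v\in S}d_e(x,v)\le\sum_v(a_v-\lfloor a_v\rfloor)d_e(x,v)$ and hence $\sum_v(\lfloor a_v\rfloor+\mathbf{1}\{v\in S\})d_e(x,v)\le\sum_v a_v d_e(x,v)$. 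Summing over $(x,y)$ and dividing by $n$ gives $\E_{\hat P}[d_e(X,V)]\le\E_{P^\star}[d_e(X,V)]\le D_e$; by construction $\hat P_{XY}=Q_{XY}$; and $|\lfloor a_v\rfloor+\mathbf{1}\{v\in S\}-a_v|\le 1$ gives $\|\hat P_{XYV}-P^\star_{XYV}\|_\infty\le 1/n$. Hence $\hat P_{XYV}\in\mathcal{Q}_{XYV}^n(Q_{XY},D_e)$, which in particular shows this set is nonempty.

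Finally, I would control $I_{\hat P}(X;V|Y)-I_{P^\star}(X;V|Y)$ using only the bound $\|\hat P_{XYV}-P^\star_{XYV}\|_\infty\le 1/n$. Since $I(X;V|Y)=H(XY)+H(VY)-H(Y)-H(XVY)$ is continuous on the compact simplex $\mathcal{P}_{\mathcal{XYV}}$, it is uniformly continuous there with a modulus depending only on $|\mathcal X|,|\mathcal Y|,|\mathcal V|$; so for the given $\epsilon$ there is $\delta=\delta(\epsilon,|\mathcal X|,|\mathcal Y|,|\mathcal V|)>0$ with $\|P-P'\|_\infty<\delta\Rightarrow|I_P(X;V|Y)-I_{P'}(X;V|Y)|<\epsilon$, and I would take $n_2$ with $1/n_2<\delta$. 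For $n\ge n_2$ this yields $\min_{P_{XYV}\in\mathcal{Q}_{XYV}^n(Q_{XY},D_e)} I_{P_{XYV}}(X;V|Y)\le I_{\hat P}(X;V|Y)\le I_{P^\star}(X;V|Y)+\epsilon=R(Q_{XY},D_e)+\epsilon$, and together with the lower bound this gives the claim. Since $n_2$, $\delta$, and the rounding bound $1/n$ are all independent of $Q_{XY}$ and of $D_e$, the estimate is uniform, as required.
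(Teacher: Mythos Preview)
Your proof is correct and follows essentially the same approach as the paper: the trivial lower bound, then rounding the optimal $P^\star_{V|X,Y}$ to a conditional type while redirecting mass toward low-distortion symbols so that the constraint $\E[d_e(X,V)]\le D_e$ is preserved, followed by uniform continuity of $I(X;V|Y)$. The only minor variation is that the paper dumps the entire rounding deficit onto the single symbol $v(x)=\argmin_v d_e(x,v)$, whereas you spread it over the $j$ cheapest symbols via an LP/vertex argument; your version yields the slightly tighter bound $\|\hat P-P^\star\|_\infty\le 1/n$ on the joint (versus roughly $|\mathcal V|/n$ in the paper), but the structure of the argument is otherwise identical.
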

\begin{proof}
It follows directly from the definition that 
\begin{equation*}
\min_{ \substack{ P_{XYV} \in \\ \mathcal{Q}^n (Q_{XY},D_e)}} I_{P_{XYV}} (X;V|Y) \geq R(Q_{XY},D_e).
\end{equation*}
So, we only need to show the other direction. To that end, let $\delta > 0$ be small enough such that 
\begin{align} 
\| P_{XYV} - P'_{XYV} \| \leq \delta \Rightarrow \notag \\
 \left| I_{P_{XYV}} (X;V|Y) - I_{P'_{XYV}} (X;V|Y) \right| \leq \epsilon, \label{propinnereq3}
\end{align}
where $\| \cdot \|$ is used to indicate the $L_2$-norm. 
Let $n \geq n_2 \geq |\mathcal{V}|\sqrt{|\mathcal{X}||\mathcal{Y}||\mathcal{V}|} / \delta$. Fix $Q_{XY} \in \mathcal{Q}_\mathcal{XY}^n$, and let $P^\star_{V|XY}$ be the conditional distribution achieving the minimum in $R(Q_{XY},D_e)$. We construct a conditional distribution $P'_{V|XY}$ as follows. For each $(x,y) \in \mathcal{X} \times \mathcal{Y}$, we will choose $P'_{V|X=x,Y=y}$ from $\mathcal{Q}_\mathcal{V}^{nQ_{XY}(x,y)}$, i.e., the set of rational PMFs over $\mathcal{V}$ with denominator $nQ_{XY}(x,y)$ (if $Q_{XY}(x,y)=0$, then we can choose $P'_{V|X=x,Y=y}$ to be any distribution). This guarantees that $Q_{XY}P'_{V|XY}$ is in $\mathcal{Q}_{\mathcal{XYV}}^n $. Let $v(x) = \argmin_{v \in \mathcal{V}} d_e(x,v)$ for $x \in \mathcal{X}$ (if more than one $v$ achieves the minimum, choose one arbitrarily). We construct $P'_{V|XY}$ by rounding $P^\star_{V|XY}$ as follows. For each $(x,y) \in \mathcal{X} \times \mathcal{V}$, for $v \neq v(x)$, we set $P'_{V|XY}(v|x,y)$ to be the largest integer multiple of $1/(nQ_{XY}(x,y))$ that is smaller than $P^\star_{V|XY}(v|x,y)$, i.e., we round down with resolution $1/(nQ_{XY}(x,y))$ and denote this operation by $\lfloor . \rfloor_{nQ_{XY}(x,y)}$. Finally, we set $P'_{V|XY}(v(x)|x,y)$ appropriately to make $P'_{V|XY}(.|x,y)$ a valid probability distribution. It is easy to see that, for such a choice, 
\begin{equation*}
\left|P'_{V|XY}(v|x,y)-P^\star_{V|XY} (v|x,y) \right| \leq \frac{|\mathcal{V}|}{nQ_{XY}(x,y)}.
\end{equation*}
Moreover, this readily implies that
\begin{equation} \label{propinnereq4}
\left\| Q_{XY} P'_{V|XY} - Q_{XY}P^\star_{V|XY} \right\| \leq \frac{|\mathcal{V}|\sqrt{|\mathcal{X}||\mathcal{Y}||\mathcal{V}|}}{n} \leq \delta.
\end{equation}
Let $P^\star_{XYV} = Q_{XY} P^\star_{V|XY}$, and $P'_{XYV} = Q_{XY} P'_{V|XY}$. Now, note that
\begin{align*}
D_e & \geq \E_{P^\star_{XYV}} [d_e(X,V)]  \\
& = \sum_{x,y} \sum_{v} Q_{XY}(x,y) P^\star_{V|XY}(v|x,y) d_e(x,v)  \\
&  =  \sum_{x,y} \sum_{v} Q_{XY}(x,y) \lfloor P^\star_{V|XY}(v|x,y)\rfloor_{nQ_{XY}(x,y)} d_e(x,v)  \\
& \quad +  \sum_{x,y} \sum_{v} Q_{XY}(x,y) \cdot \\
&  \left( P^\star_{V|XY}(v|x,y) - \lfloor P^\star_{V|XY}(v|x,y)\rfloor_{nQ_{XY}(x,y)} \right) d_e(x,v) \\
& \geq \sum_{x,y} \sum_{v} Q_{XY}(x,y) \lfloor P^\star_{V|XY}(v|x,y)\rfloor_{nQ_{XY}(x,y)} d_e(x,v)  \\
& \quad +  \sum_{x,y} \sum_{v} Q_{XY}(x,y) \cdot \\
& \left( P^\star_{V|XY}(v|x,y) \! - \lfloor P^\star_{V|XY}(v|x,y)\rfloor_{nQ_{XY}(x,y)} \! \right) \! d_e(x,v(x)) \\
& = \E_{P'_{XYV}} [d_e(X,V)].
\end{align*}
Therefore, 
\begin{align*}
\min_{\substack{P_{XYV} \in \\ \mathcal{Q}^n(Q_{XY},D_e)}} I_{P_{XYV}}(X;V|Y) & \leq I_{P'_{XYV}}(X;V|Y) \\
& \leq  I_{P^\star_{XYV}}(X;V|Y) + \epsilon \\
& = R(Q_{XY},D_e) + \epsilon,
\end{align*}
where the second inequality follows from~\eqref{propinnereq3} and~\eqref{propinnereq4}.

\end{proof}

Similarly, we have the following proposition.
\begin{Proposition} \label{propoutermaxconv}
For all $\epsilon > 0$, there exists $n_3(\epsilon,|\mathcal{X}|,|\mathcal{Y}|,d_e)$, such that for all $n \geq n_3$, $D \geq D_{\min}$, $D_e \geq D_{e,\min}$, and for each $Q_{X} \in \mathcal{Q}_{X}^n$, 
\begin{equation*}
\Big | \max_{ \substack{ P_{XY} \in \\ \mathcal{Q}_{XY}^n (Q_{X},D)}} R(Q_{XY},D_e) - R(Q_X,D,D_e) \Big| \leq \epsilon.
\end{equation*}
\end{Proposition}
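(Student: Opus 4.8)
The plan is to prove the two inequalities separately. The bound $\max_{P_{XY} \in \mathcal{Q}_{XY}^n(Q_X,D)} R(Q_{XY},D_e) \leq R(Q_X,D,D_e)$ is immediate: every $P_{XY} \in \mathcal{Q}_{XY}^n(Q_X,D)$ has $X$-marginal $Q_X$ and satisfies $\E_{P_{XY}}[d(X,Y)] \leq D$, hence is feasible for the maximization in~\eqref{Rmaxmin}. The work is therefore in showing $\max_{P_{XY} \in \mathcal{Q}_{XY}^n(Q_X,D)} R(Q_{XY},D_e) \geq R(Q_X,D,D_e) - \epsilon$, and the argument mirrors the rounding construction in the proof of Proposition~\ref{propinnerminconv}.

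By Proposition~\ref{continuity} (proof of (P2)) the maximum in~\eqref{Rmaxmin} is attained; let $P^\star_{Y|X}$ be a maximizer and write $P^\star_{XY} = Q_X P^\star_{Y|X}$. I would round $P^\star_{Y|X}$ to a conditional type $P'_{Y|X}$ as follows. For each $x$ with $Q_X(x) > 0$, let $y(x) = \argmin_{y \in \mathcal{Y}} d(x,y)$ (ties broken arbitrarily); for every $y \neq y(x)$ set $P'_{Y|X}(y|x)$ equal to $P^\star_{Y|X}(y|x)$ rounded down to the nearest integer multiple of $1/(nQ_X(x))$, and assign the leftover mass to $P'_{Y|X}(y(x)|x)$; for $x$ with $Q_X(x) = 0$, take $P'_{Y|X=x}$ arbitrary. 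Then $Q_X P'_{Y|X} \in \mathcal{Q}_{XY}^n$ with $X$-marginal $Q_X$, and since the rounding only shifts conditional mass from letters $y \neq y(x)$ onto the cost-minimizing letter $y(x)$, we get $\E_{Q_X P'_{Y|X}}[d(X,Y)] \leq \E_{P^\star_{XY}}[d(X,Y)] \leq D$, so $Q_X P'_{Y|X} \in \mathcal{Q}_{XY}^n(Q_X,D)$. Moreover $|P'_{Y|X}(y|x) - P^\star_{Y|X}(y|x)| \leq |\mathcal{Y}|/(nQ_X(x))$ for all $(x,y)$, which yields $\|Q_X P'_{Y|X} - P^\star_{XY}\| \leq |\mathcal{Y}|\sqrt{|\mathcal{X}||\mathcal{Y}|}/n$ in the $L_2$-norm, a bound uniform in $Q_X$, $D$, and $D_e$.

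Finally I would invoke the uniform continuity of $R(P_{XY},D_e)$ in the pair $(P_{XY},D_e)$ from (P1) of Proposition~\ref{continuity}: choose $\delta = \delta(\epsilon,|\mathcal{X}|,|\mathcal{Y}|,d_e) > 0$ so that $\|P_{XY} - P'_{XY}\| \leq \delta$ implies $|R(P_{XY},D_e) - R(P'_{XY},D_e)| \leq \epsilon$ for every $D_e \geq D_{e,\min}$, and set $n_3 = \lceil |\mathcal{Y}|\sqrt{|\mathcal{X}||\mathcal{Y}|}/\delta \rceil$. For $n \geq n_3$ the perturbation bound above is $\leq \delta$, hence $R(Q_X P'_{Y|X},D_e) \geq R(P^\star_{XY},D_e) - \epsilon = R(Q_X,D,D_e) - \epsilon$, and since $Q_X P'_{Y|X}$ is a feasible joint type this gives the required lower bound on the discrete maximum, uniformly in $D$, $D_e$, and $Q_X$.

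The main obstacle, exactly as in Proposition~\ref{propinnerminconv}, is ensuring that the rounded conditional still \emph{exactly} satisfies the distortion constraint, so that no $\epsilon$-backoff in $D$ is needed; this is precisely what the ``round down everywhere except $y(x)$'' device secures, since moving conditional mass onto the distortion-minimizing letter can only decrease $\E[d(X,Y)]$. A secondary point to verify is that the continuity modulus from (P1) is genuinely uniform over $D_e \geq D_{e,\min}$, which holds because $R(P_{XY},D_e)$ is constant in $D_e$ for $D_e \geq D_{e,\max}$; this is what makes $n_3$ independent of $D_e$ (and of $D$ and $Q_X$).
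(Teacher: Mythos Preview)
Your proposal is correct and is precisely the argument the paper intends: it states that the proof ``follows along the same lines as that of Proposition~\ref{propinnerminconv}, and is thus omitted,'' and you have carried out that analogy faithfully---the immediate inequality for the discrete maximum, the rounding of the optimal $P^\star_{Y|X}$ onto the cost-minimizing letter $y(x)$ to preserve the constraint $\E[d(X,Y)]\le D$ exactly, the $L_2$ perturbation bound $|\mathcal{Y}|\sqrt{|\mathcal{X}||\mathcal{Y}|}/n$, and the appeal to the uniform continuity of $R(P_{XY},D_e)$ from~(P1).
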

The proof follows along the same lines as that of Proposition~\ref{propinnerminconv}, and is thus omitted. \hfill $\blacksquare$ \\

By the previous two propositions, for any given $\epsilon > 0$, we can set $n$ large enough to satisfy
\begin{equation*}
 \left | I_{P^\star_n(Q_X)} (X;V|Y) - R(Q_X,D,D_e) \right| \leq \epsilon,
\end{equation*} 
for all $Q_X \in \mathcal{Q}_X^n$. Therefore,
\begin{align*}
\min_{Q_X \in \mathcal{Q}_X^n} D(Q_X||P)+R(Q_X,D,D_e) - \epsilon \leq \\
\min_{Q_X \in \mathcal{Q}_X^n} D(Q_X||P)+ I_{P^\star_n(Q_X)} (X;V|Y) \leq \\
\min_{Q_X \in \mathcal{Q}_X^n} D(Q_X||P)+R(Q_X,D,D_e) + \epsilon
\end{align*}
By taking the limit as $n$ goes to infinity, and noting that $\epsilon$ is arbitrary, the proof is concluded.

\section{Proof of Proposition~\ref{propcontinuityR}}
\label{AppendixpropcontinuityR}

\subsection{Proof of Property (P4)}

(P4): For fixed $P_X$, $R(P_X,R,D,D_e)$ is a finite-valued function of $(R,D,D_e)$. Moreover, for fixed $D_e$,
 $R(P_X,R,D,D_e)$ is continuous in the triple $(P_X,R,D)$ over the set $ \mathcal{S} = \left\lbrace (P_X,R,D) : P_X \in \mathcal{P}_{\mathcal{X}}, D \geq D_{\min},R > R(P_X,D) \right\rbrace $.

Recall, for $P_X$ satisfying $R(P_X,D) \leq R$,
\begin{align*}
R(P_X,R,D,D_e) = \max_{\substack{ P_{Y|X}: \\ \E[d(X,Y)] \leq D \\ I(X;Y) \leq R}} R(P_{XY},D_e).
\end{align*}
For fixed $P_X$, let $S_{D,R} = \left\lbrace P_{Y|X}: \E[d(X,Y)] \leq D, I(X;Y) \leq R \right\rbrace$. Then $S_{D,R}$ is compact, and non-empty since $D \geq D_{\min}$ and $R \geq R(P_X,D)$. Since $R(P_{XY},D_e)$ is a continuous function of $P_{XY}$ (by Proposition~\ref{continuity}), it is also continuous in $P_{Y|X}$ for fixed $P_X$. Therefore, the maximum is achieved. 

To prove continuity of $R(P_X,R,D,D_e)$ in $(P_X,R,D)$, first consider the following claims.

{\emph{Claim 1:}} For fixed $P_X$, $D_e$, and $D$,  $R(P_X,R,D,D_e)$ is continuous in $R$, where $R \in [R(P_X,D), +\infty)$.

This follows from the third part of Proposition~\ref{propcontinuityD} in Appendix~\ref{continuityproof} by identifying $\mathcal{T}$ with $\{P_{Y|X}: \E[d(X,Y)] \leq D \}$ (which is compact, convex, and non-empty since $D \geq D_{\min}$), $L_c$ with $I(X;Y)$ which is convex and continuous in $P_{Y|X}$, $s_0$ with $R(P_X,D)$, $L$ with $R(P_{X}P_{Y|X},D_e)$, and $\hat{L}$ with $R(P_X,R,D,D_e)$.

{\emph{Claim 2:}} For fixed $P_X$, $D_e$, and $R$,  $R(P_X,R,D,D_e)$ is continuous in $D$, where $D \in [D(P_X,R), +\infty)$ and $D(P_X,R) := \min_{P_{Y|X}: I(X;Y) \leq R} \E[d(X,Y)]$ is the distortion-rate function.

This follows from a similar argument.

We are now ready to prove continuity in the triple $(P_X,R,D)$ over $\mathcal{S} = \left\lbrace (P_X,R,D) : P_X \in \mathcal{P}_{\mathcal{X}}, D \geq D_{\min},R > R(P_X,D) \right\rbrace$. 

To that end, fix any $(P,R,D) \in \mathcal{S}$ and consider any sequence $(P_k,R_k,D_k)$ converging to  $(P,R,D)$. First, we show that $ \liminf_{k \rightarrow \infty} R(P_k,R_k,D_k,D_e) \geq R(P,R,D,D_e).$ 
 Consider any $\epsilon > 0$. By continuity of $R(P,R,D,D_e)$ in $R$ (for fixed $P$, $D$, and $D_e$), we can choose $R'$ such that $R(P_X,D) < R' < R$ and $ R(P,R',D,D_e) \geq R(P,R,D,D_e) - \epsilon/2$.  We now consider two cases depending on the value of $D$. Let $D_0 = \min_{P_{Y|X}} \E[d(X,Y)]$.
 
 If $D > D_0$: note that $D(P,R)$ is non-increasing in $R$, therefore $D'(P,R) \leq 0$. Moreover, it is convex in $R$ and $R(P,D)$ does not achieve its minimum ($D(P,R(P,D))=D>D_0$), hence $D'(P,R(P,D)) < 0$. Therefore, $R > R(P,D) \Rightarrow D(P,R) < D$. Now 
 choose $D'$ such that $D(P,R) < D' < D$ and $R(P,R',D',D_e) \geq R(P,R',D,D_e) - \epsilon/2$. Let $P_{Y|X}^\star$ be a maximizer for $R(P,R',D',D_e)$. 
 
 If $D=  D_0$: set $D'=D$ and $P'_{Y|X}$ be a maximizer for $R(P,R',D,D_e)$. 
 Let $D(x) = \min_{y \in \mathcal{Y}} d(x,y)$ for $x \in \mathcal{X}$. Then $P'_{Y|X}$ must satisfy the following property: for all $(x,y)$ such that $d(x,y) > D(x)$, $P(x) = 0$ or $P'_{Y|X} (y|x) = 0$. We can construct $P_{Y|X}^\star$ such that $d(x,y) > D(x) \Rightarrow P_{Y|X}^\star(y|x) = 0$, and $P(x) > 0 \Rightarrow P^\star_{Y|X=x} = P'_{Y|X=x}$. As such, $PP'_{Y|X} = PP_{Y|X}^\star$.
 
 We claim that $P^\star_{Y|X}$ is feasible for the maximization in $R(P_k,R_k,D_k,D_e)$ for sufficiently large $k$. Indeed, $I(P_k;P^\star_{Y|X}) \rightarrow I(P;P^\star_{Y|X}) \leq R' < R$. Then for sufficiently large $k$, $I(P;P^\star_{Y|X}) \leq R_k$. Moreover, if $ D > D_0$, then $\E[d(P_k,P^\star_{Y|X})] \rightarrow \E[d(P,P^\star_{Y|X})] \leq D' < D$. Then for sufficiently large $k$, $\E[d(P_k,P^\star_{Y|X})] \leq D_k$. Similarly, if $D=D_0$, then $\E[d(P_k,P^\star_{Y|X})] = \min_{P_{Y|X}} \E[d(P_k,P_{Y|X})] \leq D_{\min} \leq D_k$, where the first equality follows from the construction of $P^\star_{Y|X}$. So we get
 \begin{align*}
 \liminf_{k \rightarrow \infty} R(P_k,R_k,D_k,D_e) & \geq \liminf_{k \rightarrow \infty} R(P_k P^\star_{Y|X},D_e) 	\\
 & = R( P P^\star_{Y|X},D_e) \\
 & = R(P,R',D',D_e) \\
 & \geq R(P,R,D,D_e) -\epsilon,
 \end{align*}
 where the first equality follows from the continuity of $R(P_{XY},D_e)$ in $P_{XY}$. Noting that $\epsilon$ is arbitrary, we get our first inequality.
 
 On the other hand, let $P_{Y|X}^{(k)}$ be a maximizer for $R(P_k,R_k,D_k,D_e)$. Consider a sequence of integers $\{k_j\}$ such that 
 \begin{align*}
 R(P_{k_j},R_{k_j},D_{k_j},D_e) \rightarrow \limsup_{k \rightarrow \infty}  R(P_k,R_k,D_k,D_e).
 \end{align*}
Let $P_{Y|X}^{(k_j)}$ be the corresponding subsequence of maximizers. Since the set of conditional distributions $\{P_{Y|X}\}$ is bounded, $\{ P_{Y|X}^{(k_j)} \}$ has a convergent subsequence $P_{Y|X}^{(k_{j_\ell})}$. Let $P^\star_{Y|X}$ be its limit. We have, $I(P;P^\star_{Y|X}) = \lim_{\ell \rightarrow \infty} I(P;P^{(k_{j_{\ell}})}_{Y|X}) \leq \lim_{\ell \rightarrow \infty} R_{k_{j_\ell}} = R$. Similarly, $\E[d(P,P^\star_{Y|X})] = \lim_{\ell \rightarrow \infty} \E[d(P,P^{(k_{j_{\ell}})}_{Y|X})] \leq \lim_{\ell \rightarrow \infty} D_{k_{j_\ell}} = D$. Therefore,
\begin{align*}
R(P,R,D,D_e) & \geq R(PP^\star_{Y|X},D_e) \\
& = \lim_{\ell \rightarrow \infty } R(P_{k_{j_\ell}} P_{Y|X}^{k_{j_\ell}}, D_e) \\
& = \lim_{\ell \rightarrow \infty} R(P_{k_{j_\ell}},R_{k_{j_\ell}},D_{k_{j_\ell}},D_e) \\ 
&= \limsup_{k \rightarrow \infty} R(P_k,R_k,D_k,D_e).
\end{align*}

\subsection{Proof of Property (P5)}

(P5): $R_e(P_X,D_e) - R(P_X,D) \leq R(P_X,R,D,D_e) \leq R(P_X,D,D_e) \leq R_e(P_X,D_e).$

The upper bound follows straightforwardly from the definition and (P3). The lower bound follows from the proof of (P3). Indeed, the bound in (P3) was derived by considering a conditional $P^\star_{Y|X}$ that achieves the rate-distortion function. As such, this choice is feasible since $I_{P^\star_{XY}} (X;Y) = R(P_X,D) \leq R$. 
\section{Proof of Lemma~\ref{lemmahamming}}
\label{lemmahammingproof}
Note that the second equality follows simply from the evaluation of $R_e(Q,D_e)-R(Q,D)$. So we only need to show the first equality.

Note that (P3) asserts that $R(Q,D,D_e) \geq R_e(Q,D_e) - R(Q,D)$, so we only need to show the reverse direction. Moreover, 
if $H(Q) \leq H(D)$, $R(Q,D)=0$. It then follows from (P3) that $R(Q,D,D_e)=R_e(Q,D_e)$. It remains to show that, for $Q$ satisfying $H(Q) \geq H(D)$,
\begin{align*}
R(Q,D,D_e) \leq R_e(Q,D_e) - R(Q,D).
\end{align*}
\begin{Remark}
The following proof was suggested by the reviewer, and it significantly simplifies our previous proof.
\end{Remark}
To that end, let $P_{Y|X}$ satisfy $\E[d(X,Y)] \leq D$, $\tilde{X}=X \oplus Y$, $\tilde{V} = V \oplus Y$ and consider
\begin{align*}
& \min_{P_{V|XY}: \E[d(X,V)] \leq D_e} I(X;V|Y) \\
& =  \min_{P_{V|XY}: \E[d(X\oplus Y,V\oplus Y)] \leq D_e} I(X\oplus Y;V\oplus Y|Y) \\
&  = \min_{P_{\tilde{V}|\tilde{X} Y}: \E[d( \tilde{X}, \tilde{V}) ] \leq D_e} I(\tilde{X};\tilde{V}|Y) \\
& \leq \min_{\substack{ P_{  \tilde{V}|\tilde{X} }: \E[d( \tilde{X}, \tilde{V}) ] \leq D_e \\ V-X-Y } } I(\tilde{X};\tilde{V}|Y) \\
& \leq \min_{P_{\tilde{V}|\tilde{X} }: \E[d( \tilde{X}, \tilde{V}) ] \leq D_e} I(\tilde{X};\tilde{V}) \\
& = [H(\tilde{X}) - H(D_e)]^+ 
\stackrel{\text{(a)}} \leq H(D)-H(D_e),
\end{align*}
where (a) follows from the fact that $\Pr( \tilde{X} = 1) = \E[d(X,Y)] \leq D.$ Therefore,
\begin{align*}
R(Q,D,D_e) & = \max_{\substack{P_{Y|X}:  \E[d(X,Y)] \leq D}} \min_{ \substack{P_{V|XY}: \\ \E[d(X,V)] \leq D_e}} I(X;V|Y) \\
& \leq H(D)-H(D_e) = R_e(Q,D_e)-R(Q,D),
\end{align*}
as desired. \hfill $\blacksquare$

\section{Proof of Lemma~\ref{lemmaachievability}} \label{mainlemmaproof}

{\bf Proof of 1):}
For $x^n \in T_{Q_X}$, and $m \in [N]$, let
\begin{align*}
N_{x^n,m} = \mathbb{I} \{(x^n,Y_m^n) \in T_{Q_{XY}} \}, \text{ so that } N_{x^n} = \!\sum_{m=1}^N \! N_{x^n,m}. 
\end{align*}
Note that, $N_{x^n,m} \sim Ber(\beta)$, where
\begin{align}
\beta = \Pr ( (x^n,Y_m^n) \in T_{Q_{XY}}) = \frac{|T_{Q_{Y|X}}(x^n)|}{|T_{Q_Y}|}, \notag \\
\Rightarrow 2^{-n(I_{Q_{XY}} (X;Y) + \epsilon/2 )} \leq \beta \leq  2^{-n(I_{Q_{XY}} (X;Y) -\epsilon/2 )}. \label{eqBernbound}
\end{align}
Therefore,
\begin{align}
\Pr(N_{x^n} = 0) & = \Pr (N_{x^n,m}=0, \forall m \in [N]) \notag \\
& \stackrel{\text{(a)}}= \prod_{m=1}^N (1-\beta) \stackrel{\text{(b)}} \leq e^{-\beta N}  \leq e^{-2^{n \epsilon/6}}, \label{eqNxnzero}
\end{align}
where (a) follows from the independence of $N_{x^n,m}$ for different $m$'s, and (b) follows from the fact that $(1-t)^N \leq e^{-tN}$. On the other hand, 
\begin{align}
\Pr(N_{x^n} > 2^{2n\epsilon}) & = \Pr \left( \sum_{m=1}^N N_{x^n,m} > 2^{2n\epsilon} \right) \notag \\
& \stackrel{\text{(a)}} \leq \left( \frac{eN\beta}{2^{2n\epsilon}} \right)^{2^{2n\epsilon}} \leq (e2^{-n \epsilon/2})^{2^{2n\epsilon}}, \label{eqNxntoolarge}
\end{align}
where (a) follows from the Chernoff bound (cf.~\cite[Lemma 2]{cuff2014henchman}). Using equations~\eqref{eqNxnzero} and~\eqref{eqNxntoolarge} and the union bound, we get
\begin{align*}
\Pr(\mathcal{E}) \leq |\mathcal{X}|^n \left( (e2^{-n \epsilon/2})^{2^{2n\epsilon}} + e^{-2^{n \epsilon/6}} \right) \leq e^{-2^{n \epsilon/7}},
\end{align*} establishing~\eqref{eqlemmaach1}. \hfill $\blacksquare$

{\bf Proof of 2):} To show that~\eqref{eqlemmaach2} holds, consider $\mathcal{C}^n \notin \mathcal{E}$, and $(x^n,m)$ where $m \in C(x^n)$, \begin{align}
P_{X^n|M}^{\mathcal{C}} (x^n|m) & = \frac{P_{M|X^n}^{\mathcal{C}} (m|x^n)}{\sum_{x^n \in T_{Q_{X|Y}}(y_m^n)} P_{M|X^n}^{\mathcal{C}} (m|x^n) } \notag \\
&  \leq \frac{1}{2^{n(H_{Q_{XY}}(X|Y)-\epsilon)}2^{-2n\epsilon} } \notag \\
& = 2^{-n(H_{Q_{XY}}(X|Y)-3\epsilon)}. \label{eqboundpx|y}
\end{align}
Then,
\begin{align}
\label{eqboundguessing}
& \Pr(d_e(X^n,v^n) \leq D_e |M=m,\mathcal{C}^n) \notag \\
 & = \sum_{x^n: d_e(x^n,v^n) \leq D_e} P_{X^n|M}^{\mathcal{C}} (x^n|m) \notag \\
& \leq \sum_{ \substack{ x^n: d_e(x^n,v^n) \leq D_e \\ (x^n,y_m^n) \in T_{Q_{XY}}}}  2^{-n(H_{Q_{XY}}(X|Y)-3\epsilon)} \notag \\
& \stackrel{\text{(a)}} \leq  \max_{P_{XYV} \in \mathcal{Q}^n(Q_{XY},D_e)} \!\!\!\! 2^{ n \left(H_{P_{XYV}} ( X|V,Y)+\epsilon \right) } 2^{-n(H_{Q_{XY}}(X|Y)-3\epsilon)} \notag \\
& =  \max_{P_{XYV} \in \mathcal{Q}^n(Q_{XY},D_e)} 2^{-n(I_{P_{XYV}} (X;V|Y) - 4\epsilon)} \notag \\
&  \leq 2^{-n(R(Q_{XY},D_e)-4\epsilon)},
\end{align}
where (a) follows from~\eqref{eavesbound}. It remains to show~\eqref{eqlemmaach3}. To that end, note that, given $y^n \in \mathcal{Y}^n$ and $m \in [N]$,
\begin{align*}
\Pr(Y_m^n = y^n | \mathcal{E}^c )  & \leq \frac{\Pr(Y_m^n = y^n)}{\Pr(\mathcal{E}^c)} \\
& \leq \frac{2^{-n(H_{Q_Y}(Y) - \epsilon/2)}}{1-e^{-2^{n\epsilon/7}}} \leq 2^{-n(H_{Q_Y}(Y) - \epsilon)} .
\end{align*}
Therefore,
\begin{align*} 
& \E [ \Pr(d_e(X^n,v^n) \leq D_e |M=m,\mathcal{C}^n) | \mathcal{E}^c ] \\
& =  \sum_{\mathcal{C}^n \in \mathcal{E}^c} \Pr(\mathcal{C}^n | \mathcal{E}^c) \Pr(d_e(X^n,v^n) \leq D_e |M=m,\mathcal{C}^n) \\
& =     \sum_{y^n \in \mathcal{Y}^n} \sum_{\mathcal{C}^n \in \mathcal{E}^c} \Pr(Y_m^n=y^n | \mathcal{E}^c) \Pr(\mathcal{C}^n| Y_m^n=y^n , \mathcal{E}^c)  \cdot \\
& \qquad \qquad \qquad \quad \Pr(d_e(X^n,v^n) \leq D_e |M=m,\mathcal{C}^n)  \\
& =   \sum_{y^n \in \mathcal{Y}^n} \sum_{\mathcal{C}^n \in \mathcal{E}^c} \Pr(Y_m^n=y^n | \mathcal{E}^c) \Pr(\mathcal{C}^n| Y_m^n=y^n , \mathcal{E}^c)  \cdot \\
& \qquad \qquad \qquad  \sum_{ \substack{ x^n: d_e(x^n,v^n) \leq D_e \\ (x^n,y^n) \in T_{Q_{XY}}}}  P_{X^n|M}^{\mathcal{C}} (x^n|m) \\
& \stackrel{\text{(a)}} \leq  \sum_{y^n \in \mathcal{Y}^n} \sum_{\mathcal{C}^n \in \mathcal{E}^c} \Pr(Y_m^n=y^n | \mathcal{E}^c) \Pr(\mathcal{C}^n| Y_m^n=y^n , \mathcal{E}^c) \cdot \\
& \qquad \qquad  \sum_{ \substack{ x^n: d_e(x^n,v^n) \leq D_e \\ (x^n,y^n) \in T_{Q_{XY}}}}  2^{-n(H_{Q_{XY}}(X|Y)-3\epsilon)} \\
& =   \sum_{y^n \in \mathcal{Y}^n} \Pr(Y_m^n=y^n | \mathcal{E}^c) \!\!\!\! \sum_{ \substack{ x^n: d_e(x^n,v^n) \leq D_e \\ (x^n,y^n) \in T_{Q_{XY}}}} \!\!\!\!\!\!  2^{-n(H_{Q_{XY}}(X|Y)-3\epsilon)} \\
& =  \sum_{x^n: d_e(x^n,v^n) \leq D_e} \sum_{y^n \in T_{Q_{Y|X}(x^n)}} \Pr(Y_m^n=y^n | \mathcal{E}^c) \cdot \\
& \qquad \qquad \qquad \qquad \qquad \qquad \quad ~ 2^{-n(H_{Q_{XY}}(X|Y)-3\epsilon)} \\
& \leq  \sum_{x^n: d_e(x^n,v^n) \leq D_e} \sum_{y^n \in T_{Q_{Y|X}(x^n)}} 2^{-n(H_{Q_Y}(Y) - \epsilon/2)} \cdot \\
& \qquad \qquad \qquad \qquad \qquad \qquad \quad ~~ 2^{-n(H_{Q_{XY}}(X|Y)-3\epsilon)} \\
& \leq  \sum_{x^n: d_e(x^n,v^n) \leq D_e} 2^{nH_{Q_{XY}} (Y|X)} 2^{-n(H_{Q_{XY}} (X,Y) - 7\epsilon/2)} \\
& =  \sum_{x^n: d_e(x^n,v^n) \leq D_e}  2^{-n(H_{Q_{X}} (X) - 7\epsilon/2)} \\
& \stackrel{\text{(b)}} \leq \max_{ \substack{ P_{XV} \in \mathcal{Q}_{\mathcal{X} \mathcal{V}}^n}} 2^{n(H_{P_{XV}} (X|V) + \epsilon/2)} 2^{-n(H_{Q_{X}} (X) - 7\epsilon/2)}  \\
& \leq 2^{-n(R_e(Q_X,D_e)-4\epsilon)},
\end{align*}
where (a) follows from~\eqref{eqboundpx|y}, and (b) can be shown analogously to~\eqref{eavesbound}. \hfill $\blacksquare$

{\bf Proof of 3):}  For notational convenience, let $E= \min\{ R_e(Q_X,D_e), r+R(Q_{XY},D_e)\}$. Note that,
\begin{align} 
 \Pr (\tilde{\mathcal{E}}) 
  & \leq \Pr \left( \bigcup_{k=1}^{2^{nr}} \mathcal{E}_k \right)  
   + \Pr \left(  \tilde{\mathcal{E}} \bigg| \left( \bigcup_{k=1}^{2^{nr}} \mathcal{E}_k \right)^c \right) \notag \\
&    \leq e^{-2^{n\epsilon/8}} 
   + \Pr \left( \tilde{\mathcal{E}} \bigg|  \bigcap_{k=1}^{2^{nr}} \mathcal{E}_k^c \right), \label{eqachfinalboundinit}
\end{align}
where the second inequality follows from the union bound and~\eqref{eqlemmaach1}. Now, 
fix $ \{\mathcal{C}_k^n\}_{k=1}^{2^{nr}} \in \cap_{k=1}^{2^{nr}} \mathcal{E}_k^c$, $m \in [N]$, and $v^n \in \mathcal{V}^n$, and suppose  $K=k_0$. Then,
\begin{align}
& \Pr(d_e(X^n,v^n) \leq D_e |M=m,K=k_0, \{\mathcal{C}_k^n\}_{k=1}^{2^{nr}}) \notag \\
& = \Pr(d_e(X^n,v^n) \leq D_e | M=m, \mathcal{C}_{k_0}^n) \notag \\
& \leq 2^{-n(R(Q_{XY},D_e) - 4\epsilon)}, \label{eqachfinalboundprob}
\end{align}
where the inequality follows from~\eqref{eqlemmaach2}. Furthermore, 
\begin{align}
 & \E \!\! \left[ \! \Pr(d_e(X^n,v^n) \! \leq \! D_e |M \! =m,K \! =k_0, \{\mathcal{C}_k^n\}_{k=1}^{2^{nr}}) \Bigg| \bigcap_{k=1}^{2^{nr}} \mathcal{E}_k^c \right] \notag \\
& = \E \left[ \Pr(d_e(X^n,v^n) \leq D_e | M=m, \mathcal{C}_{k_0}^n) \Bigg| \bigcap_{k=1}^{2^{nr}} \mathcal{E}_k^c \right] \notag \\
& =  \E \left[ \Pr(d_e(X^n,v^n) \leq D_e | M=m, \mathcal{C}_{k_0}^n)|  \mathcal{E}_{k_0}^c \right] \notag \\
& \leq 2^{-n(R_e(Q_X,D_e)-4 \epsilon)}, \label{eqachfinalboundexp}
\end{align} 
where the last inequality follows from~\eqref{eqlemmaach3}. Now, consider $\{\mathcal{C}_k^n\}_{k=1}^{2^{nr}} \in \left(\cup_{k=1}^{2^{nr}} \mathcal{E}_k \right)^c$.
\begin{align}
& \Pr(d_e(X^n,v^n) \leq D_e |M=m, \{\mathcal{C}_k^n\}_{k=1}^{2^{nr}}) \notag \\
&  = \sum_{j=1}^{2^{nr}} \Pr(K=j|M=m, \{\mathcal{C}_k^n\}_{k=1}^{2^{nr}}) \cdot \notag \\
& \qquad \quad \Pr(d_e(X^n,v^n) \leq D_e |M=m,K=j, \{\mathcal{C}_k^n\}_{k=1}^{2^{nr}}) \notag \\
& \leq \sum_{j=1}^{2^{nr}} 2^{-n(r-2\epsilon)} \Pr(d_e(X^n,v^n) \leq D_e |M=m,\mathcal{C}_j^n), \label{eqachfinalboundm0}
\end{align}
where the inequality follows from:
\begin{align*}
& \Pr(K=j|M=m, \{\mathcal{C}_k^n\}_{k=1}^{2^{nr}}) \\
& = \frac{ \Pr(K=j) \Pr(M=m|K=j,\{\mathcal{C}_k^n\}_{k=1}^{2^{nr}}) }{ \sum_{\ell=1}^{2^{nr}} \Pr(K=\ell) \Pr(M=m|K=\ell,\{\mathcal{C}_k^n\}_{k=1}^{2^{nr}}) } \\
& = \frac{  \Pr(M=m|K=j,\mathcal{C}_j^n) }{ \sum_{\ell=1}^{2^{nr}} \Pr(M=m|K=\ell,\mathcal{C}_\ell^n) } \\
& = \frac{ \sum\limits_{ \substack{x^n: \\ m \in C_j(x^n) }} \!\!\!\!\! \Pr(X^n \! =x^n) \Pr(M \! =m|X^n \! =x^n,K \! =j,\mathcal{C}_j^n) }{ \sum\limits_{\ell=1}^{2^{nr}} \!\!\! \sum\limits_{\substack{x^n: \\ m \in C_\ell (x^n) }} \!\!\!\!\! \Pr(X^n \! =x^n) \Pr(M \! =m|X^n\! =x^n,K \! =\ell,\mathcal{C}_\ell^n) }  \\
& \stackrel{\text{(a)}} \leq \frac{  \sum_{x^n: m \in C_j(x^n)} 1 }{ \sum_{\ell=1}^{2^{nr}} \sum_{x^n: m \in C_\ell (x^n) } 2^{-2n\epsilon} } \\
& \stackrel{\text{(b)}} = 2^{-n(r-2\epsilon)}
\end{align*} 
where (a) follows from the fact that $1 \leq N_{x^n} \leq 2^{2n\epsilon}$, and (b) follows from the fact that, for any $j$, $\sum_{x^n: m \in C_j(x^n)} 1 =  { | \{ x^n: \left(x^n,y_m^n( \mathcal{C}_j) \right) \in T_{Q_{XY}} \}|} = |T_{Q_{X|Y} ( y_m^n( \mathcal{C}_j))} | = |T_{Q_{X|Y} ( y^n)} |$ for any $y^n \in T_{Q_Y}$.

Given $\left(\cup_{k=1}^{2^{nr}} \mathcal{E}_k \right)^c$, the terms in the summands of~\eqref{eqachfinalboundm0} are independent and identically distributed random variables, with an upper bound given by~\eqref{eqachfinalboundprob}, and an expectation upper-bounded by~\eqref{eqachfinalboundexp}. It follows from Chernoff's bound~\cite[Corollary 2]{cuff2014henchman} that
\begin{align}
& \Pr \!\! \left( \!\! \Pr(d_e(X^n \! , \! v^n) \! \leq \!\! D_e |M \!\! = \!\!m, \! \{\mathcal{C}_k^n\}_{k=1}^{2^{nr}} \! ) \! > \!  2^{-n (E-8 \epsilon)} \Bigg| \! \bigcap_{k=1}^{2^{nr}} \mathcal{E}_k^c \! \right) \notag \\
& = \Pr \left( \sum_{j=1}^{2^{nr}} 2^{-n(r-2 \epsilon)} \Pr(d_e(X^n,v^n) \leq D_e |M=m,\mathcal{C}_j^n) > \right. \notag \\
& \qquad \qquad \qquad \left.  2^{-n (E-8 \epsilon)} \Bigg| \bigcap_{k=1}^{2^{nr}} \mathcal{E}_k^c \right) \notag \\
& = \Pr \left( \sum_{j=1}^{2^{nr}}  \Pr(d_e(X^n,v^n) \leq D_e |M=m,\mathcal{C}_j^n) > \right. \notag \\
& \qquad \qquad \qquad \left. 2^{-n (E-r-6 \epsilon)} \Bigg| \bigcap_{k=1}^{2^{nr}} \mathcal{E}_k^c \right) \notag \\
& \leq \left(  \frac{e2^{nr} 2^{-n(R_e(Q_X,D_e)-4 \epsilon)}  }{2^{-n ( E -r-6 \epsilon)}} \right)^{\frac{2^{-n ( E -r-6 \epsilon)}} {2^{-n(R(Q_{XY},D_e) - 4\epsilon)}} } \notag \\
& \leq \left( e 2^{-n(R_e(Q_X,D_e) - E-4\epsilon+6\epsilon)} \right)^{2^{-n(E-r-R(Q_{XY},D_e)-6 \epsilon+4\epsilon) } }  \notag \\
& \leq 2^{-\epsilon n 2^{2 \epsilon n}}, \label{eqachboundm}
\end{align}
where the last inequality follows from the fact that $R_e(Q_X,D_e)-E \geq 0$, and $E-r-R(Q_{XY},D_e) \leq 0$. By the union bound,
\begin{align}
 \Pr \left( \tilde{\mathcal{E}} \Bigg| \bigcap_{k=1}^{2^{nr}} \mathcal{E}_k^c \right)  & \leq N 2^{-\epsilon n 2^{2 \epsilon n}} \notag \\
 & \leq  2^{n(I_{Q_{XY}}(X;Y) + \epsilon - \epsilon 2^{2 \epsilon n}) } \notag \\
 & \leq  2^{n( \log |\mathcal{X}| + \epsilon - \epsilon 2^{2 \epsilon n}) } \leq 2^{-\frac{\epsilon}{2} n 2^{2 \epsilon n}}. \label{eqachboundforallm}
\end{align}
Combining~\eqref{eqachfinalboundinit} and~\eqref{eqachboundforallm} yields
\begin{align}
\Pr \left( \tilde{\mathcal{E}} \right)  \leq e^{-2^{n \epsilon/8}} + 2^{-\frac{\epsilon}{2} n 2^{2 \epsilon n}} \leq e^{-2^{n \epsilon/9}},
\end{align}
as desired. \hfill $\blacksquare$

\section{Proof of Proposition~\ref{proplimitR}} \label{AppendixproplimitR}

Consider the following proposition.
\begin{Proposition}
Given $\epsilon > 0$, $\beta >0$, and $R' > \max_{Q: D(Q||P) \leq \beta} R(Q,D)=:R_\beta$,
there exists $n_4( \epsilon, |\mathcal{X}|, |\mathcal{Y}|, R', d_e)$ such that for all $n \geq n_4$, $D \geq D_{\min}$, $D_e \geq D_{e,\min}$, and for each $Q_X \in \mathcal{Q}_{\mathcal{X}}^n(\beta,0)$ (cf.~\eqref{eqdefQXalpha}) ,
\begin{align*}
\Big| \max_{ \substack{ Q_{XY} \in \mathcal{Q}_\mathcal{XY}^n (Q_X,D): \\ I_{Q_{XY}} (X;Y) \leq R'}} R(Q_{XY},D_e) - R(Q_X,R',D,D_e) \Big| \leq \epsilon.
\end{align*}
\end{Proposition}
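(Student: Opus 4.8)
The statement is the rate-constrained counterpart of Proposition~\ref{propoutermaxconv}, so the plan is to reuse the rounding argument from the proofs of Propositions~\ref{propinnerminconv} and~\ref{propoutermaxconv}, adding one extra step to keep the rate constraint feasible after rounding. One inequality costs nothing and holds for every $n$: any joint type $Q_{XY}$ with $P_X=Q_X$, $\E_{Q_{XY}}[d(X,Y)]\le D$ and $I_{Q_{XY}}(X;Y)\le R'$ is a feasible point of the maximization~\eqref{eqdefRPRDDe} defining $R(Q_X,R',D,D_e)$, so the discrete maximum is at most $R(Q_X,R',D,D_e)$. (Here $R(Q_X,R',D,D_e)$ is well defined because $Q_X\in\mathcal{Q}_\mathcal{X}^n(\beta,0)$ forces $R(Q_X,D)\le R_\beta<R'$.) Everything interesting is in the reverse inequality.

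For the reverse direction I would fix a maximizer $P^\star_{Y|X}$ for $R(Q_X,R',D,D_e)$ and a channel $P^{\mathrm{rd}}_{Y|X}$ attaining the rate--distortion function $R(Q_X,D)$, so $I_{Q_XP^{\mathrm{rd}}_{Y|X}}(X;Y)=R(Q_X,D)\le R_\beta<R'$ and $\E_{Q_XP^{\mathrm{rd}}_{Y|X}}[d(X,Y)]\le D$. Consider the mixture $\tilde P^{(\lambda)}_{Y|X}=(1-\lambda)P^\star_{Y|X}+\lambda P^{\mathrm{rd}}_{Y|X}$. Convexity of $I(X;Y)$ in $P_{Y|X}$ for fixed $P_X$ gives $I_{Q_X\tilde P^{(\lambda)}_{Y|X}}(X;Y)\le R'-\lambda(R'-R_\beta)$; linearity gives $\E_{Q_X\tilde P^{(\lambda)}_{Y|X}}[d(X,Y)]\le D$; and $\|Q_X\tilde P^{(\lambda)}_{Y|X}-P^\star_{XY}\|\le 2\lambda$, uniformly in $Q_X$. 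Using the uniform continuity of $R(P_{XY},D_e)$ from (P1) of Proposition~\ref{continuity}, I would choose $\lambda>0$, depending only on $\epsilon$, the alphabet sizes and $d_e$, so that $R(Q_X\tilde P^{(\lambda)}_{Y|X},D_e)\ge R(P^\star_{XY},D_e)-\epsilon/2$ for \emph{every} $Q_X\in\mathcal{Q}_\mathcal{X}^n(\beta,0)$, and then set $\gamma=\lambda(R'-R_\beta)>0$, a uniform margin in the rate constraint.

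Next I would round $\tilde P^{(\lambda)}_{Y|X}$ to a conditional type exactly as in the proof of Proposition~\ref{propinnerminconv}: for each $x$ with $Q_X(x)>0$ put $y(x)=\argmin_y d(x,y)$, round each coordinate $\tilde P^{(\lambda)}_{Y|X}(y|x)$ with $y\ne y(x)$ down to a multiple of $1/(nQ_X(x))$, and assign the leftover mass to $y(x)$. This produces a conditional $\tilde Q_{Y|X}$ such that $Q_{XY}:=Q_X\tilde Q_{Y|X}\in\mathcal{Q}_{\mathcal{XY}}^n$ has marginal $Q_X$ by construction; rounding \emph{down} on the non-minimizing coordinates forces $\E_{Q_{XY}}[d(X,Y)]\le\E_{Q_X\tilde P^{(\lambda)}_{Y|X}}[d(X,Y)]\le D$; and $\|Q_{XY}-Q_X\tilde P^{(\lambda)}_{Y|X}\|\le|\mathcal{X}||\mathcal{Y}|/n$, the bound being uniform in $Q_X$ since the per-symbol error $|\mathcal{Y}|/(nQ_X(x))$ is weighted by $Q_X(x)$. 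Then, for all $n$ at least some $n_4$ depending only on $\epsilon,\gamma,|\mathcal{X}|,|\mathcal{Y}|,d_e$, continuity of mutual information on the compact set of joint laws gives $I_{Q_{XY}}(X;Y)\le R'-\gamma+\gamma/2<R'$, so $Q_{XY}$ is feasible for the discrete maximum, and uniform continuity of $R(P_{XY},D_e)$ gives $R(Q_{XY},D_e)\ge R(Q_X\tilde P^{(\lambda)}_{Y|X},D_e)-\epsilon/2\ge R(P^\star_{XY},D_e)-\epsilon=R(Q_X,R',D,D_e)-\epsilon$. Combined with the easy inequality this is the claim, and since $\lambda$, $\gamma$, and all continuity moduli are uniform over $Q_X\in\mathcal{Q}_\mathcal{X}^n(\beta,0)$, $D\ge D_{\min}$ and $D_e\ge D_{e,\min}$, a single $n_4$ works.

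The step I expect to be the real obstacle is precisely the rate constraint: naive rounding can push $I_{Q_{XY}}(X;Y)$ slightly above $R'$ and thus make the rounded type infeasible. The mixture step is the device that fixes this, and it is exactly why the hypothesis $R'>R_\beta$ is indispensable — the strict gap $R'-R_\beta$, scaled by the uniform weight $\lambda$, supplies the slack $\gamma$ that absorbs the $O(1/n)$ perturbation of mutual information from rounding. Proposition~\ref{propoutermaxconv}, having no rate constraint, needs no such argument; everything else here is a routine adaptation of its proof and of Proposition~\ref{propinnerminconv}.
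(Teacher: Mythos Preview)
Your argument is correct, and the overall architecture (easy direction from the definition, then create slack in the rate constraint, then round as in Proposition~\ref{propinnerminconv}) matches the paper's. The one genuine difference is \emph{how} the slack in the rate is manufactured. The paper does not mix with a rate--distortion channel; instead it first backs off from $R'$ to some $R'-\delta_1$ using the continuity of $R(Q_X,R,D,D_e)$ in $R$ from (P4) of Proposition~\ref{propcontinuityR} (uniformly over the compact set $\{(Q_X,R):D(Q_X\|P)\le\beta,\,R''\le R\le R'\}$), and then takes $P^\star_{Y|X}$ to be the maximizer of $R(Q_X,R'-\delta_1,D,D_e)$. This maximizer already satisfies $I(X;Y)\le R'-\delta_1$, so after rounding (which perturbs mutual information by at most $\delta_1$) the resulting type is automatically feasible for the constraint $I\le R'$. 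Your mixture-with-$P^{\mathrm{rd}}_{Y|X}$ step achieves the same thing more directly, exploiting convexity of $I(X;Y)$ in the channel to turn the gap $R'-R_\beta>0$ into a uniform margin $\gamma$; the advantage is that you only invoke (P1) (uniform continuity of $R(P_{XY},D_e)$) and avoid relying on the harder (P4). The paper's route is a little cleaner because the slack creation is hidden inside an already-proved continuity statement, but yours is more self-contained.

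One small point: you assert that $\gamma=\lambda(R'-R_\beta)$ is uniform over $D\ge D_{\min}$, but $R_\beta=\max_{Q:D(Q\|P)\le\beta}R(Q,D)$ itself depends on $D$, so strictly speaking $\gamma$ varies with $D$. The paper's proof has the analogous issue (its $\delta_1$ comes from uniform continuity over a set with $D$ held fixed), and in the application $D$ is fixed throughout, so this does not affect correctness; but the ``for all $D\ge D_{\min}$'' clause in the statement should be read with the standing hypothesis $R'>R_\beta$ in force for that particular $D$.
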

\begin{IEEEproof}
Note that Proposition~\ref{propoutermaxconv} is a special case in which $\beta=+\infty$ and $ R \geq \max_{Q} R(Q,D)$. As such, the proof follows along similar lines as Propositions~\ref{propoutermaxconv} and~\ref{propinnerminconv}, but must account for the rate constraint $R$.

It follows directly from the definition that
\begin{align*}
\max_{ \substack{ Q_{XY} \in \mathcal{Q}_\mathcal{XY}^n (Q_X,D): \\ I_{Q_{XY}} (X;Y) \leq R'}} R(Q_{XY},D_e) \leq R(Q_X,R',D,D_e).
\end{align*}
So, we only need to show the reverse direction. To that end, choose $R''$ such that $R_{\beta} < R'' < R'$. By Proposition~\ref{propcontinuityR}, $R(Q_X,R,D,D_e)$ is uniformly continuous in $(Q_X,R)$ over the set $\{(Q_X,R): D(Q_X||P) \leq \beta, R'' \leq R \leq R'\}$. Then let $\delta_1 > 0 $ be small enough such that, for all $Q_X \in \mathcal{Q}_{\mathcal{X}}^n(\beta,0)$, \begin{align} \label{propcontRinnereq0}
| R(Q_X,R'-\delta_1,D,D_e) - R(Q_X,R',D,D_e) | \leq \epsilon/2.
\end{align}
Let $\delta_2 > 0$ be small enough such that \begin{align} 
\| P_{XY} - P'_{XY} \| \leq \delta_2 
 \! \Rightarrow \! |R(P_{XY},D_e) - R(P'_{XY},D_e) | \leq \! \epsilon/2 \notag \\
 \text{ and } |I_{P_{XY}} (X;Y) - I_{P'_{XY}}(X;Y) | \leq \delta_1. \label{propcontRinnereq1}
\end{align}
Let $n \geq n_4 \geq |\mathcal{Y}| \sqrt{|\mathcal{X}|}/\delta_2$. Fix $Q_{X} \in \mathcal{Q}_{\mathcal{X}}^n (\beta,0)$ and let $P^\star_{Y|X}$ be the conditional distribution achieving the maximum in $R(Q_X,R'-\delta_1,D,D_e)$. We construct $P'_{Y|X}$ by rounding the values of $P^\star_{Y|X}$, as done in Proposition~\ref{propinnerminconv}. Similarly to  Proposition~\ref{propinnerminconv}, this guarantees  that $Q_X P'_{Y|X} \in \mathcal{Q}_\mathcal{XY}^n$,
\begin{align} \label{propcontRinnereq2}
\| Q_{X} P'_{Y|X} - Q_X P^\star_{Y|X} \| \! \leq \delta_2, \text{ and } \E_{Q_XP'_{Y|X}} [d(X,Y)] \leq \! D.
\end{align}
Moreover, it follows from~\eqref{propcontRinnereq1} and~\eqref{propcontRinnereq2} that $I_{Q_XP'_{XY}} (X;Y) \leq I_{Q_X P^\star_{XY}}(X;Y) + \delta_1 \leq R'$. 
Therefore, 
\begin{align*}
\max_{ \substack{ Q_{XY} \in \mathcal{Q}_\mathcal{XY}^n (Q_X,D): \\ I_{Q_{XY}} (X;Y) \leq R'}} R(Q_{XY},D_e) \vspace{-3mm}
& \geq R(Q_X P'_{Y|X},D_e) \\
& \geq R(Q_X P^\star_{Y|X},D_e) - \epsilon/2 \\
& \geq R(Q_X,R',D,D_e) - \epsilon, 
\end{align*}
where the second inequality follows from~\eqref{propcontRinnereq1} and~\eqref{propcontRinnereq2}, and the third inequality from~\eqref{propcontRinnereq0}.
\end{IEEEproof}
The proposition yields
\begin{align*}
& \min_{Q_X \in \mathcal{Q}_\mathcal{X}^n (\beta,0)} D(Q_X ||P) + R(Q,R',D,D_e) - \epsilon \\ & \leq 
\min_{Q_X \in \mathcal{Q}_\mathcal{X}^n (\beta,0)} D(Q_X ||P) + R(Q^\star_{R'}(Q_X),D_e) \\ & \leq 
 \min_{Q_X \in \mathcal{Q}_\mathcal{X}^n (\beta,0)} D(Q_X ||P) + R(Q,R',D,D_e).
\end{align*}
 By taking the limit as $n$ goes to infinity, and noting that $\epsilon$ is arbitrary, the proof is concluded.

\section{Proofs of Propositions~\ref{propcontinuity} and~\ref{propcontinuityD}} \label{propcontinuityproof}

\subsection{Proof of Proposition~\ref{propcontinuity}} \label{propcontinuityproof1}
We restate the proposition. 

\emph{Proposition~\ref{propcontinuity}}:
Let $N_1$ and $N_2$ be in $\mathbb{N}$, and let $\mathcal{S}$ and $\mathcal{U}$ be compact subsets of $\mathbb{R}^{N_1}$ and $\mathbb{R}^{N_2}$, respectively. 
Let $\nu$ be a non-negative continuous function defined on $\mathcal{S} \times \mathcal{U}$, and let $\vartheta$ be a real-valued continuous function defined on $\mathcal{S} \times \mathcal{U}$. Suppose they satisfy the following condition:
\begin{enumerate}
\item[(PA)] If  $(s,u_1) \in \mathcal{S} \times \mathcal{U}$ satisfies $\nu(s,u_1)= \min_{u' \in \mathcal{U}} \nu(s,u')$, then there exists $u_2$ such that $\vartheta(s,u_2) = \vartheta(s,u_1)$, and for all $s' \in \mathcal{S}$, $\nu(s',u_2)=\min_{u' \in \mathcal{U}} \nu(s',u').$
\end{enumerate}
Let $t_0 = \max_{s \in \mathcal{S}} \min_{u \in \mathcal{U}} \nu(s,u)$, and let $\varphi$ be a function on $\mathcal{S} \times [t_0, +\infty)$ defined as follows:
\begin{equation*}
\varphi(s,t) = \min_{u: \nu(s,u) \leq t} \vartheta (s,u).
\end{equation*}
If for fixed $s \in \mathcal{S}$, $\varphi(s,t)$ is continuous in $t$, then $\varphi(s,t)$ is continuous in the pair $(s,t)$.
\bigskip

First, note that, for all $s \in \mathcal{S}$ and all $t \geq t_0$, $\nu^{-1}\left(s,[0,t] \right) \triangleq \{ u: \nu(s,u) \leq t\}$ is closed by continuity of $\nu$, so it is compact since it is also bounded. Moreover it is non-empty since $t \geq t_0$. Since $\vartheta$ is continuous and the minimization is over a compact set, $\varphi$ is well defined.

Now fix $(s,t) \in \mathcal{S} \times [t_0, +\infty)$, and consider any sequence $(s_k,t_k) \rightarrow (s,t)$. Let $t_s = \min_{u \in \mathcal{U}} \nu(s,u)$ and consider any $\epsilon > 0$. 

If $t > t_s$: 

By continuity of $\varphi(s,t)$ as a function of $t$ for fixed $s$, there exists $\delta >0$ such that $ |t - t'| \leq \delta \Rightarrow |\varphi(s,t) - \varphi(s,t')| \leq \epsilon$. Let $t'=t- \min\{ \delta/2, (t-t_s)/2 \} $, and let $u' \in \argmin_{u: \nu(s,u) \leq t'} \vartheta(s,u)$. Then, $\nu(s,u') < t$ and $\varphi(s,t') = \vartheta (s,u') \leq \varphi(s,t)+\epsilon$. \\

If $t=t_s$:

Let $u'$ be a minimizer for $\varphi(s,t_s)$ satisfying $\nu(s',u')=t_{s'}$ for all $s' \in \mathcal{S}$. Such choice is possible by assumption (PA). Note that $\vartheta(s,u')=\varphi(s,t_s).$ \\

We claim that the choice of $u'$ is feasible for the minimization in $\varphi(s_k,t_k)$, i.e., $\nu(s_k,u')\leq t_k$ for sufficiently large $k$. Indeed, if $t>t_s$, $\nu(s_k,u') \rightarrow\nu(s,u') = t' < t$, then for sufficiently large $k$, $\nu(s_k,u') \leq t_k$.  If $t=t_s$, then $\nu(s_k,u')=t_{s_k} \leq t_0 \leq t_k$. 

Moreover, by continuity of $\vartheta$, $\vartheta(s_k,u') \rightarrow \vartheta (s,u')$. Then, for sufficiently large $k$, $\vartheta(s_k,u') \leq \varphi(s,t) + \epsilon/2.$ So, we get
\begin{equation*}
\limsup_{k \rightarrow \infty}  \varphi(s_k,t_k) \leq \limsup_{k \rightarrow \infty}  \vartheta(s_k,u') \leq  \varphi(s,t).
\end{equation*}
On the other hand, let $u_k$ be a minimizer for $\varphi(s_k,t_k)$. Consider a sequence of integers $\{k_j\}$ such that 
\begin{equation*}
\varphi(s_{k_j}, t_{k_j}) \rightarrow \liminf_{k \rightarrow \infty} \varphi(s_k,t_k).
\end{equation*}
Let $\{u_{k_j} \}$ be the corresponding subsequence of minimizers. Since $\mathcal{U}$ is a bounded set, then $\{u_{k_j} \}$ has a convergent subsequence $\{u_{k_{j_\ell}} \}.$ Let $u'$ be its limit. By continuity of $\nu$, we have $\nu(s,u') = \lim_{\ell \rightarrow \infty} \nu(s_{k_{j_\ell}},u_{k_{j_\ell}}) \leq \lim_{\ell \rightarrow \infty} t_{k_{j_\ell}} =t.$ Therefore,
\begin{align*}
\varphi(s,t) \leq \vartheta(s,u') & = \lim_{\ell \rightarrow \infty} \vartheta (s_{k_{j_\ell}},u_{k_{j_\ell}}) \\
& = \lim_{\ell \rightarrow \infty} \varphi (s_{k_{j_\ell}},t_{k_{j_\ell}}) \\
& = \liminf_{k \rightarrow \infty} \varphi(s_k,t_k). 
\end{align*} \hfill $\blacksquare$

\subsection{Proof of Proposition~\ref{propcontinuityD}}
\label{propcontinuityproof2}
We restate the proposition.

\emph{Proposition~\ref{propcontinuityD}}:
Let $N$ be in $\mathbb{N}$, and let $\mathcal{T}$ be a non-empty compact subset of $\mathbb{R}^{N}$. Let $L$ be a real-valued continuous function defined on $\mathcal{T}$. Let $T_1 \supseteq T_2 \supseteq \cdots$ be a decreasing sequence of non-empty compact subsets of $\mathcal{T}$. Let $T= \bigcap_{i \geq 1} T_i$. Then,
\begin{equation*}
\lim_{k \rightarrow \infty} \max_{t \in T_k} L(t) = \max_{t \in T} L(t).       
\end{equation*} 
Moreover, let $S_1 \subseteq S_2 \subseteq \cdots$ be an increasing sequence of non-empty compact subsets of $\mathcal{T}$. Let $S = \overline{\bigcup_{i \geq 1} S_i}$ (where the bar denotes closure of the set). Then,
\begin{equation*}
\lim_{k \rightarrow \infty} \max_{t \in S_k} L(t) = \max_{t \in S} L(t).        
\end{equation*} 
Consequently, if $\mathcal{T}$ is also convex, and $L_c$ is a real-valued convex and continuous function defined on $\mathcal{T}$ with $s_0 = \min_{t \in \mathcal{T}} L_c(t)$, then
\begin{align*}
\hat{L}(s) := \max_{t: L_c(t) \leq s} L(t)
\end{align*}
is continuous in $s \in [s_0, +\infty)$. \\

\indent First, note that $T$ is non-empty and compact since a countable intersection of non-empty  decreasing compact sets is non-empty and compact. Let 
\begin{equation*}
t_k = \argmax_{t \in T_k} L(t) \quad \text{ and } \quad t^\star = \argmax_{t \in T} L(t). 
\end{equation*}
We need to show that $L(t_k) \rightarrow L(t^\star)$. Let $\mathcal{B}_\delta (t) = \{ t' \in \mathcal{T}: \|t'-t \| < \delta\}$, and
consider the following claim. 

\emph{Claim 1:} For all $\delta >0$, there exists $k_0$ such that for all $k \geq k_0$, $T_k \subseteq \mathcal{B}_\delta(T)$, where 
\begin{equation*}
\mathcal{B}_\delta(T) = \bigcup_{t \in T} \mathcal{B}_\delta (t).
\end{equation*}
We show first how the claim yields our result. Let $\epsilon > 0$ be given. By the uniform continuity of $L$ (continuity on a compact set), there exists $\delta > 0$ such that $\| t - t' \| \leq \delta \Rightarrow | L(t)-L(t') | \leq \epsilon. $ Let $k$ be large enough as guaranteed by the claim. Then, for all $t \in T_k$, there exists $t' \in T$ such that $\| t - t'\| \leq \delta$, and subsequently $| L(t) -L(t') | \leq \epsilon$. In particular, there exists $t' \in T$ such that $| L(t_k) - L(t') | \leq \epsilon$. Then, we get $L(t_k) \leq L(t') + \epsilon \leq L(t^\star) + \epsilon$. Since $L(t_k) \geq L(t^\star)$, we get $ | L(t_k) - L(t^\star)| \leq \epsilon$. Therefore, $L(t_k) \rightarrow L(t^\star)$. It remains to prove the claim to establish the first part of the proposition.

\emph{Proof of Claim 1:}  Fix $\delta >0$.  $\mathcal{B}_\delta(T)$ is open in $\mathcal{T}$ by construction. Therefore, $T_k \backslash \mathcal{B}_\delta(T)$ is closed in $\mathcal{T}$. Since $\mathcal{T}$ is closed in $\mathbb{R}^{N}$, then $T_k\backslash \mathcal{B}_\delta(T)$ is also closed in $\mathbb{R}^{N}$. Moreover, it is bounded, so it is compact. Since
\begin{equation*}
\bigcap_{i \geq 1} T_k\backslash \mathcal{B}_\delta(T) = \left ( \bigcap_{i \geq 1} T_k \right) \Big\backslash \mathcal{B}_\delta(T) = T \backslash \mathcal{B}_\delta(T) = \varnothing,
\end{equation*} 
and $T_k\backslash \mathcal{B}_\delta(T)$ is a decreasing sequence of compact sets, there exists $k_0$ such that for all $k \geq k_0$, $T_k \backslash \mathcal{B}_\delta(T)$ is empty. \\

Similarly, to prove the second part of the proposition, let 
\begin{equation*}
s_k = \argmax_{t \in S_k} L(t) \quad \text{ and } \quad s^\star = \argmax_{t \in S} L(t). 
\end{equation*}
We need to show that $L(s_k) \rightarrow L(s^\star)$. To this end, consider the following claim. 

\emph{Claim 2:} For all $\delta >0$, there exists $k_1$ such that for all $k \geq k_1$, $  S \subseteq \mathcal{B}_\delta(S_k) $.\\
We show first how the claim yields our result. Let $\epsilon > 0$ be given. By the uniform continuity of $L$, there exists $\delta > 0$ such that $\| t - t' \| \leq \delta \Rightarrow | L(t)-L(t') | \leq \epsilon. $ Let $k$ be large enough as guaranteed by the claim. Then, for all $t \in S$, there exists $t' \in S_k$ such that $\| t - t'\| \leq \delta$, and subsequently $| L(t) -L(t') | \leq \epsilon$. In particular, there exists $t' \in S_k$ such that $| L(s^\star) - L(t') | \leq \epsilon$. Then, we get $L(s_k) \geq L(t') \geq L(s^\star) - \epsilon$. Since $L(s_k) \leq L(s^\star)$, we get $ | L(s_k) - L(s^\star)| \leq \epsilon$. Therefore, $L(s_k) \rightarrow L(s^\star)$. It remains to prove the claim.

\emph{Proof of Claim 2:}  Fix $\delta >0$.  $\mathcal{B}_\delta(S_k)$ is open in $\mathcal{T}$ by construction. Therefore, $S \backslash \mathcal{B}_\delta(S_k)$ is closed in $\mathcal{T}$. Then $S \backslash \mathcal{B}_\delta(S_k)$ is closed in $\mathbb{R}^{N}$. Moreover, it is bounded, so it is compact. Since
\begin{equation*}
\bigcap_{i \geq 1} S \backslash \mathcal{B}_\delta(S_k) \! = \! S \Big \backslash \! \left( \! \bigcup_{i \geq 1} \mathcal{B}_\delta(S_k) \! \right) \!\! = \! \overline{\bigcup_{i \geq 1} S_i} \Big \backslash \mathcal{B}_\delta \! \left( \! \bigcup_{i \geq 1} S_i  \! \right) \! = \! \varnothing,
\end{equation*} 
and $S \backslash \mathcal{B}_\delta(S_k)$ is a decreasing sequence of compact sets, there exists $k_1$ such that for all $k \geq k_1$, $S \backslash \mathcal{B}_\delta(S_k)$ is empty.  \\

Finally, consider $\hat{L}(s)$. If $L_c$ is a constant function, then the statement is trivial. If not, consider $s \geq s_0$, and let $s_k$ be a decreasing sequence converging to $s$. Then, 
\begin{align*}
\lim_{k \rightarrow \infty} \hat{L}(s_k) = \lim_{k \rightarrow \infty} \max_{t: L_c(t) \leq s_k} L(t) =  \max_{t: L_c(t) \leq s} L(t) = \hat{L}(s),
\end{align*}
where the second equality follows from the first part of the proposition. Therefore, $\hat{L}(s)$ is right-continuous. Now, consider $s > s_0$, and let $s_k$ be an increasing sequence converging to $s$. Note that, 
\begin{align*}
\bigcup_{k \geq 1} \left\lbrace t \in \mathcal{T}: L_c(t) \leq s_k \right\rbrace = \left\lbrace t \in \mathcal{T}: L_c(t) < s \right\rbrace.
\end{align*} 
Denote the above set by $S^-$ and let $S = \{  t \in \mathcal{T}: L_c(t) \leq s \}.$ The second part of the proposition implies that
\begin{align*}
\lim_{k \rightarrow \infty} \hat{L}(s_k) = \lim_{k \rightarrow \infty} \max_{t: L_c(t) \leq s_k} L(t) = \max_{t \in \overline{S^-}} L(t).
\end{align*}
So it suffices to show that $\overline{S^-}=S$. Clearly, $\overline{S^-} \subseteq S$ since $S$ is closed and $S^- \subseteq S$. It remains to show that any point $\tilde{t}$ satisfying $L_c(\tilde{t}) = s$ is a boundary point of $S^-$. To that end, note that $L_c(\tilde{t})$ is not a local minimum since $L_c(\tilde{t}) = s > s_0$ and $L_c$ is convex by assumption. Therefore, any neighborhood of $\tilde{t}$ intersects $S^-$. As such $\overline{S^-}=S$, and  $\hat{L}(s)$ is left-continuous, as desired. \hfill $\blacksquare$

\end{appendices}

\section*{Acknowledgment}
We are grateful to the anonymous reviewers for the helpful
  comments, especially the reviewer who pointed out the connection
  between Schieler and Cuff's analysis and our setup. This research was supported by the US National Science Foundation under grants 10-65352 and 12-18578.

\bibliographystyle{IEEEtran}
\bibliography{IEEEabrv,database}

\begin{IEEEbiographynophoto}
{Ibrahim Issa} received his B.E. degree in Computer and Communications
Engineering from the American University of Beirut, Lebanon, in 2012.
He is currently pursuing his Ph.D in Electrical and Computer Engineering
at Cornell, Ithaca, NY. His research interests include information-theoretic
security and quantum information theory.
He was the recipient of the Jacobs fellowship for the academic
year 2012-2013.
\end{IEEEbiographynophoto}
\begin{IEEEbiographynophoto}
{Aaron B. Wagner} (S'13-M'05-SM'00) received the B.S. degree in Electrical
Engineering from the University of Michigan, Ann Arbor, in 1999 and the
M.S. and Ph.D. degrees in Electrical Engineering and Computer Sciences from
the University of California, Berkeley, in 2002 and 2005, respectively. During
the 2005-2006 academic year, he was a Postdoctoral Research Associate in
the Coordinated Science Laboratory at the University of Illinois at Urbana-Champaign
and a Visiting Assistant Professor in the School of Electrical and
Computer Engineering at Cornell University. Since 2006, he has been with
the School of Electrical and Computer Engineering at Cornell, where he is
currently an Associate Professor.

He has received the NSF CAREER award, the David J. Sakrison Memorial
Prize from the U.C. Berkeley EECS Dept., the Bernard Friedman Memorial
Prize in Applied Mathematics from the U.C. Berkeley Dept. of Mathematics,
and teaching awards at the department, college, and university level at Cornell.
\end{IEEEbiographynophoto}

\end{document}